\documentclass[reqno,a4paper]{amsart}
\footskip = 30pt
\usepackage{fancyhdr}
\usepackage{epigraph}
\setlength\epigraphwidth{.8\textwidth}
\setlength\epigraphrule{0pt}
\usepackage{cite}
\usepackage{pifont}

\newcommand{\dbar}{\,\mathchar'26\mkern-12mu d}
\DeclareRobustCommand{\SkipTocEntry}[5]{}
\usepackage{etoolbox}
\makeatletter
\let\ams@starttoc\@starttoc
\makeatother
\usepackage{parskip} 
\makeatletter
\let\@starttoc\ams@starttoc
\patchcmd{\@starttoc}{\makeatletter}{\makeatletter\parskip\z@}{}{}
\makeatother
\makeatletter
\def\@tocline#1#2#3#4#5#6#7{\relax
  \ifnum #1>\c@tocdepth 
  \else
    \par \addpenalty\@secpenalty\addvspace{#2}%
    \begingroup \hyphenpenalty\@M
    \@ifempty{#4}{%
      \@tempdima\csname r@tocindent\number#1\endcsname\relax
    }{%
      \@tempdima#4\relax
    }%
    \parindent\z@ \leftskip#3\relax \advance\leftskip\@tempdima\relax
    \rightskip\@pnumwidth plus4em \parfillskip-\@pnumwidth
    #5\leavevmode\hskip-\@tempdima
      \ifcase #1
      \or\or \hskip 1.7em \or \hskip 2em \else \hskip 3em \fi
      #6\nobreak\relax
    \dotfill\hbox to\@pnumwidth{\@tocpagenum{#7}}\par
    \nobreak
    \endgroup
  \fi}
\makeatother
\DeclareRobustCommand{\SkipTocEntry}[5]{}
\makeatletter
\def\@seccntformat#1{%
  \protect\textup{%
    \protect\@secnumfont
    \expandafter\protect\csname format#1\endcsname %
    \csname the#1\endcsname
    \protect\@secnumpunct
  }%
}

\makeatother
\makeatletter
\renewcommand\subsubsection{\@secnumfont}{\bfseries}%
\renewcommand\subsubsection{\@startsection{subsubsection}{3}
  \z@{.5\linespacing\@plus.7\linespacing}{-.5em}%
  {\normalfont\bfseries}}
\makeatother
\usepackage{times}
\usepackage{mathrsfs} 
\usepackage{latexsym,amsopn,amssymb}
\usepackage{amsfonts,amsbsy,amscd,stmaryrd}
\usepackage{mathtools} 
\usepackage{amsthm}
\usepackage{cases}
\usepackage{paralist}
\usepackage{enumitem}
\usepackage{leftidx}
\usepackage{tensor} 
\usepackage{bm}
\usepackage{mathrsfs}
\usepackage{esvect}
\usepackage[hypertexnames=true]{hyperref}
\hypersetup{colorlinks=true,linkcolor=RoyalBlue,%
urlcolor=RoyalBlue,citecolor=RoyalBlue}
\usepackage[usenames,dvipsnames,svgnames,table]{xcolor}

\newcommand\ssubsection{\subsubsection{\!\!\!}}
\usepackage{cleveref}
\usepackage[normalem]{ulem}
\def\ss#1{\scriptscriptstyle #1}

\def\ssw{{\scriptscriptstyle W}}
\def\oplussymp{\oplus^{\sigma}}
\def\tr#1{\hbox{\rm Tr}\,(#1)}
\def\Com{\mathrm{Com}}
\def\com{\mathrm{com}}
\def\norm#1{\left\Vert #1\right\Vert }

\newcommand{\C}{\mathbb{C}}

\newcommand{\mbg}{\mathbb{G}}
\newcommand{\G}{\mathbb{G}}
\newcommand{\mbh}{\mathbb{H}}

\newcommand{\N}{\mathbb{N}}
\newcommand{\mbp}{\mathbb{P}}

\newcommand{\R}{\mathbb{R}}

\newcommand{\Z}{\mathbb{Z}}
\newcommand{\mbs}{\mathbb{S}}
\newcommand{\mbfb}{\mathbf{B}}
\newcommand{\mbfk}{\mathbf{K}}

\newcommand{\ca}{\mathcal{A}}
\newcommand{\cb}{\mathcal{B}}
\newcommand{\cc}{\mathcal{C}}

\newcommand{\ce}{\mathcal{E}}
\newcommand{\cf}{\mathcal{F}}
\newcommand{\cg}{\mathcal{G}}
\newcommand{\ch}{\mathcal{H}}
\newcommand{\ci}{\mathcal{I}}
\newcommand{\cj}{\mathcal{J}}
\newcommand{\ck}{\mathcal{K}}

\newcommand{\co}{\mathcal{O}}
\newcommand{\cp}{\mathcal{P}}
\newcommand{\cR}{\mathcal{R}}
\newcommand{\cs}{\mathcal{S}}
\newcommand{\ct}{\mathcal{T}}

\def\rond{\mathscr}

\newcommand{\ra}{\rond{A}}
\newcommand{\rb}{\rond{B}}
\newcommand{\rc}{\rond{C}}
\newcommand{\rd}{\rond{D}}
\newcommand{\re}{\rond{E}}
\newcommand{\rf}{\rond{F}}
\newcommand{\rwf}{\prescript{\ss{W}\!\!\!}{}\rf}
\newcommand{\rwk}{\prescript{\ss{W}\!\!\!}{}\rk}
\newcommand\xk[1]{\prescript{{#1}\!\!\!}{}\rk}
\newcommand\xf[1]{\prescript{{#1}\!\!\!}{}\rf}

\newcommand{\ri}{\rond{I}}

\newcommand{\rk}{\rond{K}}
\newcommand{\rl}{\rond{L}}
\newcommand{\rM}{\rond{M}}

\newcommand{\rw}{\rond{W}}

\newcommand{\ie}{i.e.\ }
\newcommand{\eg}{e.g.\ }
\newcommand{\conf}{cf.\ }
\newcommand{\win}{\,\wtilde{\in}\,}

\newcommand{\half}{\textstyle{\frac{1}{2}}}
\usepackage{tikz}

\def\dom{\mathrm{dom}}

\def\rmb{\mathrm{b}}
\def\rmc{\mathrm{c}}

\def\dd{\mathrm{d}}

\def\d{\mathrm{d}}

\def\rme{\mathrm{e}}
\def\e{\mathrm{e}}

\def\rmi{\mathrm{i}}

\def\rms{\mathrm{s}}

\def\veps{\varepsilon}
\def\vphi{\varphi}

\def\bm#1{\boldsymbol{#1}}

\def\bra#1{\langle{#1}|}
\def\ket#1{|{#1}\rangle}
\def\braket#1#2{\langle{#1}|{#2}\rangle}
\def\bracet#1#2{\langle{#1},{#2}\rangle}

\def\jap#1{\langle {#1} \rangle}
\def\rarrow{\rightarrow}

\def\what{\widehat}
\def\what#1{\widehat{ #1\,}}
\def\wtilde{\widetilde}

\DeclareMathOperator*{\slim}{s-lim}
\DeclareMathOperator*{\wlim}{w-lim}

\def\nin{\notin}
\def\supp{\mbox{\rm supp\! }}

\def\cbu{C_{\mathrm b}^{\mathrm u}}

\def\spec{\mathrm{Sp}}
\def\spe{\mathrm{Sp_{ess}}}

\def\nin{\notin}

\def\ooplus{\textstyle{\bigoplus}}

\def\ccup{\textstyle{\bigcup}}
\def\ccap{\textstyle{\bigcap}}

\def\ssxi{{\ss{\Xi}}}
\def\ssw{{\scriptscriptstyle{W}}}
\def\ssx{{\scriptscriptstyle{X}}}
\def\ssy{{\scriptscriptstyle{Y}}}

\def\sse{{\ss{E}}}
\def\ssf{{\ss{F}}}

\def\ssh{{\ss{H}}}

\def\tsum#1#2{{\textstyle\sum_{#1}^{#2}}}

\newcommand{\hyphen}{\textnormal{-}}

\usepackage{perpage}
\MakePerPage{footnote}
%

\renewcommand\thesubsection{\thesection.\arabic{subsection}}
\newtheorem{theorem}{Theorem}[section]
\newtheorem{lemma}[theorem]{Lemma}
\newtheorem{proposition}[theorem]{Proposition}
\newtheorem{corollary}[theorem]{Corollary}

\newtheorem{definition}[theorem]{Definition}
\newtheorem{remark}[theorem]{Remark}

\newtheorem{example}[theorem]{Example}

\newtheorem*{theorem*}{Theorem}
\newtheorem*{proposition*}{Proposition}

\begin{document}

\vspace*{-10mm}

\title[Field Algebra]
{On the structure of the field $\bm{C^*}$-algebra \\
  of a symplectic space and spectral analysis\\
  of the operators affiliated to it\\[2mm]
  {\tiny\today}
}

\author[V. Georgescu]{Vladimir Georgescu} \address{V. Georgescu, CY
  Cergy Paris Universit\'e, Laboratoire AGM, UMR 8088 CNRS, F-95000
  Cergy, France} \email{vladimir.georgescu@math.cnrs.fr}

\author[A. Iftimovici]{Andrei Iftimovici}
\address{A. Iftimovici, CY Cergy Paris Universit\'e, Laboratoire AGM,
  UMR 8088 CNRS, F-95000 Cergy, France}
\email{andrei.iftimovici@cyu.fr}
     
     \thanks{{\em AMS Subject classification (2010):} 46L60
         (Primary), 47A10, 47L90, 81Q10, 81S05, 81S30 (Secondary).}
     \thanks{{\em Key words and phrases:} canonical
       commutation relations, quantum field theory, $C^*$-algebra,
       spectral theory, \phantom{Key}essential spectrum, HVZ
       theorem.} 

\begin{abstract} 

  We show that the $C^*$-algebra generated by the field operators
  associated to a symplectic space $\Xi$ is graded by the semilattice
  of all finite dimensional subspaces of $\Xi$.  If $\Xi$ is finite
  dimensional we give a simple intrinsic description of the components
  of the grading, we show that the self-adjoint operators affiliated
  to the algebra have a many channel structure similar to that of
  N-body Hamiltonians, in particular their essential spectrum is
  described by a kind of HVZ theorem, and we point out a large class
  of operators affiliated to the algebra.
  
\end{abstract}

\maketitle

{\tableofcontents}

\newpage

\section{Introduction}

The field operator algebra of a symplectic space $\Xi$ was introduced
by D.\ Kastler as ``the object of main interest for field theory'' in
\cite[\S6]{Kas}.  Kastler constructs a $C^*$-algebra $\rM$ involving
the twisted convolution product and the field $C^*$-algebra $\rl$ is
mentioned as a subalgebra of interest only at the end of his paper (he
devotes just over a page to this issue).

Independently of Kastler's work, D.\ Buchholz and H.\ Grundling
introduced the \emph{resolvent CCR algebra} in \cite{BuGr07,BuGr08} as
an alternative of the Weyl CCR algebra with more convenient properties
\cite{BuGr15}.  Loosely speaking, if we work in a representation of
$\Xi$ and denote $\phi(\xi)$ the field operator at the point
$\xi\in\Xi$, then the Weyl algebra is the $C^*$-algebra generated by
the exponentials $\e^{\rmi\phi(\xi)}$ while the resolvent algebra is
the $C^*$-algebra generated by the resolvents
$R_\xi=(\rmi-\phi(\xi))^{-1}$, where $\xi$ runs over $\Xi$.  On a more
abstract level, the resolvent CCR algebra is the universal
$C^*$-algebra generated by a family of operators
$\{R_\xi\}_{\xi\in\Xi}$ satisfying certain relations suggested by the
preceding definition of these operators in a representation, \conf
\cite[Defs.\ 3.1 and 3.4]{BuGr08}. In fact the resolvent and field
algebra are canonically isomorphic as follows from \cite[Th.\
23]{Kas}.  The resolvent algebra has been studied and used in the
treatment of quantum systems with a finite or infinite number of
degrees of freedom in a series of papers by D.\ Buchholz
\cite{Bu14,Bu17,Bu18,Bu20} and D.\ Bahns and D.\ Buchholz \cite{BB20}.

If $\Xi$ is finite dimensional a certain class of $C^*$-subalgebras of
$\rM$ has been pointed out in \cite{BG2}: for each \emph{finite}
stable under sums set $\cs$ of subspaces of $\Xi$ a $C^*$-subalgebra
$\rl(\cs)\subset\rM$ is defined which has a structure similar to the
$C^*$-algebra generated by the Hamiltonians of an $N$-body system.
The self-adjoint operators affiliated to such subalgebras generalize
the usual $N$-body Hamiltonians and have a similar many channel
structure and spectral theory.  In \cite{GI2} the theory is extended
to infinite sets $\cs$, in particular a $C^*$-algebra $\rl$, called
\emph{(graded) symplectic $C^*$-algebra of $\Xi$}, is associated to
the set of all subspaces of $\Xi$.  Only after reading the paper
\cite{BuGr08} by D.~Buchholz and H.~Grundling did we realise that
$\rl$ is just Kastler's field algebra, or resolvent CCR algebra.  The
extension to infinite dimensional $\Xi$ of our construction of the
symplectic $C^*$-algebra is easy.  Thus ``field algebra'', ``resolvent
CCR algebra'', and ``(graded) symplectic algebra'' should be
considered as synonyms; we mainly use the terminology ``field
algebra'' for historical reasons.

We make a preliminary comment on our notations.  Five $C^*$-algebras
(and their subalgebras) appear in our arguments: $\rM,\rl$, $\rk,\rf$,
$\re$. But in fact $\rM$ is canonically isomorphic to $\rk$ and $\rl$
is canonically isomorphic to $\rf$. More precisely, $\rM$ and $\rl$
are the ``abstract'' Kastler and field algebra (defined in
\eqref{eq:kastgen} and \eqref{eq:deff}) and involve the twisted
convolution product associated to symplectic space $\Xi$.  The
$C^*$-algebras $\rk,\rf$ are images of $\rM,\rl$ constructed via a
(continuous) representation $W$ of $\Xi$, see Definition
\ref{df:kas-field} (these algebras depend on $\Xi$ and $W$ but we do
not indicate it in notations unless really needed). The algebra $\re$,
called \emph{extended field $C^*$-algebra}, is defined only in
representations of $\Xi$, it contains $\rf$, and is needed in the
spectral theory of particles with nonzero spin, \eg N-body Dirac and
Pauli Hamiltonians (see \S\ref{ss:end} for a short presentation and
Sec.\ \ref{s:efa} for a detailed study).

Our results on the spectral analysis of the operators affiliated to
$\rf$ require $\rf\cap K(\ch)\neq0$, where $\ch$ is the Hilbert space
on which is represented $\rf$ and $K(\ch)$ is the algebra of compact
operators on $\ch$.  Indeed, a description of the quotient
$C^*$-algebra $\rf/K(\ch)$ allows one to compute the essential
spectrum of the observables affiliated to $\rf$ and to prove the
Mourre estimate with respect to convenient conjugate operators. But
$\rf\cap K(\ch)=0$ if $\dim\Xi=\infty$, hence the case of finite
dimensional symplectic spaces is important for us. 

A summary of our more significative results concerning the field
algebra $\rf$ may be found in the subsections \ref{ss:i1}--\ref{ss:i3}
of this introduction, while in subsections \ref{ss:i0} are
\ref{ss:gc0} we recall or introduce the main notions and notations. We
mention that we are forced to use a more general notion of observable
instead of self-adjoint operator because in general images through
morphisms and strong limits of self-adjoint operators affiliated to a
$C^*$-algebra are not self-adjoint operators in the usual sense (\eg
they could be not densely defined).

\subsection{}\label{ss:i0}
\addtocontents{toc}{\SkipTocEntry}

We first recall some functional analysis terminology and notations.
If $\ch$ is a Hilbert space then $B(\ch)$ is the algebra of bounded
operators on $\ch$, $K(\ch)$ the ideal of compact operators, $U(\ch)$
the group of unitary operators, and $1$ or $1_\ch$ the identity
operator.  $\spec(T)$ and $\spe(T)$ are the spectrum and the essential
spectrum of an operator $T$.

If $\{\ca_i\}_{i\in I}$ is a family of linear subspaces of a Banach
space $\ca$ then
\begin{equation}\label{eq:csum}
  \tsum{i\in I}{\rmc}\ca_i \doteq
  \text{ norm closure of the linear sum } \tsum{i\in I}{} \ca_i.
\end{equation}
If $\ca_1,\dots,\ca_n$ are subsets of a Banach algebra
$\ca$ then
\begin{align}
  & \ca_1\ca_2\dots\ca_n \doteq \text{ linear span of the products }
    A_1A_2\dots A_n \text{ with } A_i\in\ca_i
  \label{eq:linspan}\\
  & \ca_1\cdot\ca_2\cdot\ldots\cdot\ca_n \doteq \text{ closure of }
  \ca_1\ca_2\dots\ca_n . \label{eq:clinspan}
\end{align}
By ideal of a $C^*$-algebra we mean a closed and self-adjoint ideal.
By morphism between $*$-algebras we mean $*$-morphism.  We write
$\simeq$ for ``isomorphic'' and $\cong$ for ``canonically
isomorphic''.  If $\rc$ is a $C^*$-algebra and $\cp:\rc\to\rc$ a
morphism such that $\cp^2=\cp$ we say that $\cp$ is a \emph{projection
  morphism} or \emph{morphic projection}. Giving a projection morphism
is equivalent to giving a direct sum decomposition $\rc=\ra+\ri$ with
$\ra$ a $C^*$-subalgebra and $\ri$ an ideal; then $\cp$ is the linear
projection $\rc\to\ra$ determined by the direct sum. \label{p:pmor}

We say that a self-adjoint operator $A$ with spectrum $\sigma(A)$ on a
Hilbert space $\ch$ is \emph{affiliated}%
\footnote{\ This is not the notion introduced by S.~L.~Woronowicz.} to
(or with) a $C^*$-subalgebra $\rc\subset B(\ch)$ if the next
equivalent conditions are satisfied:
\begin{equation}\label{eq:afoch}
  \theta(A)\in\rc\ \forall \theta\in C_0(\R) \Leftrightarrow
    (A-z)^{-1}\in\rc \text{ for some } z\nin\sigma(A) .
\end{equation}
An \emph{observable} affiliated to an arbitrary $C^*$-algebra $\rc$ is
a morphism $A:C_0(\R)\to\rc$. This notion is discussed in detail in
\cite[\S8.1]{ABG}.  A self-adjoint operator is identified with the
observable defined by its $C_0\hyphen$functional calculus and we
usually denote $\theta(A)$ instead of $A(\theta)$ the value of $A$ at
$\theta$ for any observable $A$.  We will use the notation
\begin{equation}\label{eq:tilde}
  A\win\rc \Leftrightarrow  A\in\rc \text{ or }
  A \text{ is an observable affiliated to } \rc.
\end{equation}
The zero morphism is an observable affiliated to $\rc$ denoted
$\infty$; this is natural because $\theta(\infty)=0$ for any
$\theta\in C_0(\R)$. Observables can be described in terms of
$\rc$-valued \emph{self-adjoint resolvents}%
\footnote{\ The usual terminology is ``pseudo-resolvent'' but this
  seems excessive in our setting.}  the resolvent of $A$ being the map
$R_A\colon\C\setminus\R\to\rc$ with $R_A(z)=r_z(A)$ where
$r_z(\lambda)=(\lambda-z)^{-1}$.  If $\rc$ is realized on a Hilbert
space $\ch$ then observables may be identified with self-adjoint
operators acting in closed subspaces of $\ch$ \cite[\S8.1.2]{ABG}.
For example, the Hamiltonians of N-body systems with hard core
interactions are observables affiliated to the $C^*$-algebra generated
by the usual N-body Hamiltonians \cite{BGS} but are not self-adjoint
operators on $\ch$. And $\infty$ is the only operator with domain
$\{0\}$.

If $\rc,\rd$ are $C^*$-algebras, $\cp:\rc\to\rd$ a morphism, and $A$
an observable affiliated to $\rc$, then $\cp(A)\doteq\cp\circ A$ is an
observable affiliated to $\rd$.  If $\rc,\rd$ are realised on Hilbert
spaces $\ch,\ck$ and $A$ is the observable associated to a
self-adjoint operator on $\ch$ then quite often the observable
$\cp(A)$ is not associated to a (densely defined) self-adjoint
operator on $\ck$. \emph{Since objects like $\cp(A)$ appear naturally
  in our arguments, we are forced to work with the more general notion
  of observable instead of self-adjoint operator.}

An observable $A$ affiliated to $\rc$ is \emph{strictly affiliated} to
$\rc$ if the linear span of the operators $\theta(A)T$ with
$\theta\in C_0(\R)$ and $T\in\rc$ is dense in $\rc$.  Then $\cp(A)$
is a (densely defined) self-adjoint operator in any non-degenerate
representations $\cp$ of $\rc$ \cite[Pr.\ A.7]{DG}.

In the case of observables the strong limits are interpreted as
follows: if $\{A_n\}_{n\in\N}$ is a sequence of observables affiliated
to $B(\ch)$ and $\slim_nA_n(\theta)\doteq A(\theta)$ exists for any
$\theta\in C_0(\R)$ then $A$ is an observable affiliated to $B(\ch)$
written $A=\slim_nA_n$. Each $A_n$ could be a densely defined
self-adjoint operator without $A$ being so (we may have $A=\infty$).

The $C^*$-algebra $C^*(\ca)$ \emph{generated by a set $\ca$ of
  observables} affiliated to a $C^*$-algebra $\rc$ is the smallest
$C^*$-subalgebra of $\rc$ to which are affiliated all the $A\in\ca$.
If $\ca=\{A_i\mid i\in I\}$ we write $C^*(\ca)=C^*(A_i\mid i\in I)$.
Clearly $C^*(A)\equiv C^*(\{A\})=\{\theta(A)\mid\theta\in C_0(\R)\}$.
In terms of the resolvents $R_A$, for any $z_0$ not real $C^*(\ca)$ is
the $C^*$-subalgebra generated by the operators $R_A(z_0)$ with
$A\in\ca$.  If $A$ is an observable affiliated to $\rc$ then $C^*(A)$
is the closed linear span of the operators $R_A(\rmi\lambda)$ with
$\lambda\in\R\setminus\{0\}$.

If $A_1,\dots, A_n$ are observables with
$C^*(A_i)\cdot C^*(A_j)=C^*(A_j)\cdot C^*(A_i)$ and
$\ca_I\doteq C^*(A_{i_1})\cdot C^*(A_{i_2})\cdot\ldots\cdot
C^*(A_{i_m})$ for $I=\{i_1,\dots,i_m\}$ not empty subset of
$\{1,\dots, n\}$, then $\ca_I$ is a $C^*$-algebra and
$C^*(A_1,\dots,A_n)$ is the closure of $\sum_{I}\ca_I$.

$C^*$-algebras graded by semilattices have been introduced in
\cite{BG1} as a tool in the spectral theory of N-body Hamiltonians and
further studied in \cite{ABG,BG2,DG0,DG1,DG}. An exhaustive study of
these algebras may be found in Athina Mageira's thesis
\cite{M1,M2}. In this paper we consider only semilattices consisting
of finite dimensional subspaces of real vector spaces.

If $\Xi$ is an arbitrary real vector space the Grassmannian
$\mbg(\Xi)$ is the set of its finite dimensional subspaces, the
projective space $\mbp(\Xi)$ the set of its one dimensional subspaces,
and $\mbh(\Xi)$ the set of its hyperplanes.  $\mbg(\Xi)$ is equipped
with the order relation given by inclusion: $E\leq F$ means
$E\subset F$. Then $\mbg(\Xi)$ is a \emph{lattice}: for any pair of
its elements $E,F$ their upper bound $E\vee F=E+F$ and lower bound
$E\wedge F=E\cap F$ exist in $\mbg(\Xi)$.  We say that
$\cs\subset\mbg(\Xi)$ is a \emph{subsemilattice} if
$E,F\in\cs\Rightarrow E+F\in\cs$.  If $X\subset \Xi$ is a vector
subspace then $\mbg(X)$ and
$\mbg_{\ss{\supset X}}(\Xi)\doteq\{E\in\mbg(\Xi)\mid E\supset X\}$ are
subsemilattices and also a sublattices of $\mbg(\Xi)$.

Let $\cs\subset\mbg(\Xi)$ a subsemilattice. A $C^*$-algebra $\rc$ is
\emph{$\cs$-graded}, or \emph{graded by $\cs$}, if a family
$\{\rc(E)\}_{E\in\cs}$ of $C^*$-subalgebras of $\rc$ is given such
that the linear sum $\mathring\rc\doteq\sum_{E\in\cs} \rc(E)$ is
direct and dense in $\rc$ and
\begin{equation}\label{eq:grading1}
  \rc(E)\rc(F)\subset\rc(E+F) \quad\forall\ E,F\in\cs.
\end{equation}
The algebras $\rc(E)$ are called \emph{components} of $\rc$. We write
$\rc=\sum_{E\in\cs}^\rmc\rc(E)$ meaning that $\rc$ is the closure of
the linear sum $\mathring\rc$.  The sum $\sum_{E\in\cs} \rc(E)$ being
direct, for any $E\in\cs$ each $T\in \mathring\rc$ has a unique
component $T(E)\in \rc(E)$ such that $T=\sum_E T(E)$.
 If $\ct$ is a subsemilattice with $\cs\subset\ct$ then
an $\cs$-graded $C^*$-algebra is naturally $\ct$-graded. 

We sometimes use similarly defined inf-graded $C^*$-algebras. A subset
$\cs\subset\mbg(\Xi)$ is called \emph{stable under intersections}, or
\emph{$\cap$-stable}, if $E,F\in\cs\Rightarrow E\cap F\in\cs$. Then a
$C^*$-algebra $\rc$ is \emph{inf-graded by $\cs$} if a family
$\{\rc(E)\}_{E\in\cs}$ of $C^*$-subalgebras of $\rc$ is given such
that the sum $\mathring\rc\doteq\sum_{E\in\cs}\rc(E)$ is direct and
dense in $\rc$ and
$\rc(E)\rc(F)\subset\rc(E\cap F)\ \forall\ E,F\in\cs$.

\subsection{Grassmann C*-algebra}\label{ss:gc0}
\addtocontents{toc}{\SkipTocEntry}

The \emph{Grassmann $C^*$-algebra} $\cg_\ssx$ of a real finite
dimensional vector space $X$ is a simple example of inf-graded
$C^*$-algebra which may be seen as an abelian version of the field
algebra. Indeed, besides the inf-grading by $\mbg(\Xi)$, it is also
the $C^*$-algebra generated by the linear forms on $X$ which in this
context play the r\^ole of field operators. On the other hand,
Grassmann $C^*$-algebras also play an important r\^ole in the
description of the structure of the field algebra and in a generalized
N-body problem.

We denote $C_\rmb(X)$ the $C^*$-algebra of bounded continuous complex
functions on $X$ and $\cbu(X)$, $C_0(X)$ and $C_\rmc(X)$ the
subalgebras of uniformly continuous functions, functions which tend to
zero at infinity, and functions with compact support.

If $Y\in\mbg(X)$ then we embed $C_0(X/Y)\subset\cbu(X)$ via
$\varphi\mapsto\varphi\circ\pi_Y$ ($\pi_Y$ is the natural surjection
of $X$ onto the the quotient space $X/Y$). In other terms, $C_0(X/Y)$
is the set of continuous functions $\varphi:X\to\C$ with
$\varphi(x+y)=\varphi(x)$ for all $x\in X,y\in Y$ and such that
$\varphi(x)\to0$ when the distance from $x$ to $Y$ (for some norm on
$X$) tends to infinity.

The family of $C^*$-subalgebras $C_0(X/Y)$ with $Y\in\mbg(X)$ is
linearly independent and
\begin{equation}\label{eq:c0xy}
C_0(X/Y)\cdot C_0(X/Z)=C_0(X/(Y\cap Z)) \quad \forall Y,Z\in\mbg(X).
\end{equation}
Then $\mathring\cg_\ssx\doteq\tsum{Y\in\mbg(X)}{}C_0(X/Y)$ is a unital
$*$-subalgebra of $\cbu(X)$ and
\begin{equation}\label{eq:cgx1}
  \cg_\ssx\doteq \text{ closure of } \mathring\cg_\ssx
  =\tsum{Y\in\mbg(X)}{\rmc}C_0(X/Y)\subset\cbu(X)
\end{equation}
is a $C^*$-subalgebra of $\cbu(X)$, called \emph{(abelian) Grassmann
  $C^*$-algebra of $X$}. Clearly $\cg_\ssx$ is a $C^*$-algebra
inf-graded by the lattice $\mbg(X)$ with $C_0(X/Y)$ as components, in
particular the notation $\cg_\ssx(Y)=C_0(X/Y)$, which we sometimes
use, is natural.

There is an alternative description of $\cg_\ssx$, similar to
\eqref{eq:fieldBG1}, as the $C^*$-algebra generated by a simple and
natural set of observables affiliated to $\cbu(X)$. Note first that
any uniformly continuous function $u:X\to\R$ may be identified with an
observable affiliated to the $C^*$-algebra $\cbu(X)$, namely the
morphism $C_0(\R)\ni\theta\mapsto\theta\circ u\in\cbu(X)$; in the
natural representation of $\cbu(X)$ on $L^2(X)$ this observable is the
operator of multiplication by $u$. In particular, any linear form
$\varphi:X\to\R$ is an observable affiliated to $\cbu(X)$, hence the
dual vector space $X^*$ is a space of such observables.

If $Y\in\mbg(X)$ let
$Y^\perp=\{\vphi\in X^*\mid \varphi(x)=0\}\in\mbg(X^*)$.

\begin{proposition}\label{pr:phialgA}
  We have $\cg_\ssx = C^*(\varphi\mid\varphi\in X^*)$.  If
  $Y\in\mbg(X)$ and $\varphi_1,\dots,\varphi_n\in X^*$ then
  $Y=\ker\varphi_1\cap\dots\cap\ker\varphi_n$ if and only if
  $\varphi_1,\dots,\varphi_n$ generate $Y^\perp$ and then
  \begin{equation}\label{eq:genset}
    C_0(X/Y)=C^*(\varphi_1)\cdot C^*(\varphi_1) \cdot\ldots\cdot
    C^*(\varphi_n). 
  \end{equation}
\end{proposition}

The proof is an exercise in linear algebra.  The $C^*$-algebra of the
additive group $X^*$ is $C_0(X)$ and $C_0(X)\subset\cg_\ssx$, so the
group $C^*$-algebra is a component of $\cg_\ssx$. With a terminology \`a
la Buchholz-Grundling $\cg_\ssx$ would be the \emph{resolvent group
  $C^*$-algebra} of $X^*$ \cite{BuGr13}.

\begin{remark}\label{re:gralgint}{\rm

    The subalgebra $\cg_\ssx\subset\cbu(X)$ is stable under
    translations by elements of $X$ hence the crossed product
    $\cg_\ssx\rtimes X$ is a well defined $C^*$-algebra
    (see the comment after Remark \ref{re:czerot}). We will see in
    \S\ref{ss:nbb} that $\cg_\ssx\rtimes X$ is naturally embedded in
    the field algebra associated to the symplectic space $T^*X=X\oplus
    X^*$ and is the $C^*$-algebra generated by the maximal
    class of N-body type Hamiltonians on $X$
    (Theorem \ref{th:nbham} and Corollary \ref{co::nbham}).   

  }\end{remark}

\begin{example}\label{ex:2dim}{\rm

If $X=\R^2$ let $C(\theta)$ be the set of functions
$(\alpha,\beta)\mapsto u(\alpha\cos\theta+\beta\sin\theta)$ with $u\in
C_0(\R)$. Then $\{C(\theta)\}_{0\leq\theta<\pi}$ is a linearly
independent family of $C^*$-subalgebras of $\cbu(X)$ with
$C(\theta_1)\cdot C(\theta_2)=C_0(X)$ if $\theta_1\neq\theta_2$ and
$\cg_\ssx=\C+\sum_{0\leq\theta<\pi}^\rmc C(\theta)+C_0(X)$.   
    
}\end{example}

\subsection{}\label{ss:i1} 
\addtocontents{toc}{\SkipTocEntry}

A \emph{symplectic space} is a real vector space $\Xi$ equipped with a
symplectic form, \ie a bilinear anti-symmetric map $\sigma:\Xi^2\to\R$
which is non-degenerate in the following sense:
$\sigma(\xi,\eta)=0\ \forall \eta\in \Xi \Rightarrow \xi=0$.  If
$E\subset\Xi$ we set
$E^\sigma\doteq\{\xi\in\Xi \mid \sigma(\xi,\eta)=0 \ \forall \eta\in
E\}$; clearly $E\subset F \Rightarrow F^\sigma\subset E^\sigma$.  We
set $(E^\sigma)^\sigma=E^{\sigma\sigma}$.  If $E,F$ are finite
dimensional subspaces then $E^{\sigma\sigma}=E$ and
$E\subset F^\sigma\Leftrightarrow F\subset E^\sigma$.  We have
$\cap_iE_i^\sigma=(\sum_iE_i)^\sigma$ for any family of subspaces
$E_i$.  A subspace $E$ is \emph{isotropic} if $E\subset E^\sigma$\!,
\emph{coisotropic} if $E^\sigma\subset E$, \emph{Lagrangian} if
$E=E^\sigma$, \emph{symplectic} if $\sigma$ is non-degenerate on it.
$E\in\mbg(\Xi)$ is symplectic if and only if $E\cap E^\sigma=0$. If
$\mbg_\rms(\Xi)$ is the set of finite dimensional symplectic subspaces
\begin{equation}\label{eq:fil}
  E\in\mbg(\Xi) \Rightarrow \exists F\in\mbg_\rms(\Xi) \text{ such
    that }   E\subset F.
\end{equation}
We refer to \cite{DeG2} for the CCR theory associated to $\Xi$. A
\emph{representation} of $\Xi$ on a Hilbert space $\ch$ is a map
$W:\Xi\to U(\ch)$ such that
\begin{equation}\label{eq:ccr1I}
  W(\xi+\eta)= \e^{\frac{\rmi}{2}\sigma(\xi,\eta)} W(\xi)W(\eta) \ 
  \forall\xi,\eta\in\Xi \quad\text{and}\quad
  \wlim_{t\to 0}W(t\xi)=1 \ (t\in\R).
\end{equation}
The second condition is equivalent to the strong continuity of the
restriction of $W$ to finite dimensional subspaces hence
$\forall\xi\in\Xi$ we may define the field operator
$\phi(\xi)\equiv\phi_{\ss{W}}(\xi)$ as the self-adjoint operator such
that $W(t\xi)=\e^{\rmi t\phi(\xi)}$ $\forall t\in\R$. We set
$R_\xi(z)=(\phi(\xi)-z)^{-1}$.

If $E\in\mbg(\Xi)$ then $M(E)$ is the space of bounded Borel measures
on $E$ and $L^1(E)$ the subset of absolutely continuous measures (with
$L^1(0)=\C\delta_0$ if $E=0$).

\begin{definition}\label{df:kas-field}
  The \emph{Kastler $C^*$-algebra $\rk $ of $\Xi$ in the
    representation $W$} is the norm closure in $B(\ch)$ of the set of
  operators $W(\mu) = \int_E W(\xi) \mu(\d\xi)$ with
  $E\in\mbg(\Xi)$, $\mu\in M(E)$.
   The \emph{field $C^*$-algebra $\rf$ of $\Xi$ in the representation
    $W$} is the norm closure of the set of operators $W(\mu)$ with
  $E\in\mbg(\Xi)$ and $\mu\in L^1(E)$. 
\end{definition}

Note that the definition of $\rf$ given above is neither that of
Buchholz-Grundling nor that of Kastler; see Proposition
\ref{pr:fieldBG} and Remark \ref{re:kdefofrl} for the equivalence of
the definitions.

If $\xi\in\Xi$ then $W(\delta_\xi)=W(\xi)$ where
$\delta_\xi$ is the Dirac measure at $\xi$, so $W(\xi)\in\rk$; clearly
\begin{equation}\label{eq:1stcondW}
 \lim_{\xi\in E,\xi\to0}\|W(\xi)^*TW(\xi)-T\|=0 \quad \text{for all }
 T\in\rk \text{ and } E\in\mbg(\Xi).
\end{equation}
$\rf\subset\rk$ is a $C^*$-algebra containing the identity operator.
Since $\Xi$ is fixed we generally do not include it in notations but
if this is necessary we write $\rk(\Xi)$ and $\rf^\ssxi$.  The
algebras $\rk,\rf$ associated to different $W$ are canonically
isomorphic but if the specification of $W$ or $\Xi$ is necessary we
use the notations $\rwk$, $\rwf$ or $\rwk(\Xi)$, $\rwf^\ssxi$.  For
example, if $\Xi$ is finite dimensional then by the Stone-Von Neumann
theorem one has $\ch=\ch_0\otimes \ch_1$ and $W=W_0\otimes1_{\ch_1}$
for some irreducible representation $W_0$ of $\Xi$ on $\ch_0$ hence we
have $\rwk={\xk{\ssw_{\!0}}}{}\otimes1_\ck$ and
$\rwf={\xf{\ssw_{\!0}}}{}\otimes 1_{\ch_1}$.

Here is a description \`a la Buchholz and Grundling \cite{BuGr08} of
$\rf$ (see also \cite[Th.\ 23]{Kas}):
 
\begin{proposition}\label{pr:fieldBG}
  For any $z\in\C\setminus\R$ we have
\begin{align}
  \rf &=C^*(\phi(\xi)\mid \xi\in \Xi) =
  \text{$C^*$-algebra generated by the $\phi(\xi)$ with } \xi\in\Xi
  \label{eq:fieldBG1} \\
    &=\text{$C^*$-algebra generated by the operators } R_\xi(z)
    \text{ with } \xi\in\Xi.    \label{eq:fieldBG2}
\end{align}
\end{proposition}

\begin{definition}\label{df:rdefE}
  If $E\in\mbg(\Xi)$ and $\xi_1,\dots,\xi_n$ is a generating set for
  $E\in\mbg(\Xi)$ then
\begin{align}
  \rf(E) &\doteq \text{ norm closure of the set of operators }
  W(\mu) \text{ with } \mu\in L^1(E)  \label{eq:rfphi1I}\\
  &=C^*(\phi(\xi_1))\cdot C^*(\phi(\xi_2)) \cdot\ldots
  \cdot C^*(\phi(\xi_n))  \label{eq:rfphi2I} \\
  & = \text{ norm closed linear span of products }
    R_{\xi_1}(\rmi\lambda_1)\dots R_{\xi_n}(\rmi\lambda_n) 
    \nonumber\\   & \phantom{aaa}
     \text{ with } \lambda_1,\dots,\lambda_n\in\R\setminus\{0\}.
    \label{eq:rfphi3I}
\end{align}
$\rf(E)$ is a $C^*$-subalgebra of $\rf$ and $\rf(0)=\C$.
\end{definition}

Clearly
$ C^*(\phi(\xi))\cdot C^*(\phi(\eta))=C^*(\phi(\eta))\cdot
C^*(\phi(\xi))$ $\forall\xi,\eta\in\Xi$ so \eqref{eq:rfphi2I} is a
$C^*$-algebra.

We use the terminology of \S\ref{ss:i0} concerning graded
$C^*$-algebras, but the presentation here is rather
self-contained. See \S\ref{ss:fcaw} for the proofs of Theorems
\ref{th:Fgrad} and \ref{th:idec}.

\begin{theorem}\label{th:Fgrad}
  The set of $C^*$-subalgebras $\rf(E)$ of $\rf$ has the following
  properties:
  \begin{align}
  & E,F\in\mbg(\Xi) \Rightarrow \rf(E)\cdot\rf(F)=\rf(E+F),
  \label{eq:product}\\
  & \mathring\rf\doteq\tsum{E\in\mbg(\Xi)}{}\rf(E)
    \text{ is a linear direct sum and is dense in } \rf,
    \label{eq:ldir}\\
  & \text{if }\cs\subset\mbg(\Xi) \text{ is finite then }
    \tsum{E\in\cs}{}\rf(E) \text{ is norm closed}. \label{eq:finclo1}
  \end{align}
  In particular, the $C^*$-algebra $\rf$ is graded by the semilattice
  $\mbg(\Xi)$ with components $\rf(E)$.
\end{theorem}

\begin{remark}\label{re:emph}{\rm Note that $\rf(E)\cap\rf(F)=0$ if
    $E\subsetneq F$.  }\end{remark}

By \eqref{eq:product} and \eqref{eq:ldir} $\mathring\rf$ is a dense
unital $*$-subalgebra of $\rf$ and we may write 
\begin{equation}\label{eq:defff}
  \rf=\tsum{E\in\mbg(\Xi)}{\rmc}\rf(E).
\end{equation}
Since the sum in \eqref{eq:ldir} is direct, each $T\in \mathring\rf$
has, for each $E\in\mbg(\Xi)$, a unique component $T(E)\in \rf(E)$
such that $T=\sum_E T(E)$.

\begin{remark}\label{re:lestructI}
  {\rm We will see in Theorem \ref{th:structure} that the algebras
    $\rf(E)$ have a rather simple structure: they are isomorphic to
    tensor products $\ra\otimes\rb$ with $\ra$ an abelian
    $C^*$-algebra and $\rb$ isomorphic to the algebra of compact
    operators on a separable Hilbert space.
}
\end{remark}

For any $\cs\subset\mbg(\Xi)$ we set
\begin{equation}\label{eq:fcs}
\mathring\rf(\cs) \doteq \tsum{E\in\cs}{} \rf(E) \quad\text{and}\quad
  \rf(\cs) \doteq \tsum{E\in\cs}{\rmc} \rf(E).  
\end{equation}
By \eqref{eq:finclo1}, if $\cs$ is finite then
$\rf(\cs)=\mathring\rf(\cs)$.  If $\cs$ is a subsemilattice of
$\mbg(\Xi)$, meaning $E,F\in\cs\Rightarrow E+F\in\cs$, then $\rf(\cs)$
is naturally an $\cs$-graded $C^*$-algebra (Appendix \ref{s:graslat}).

In particular, any subspace $E\subset\Xi$ determines three
$C^*$-subalgebras of $\rf$:
\begin{align} \label{eq:deffE} \rf_\sse \doteq\tsum{F\subset E}{\rmc}
  \rf(F), \quad \rf'_\sse \doteq\tsum{F\not\subset E}{\rmc}\rf(F),
  \quad \rf_{\ss{\supset E}}\doteq \tsum{F\supset E}{\rmc} \rf(F).
\end{align}
Clearly $\rf_\sse$ is a unital $C^*$-subalgebra and $\rf'_\sse$ and
$\rf_{\ss{\supset E}}$ are ideals of $\rf$.  From \eqref{eq:rfphi2I}
we get a description of $\rf_\sse$ in terms of field operators:
\begin{equation}\label{eq:fephi}
   \rf_\sse=C^*(\phi(\xi)\mid \xi\in E).
\end{equation}

\begin{theorem}\label{th:idec}
  The $C^*$-algebra $\rf_\sse$ and the ideal $\rf'_\sse$ satisfy
  \begin{equation}\label{eq:PJI1}
    \rf = \rf_\sse + \rf'_\sse
    \quad\text{ and }\quad
    \rf_\sse \cap \rf'_\sse = 0.
  \end{equation}
  The projection $\cp_\sse \colon \rf\to\rf_\sse$ determined by this
  direct sum decomposition is a morphism and it is the unique
  continuous linear map $\cp_\sse\colon\rf\to\rf$ such that
  \begin{equation}\label{eq:mprojI1}
    T=\tsum{F}{}T(F)\in\mathring\rf \Rightarrow
    \cp_\sse T=\tsum{F\subset E}{}T(F). 
  \end{equation}
  For any subspaces $E,F$ we have 
  \begin{equation}\label{eq:ecapfI1}
    \rf_{\ss{E\cap F}}=\rf_\sse \cap\rf_\ssf
    \quad\text{and}\quad
    \cp_{\ss{E\cap F}}=\cp_\sse\cp_\ssf=\cp_\ssf\cp_\sse.
  \end{equation}
\end{theorem}

With the terminology introduced in \S\ref{ss:i0}, \emph{$\cp_\sse$ is
  a projection morphism of $\rf$ onto $\rf_\sse$}.

\subsection{}\label{ss:end}
\addtocontents{toc}{\SkipTocEntry}

For $\Xi$ finite dimensional and $W$ irreducible $\rf(E)$ has a simple
explicit description:

\begin{theorem}\label{th:mainthfd}
  If $\Xi$ is finite dimensional and $W$ is irreducible then $\rf(E)$
  is the set of $T\in B(\ch)$ such that:
  \begin{equation}\label{eq:maineq}
    \lim_{\xi\to0}\|[W(\xi),T]\| = 0,\ \ 
    \lim_{\xi\in E,\, \xi\to0}\|(W(\xi)-1)T\| = 0,\ \ 
    [W(\xi),T] = 0\, \forall \xi \in E^\sigma.
  \end{equation}
\end{theorem}

This is one of our main results (proved in \S\ref{ss:idc}).  The first
condition in \eqref{eq:maineq} can be written in the following form,
that makes sens even if $\dim\Xi=\infty$:
\begin{equation}\label{eq:maineq1}
\xi\mapsto W(\xi)TW(\xi)^* \text{ is norm continuous on finite
  dimensional subspaces of } \Xi.
\end{equation}
If $\Xi$ and $W$ are arbitrary then the conditions \eqref{eq:maineq},
the first one being written as in \eqref{eq:maineq1}, define an
interesting $C^*$-algebra $\re(E)$ of operators on $\ch$ which
contains $\rf(E)$, namely the $E$-component of the \emph{extended
  field algebra} $\re$ which is graded by $\mbg(\Xi)$.  The field
$C^*$-algebra $\rf$ contains only functions of the fields
$\phi(\xi)$. If the representation $W$ is not irreducible then many
other physically interesting observables are not affiliated to $\rf$,
\eg the Hamiltonians involving spin interactions. The $C^*$-algebra
$\re$ contains $\rf$ and fixes this flaw. We summarize below some
facts concerning $\re$ which will be studied in Section \ref{s:efa}.

\begin{theorem}\label{th:rcgradI}
  $\re(E)$ is a $C^*$-subalgebra of $B(\ch)$ such that 
  $\rf(E)\subset\re(E)$.  The family of $C^*$-algebras
  $\{\re(E)\}_{E\in\mbg(\Xi)}$ is linearly independent and 
  $\re(E)\cdot\re(F)\subset\re(E+F)$.
\end{theorem}

\begin{definition}\label{df:extrcI}
  The \emph{extended field $C^*$-algebra} associated to the
  representation $W$ is
  \begin{equation}\label{eq:rcgradI}
    \re \doteq \tsum{E\in\mbg(\Xi)}{\rmc} \re(E)
  \end{equation}
  and is a $\mbg(\Xi)$-graded $C^*$-algebra of bounded operators on
  $\ch$.
\end{definition}

We mention three more facts concerning the relation between $\rf$ and
$\re$:
\begin{compactenum}
\item $\re(0)=\Com(\Xi)$ where $\Com(\Xi)=\Com^\ssw(\Xi)$ is the
  commutant of the representation $W$, \ie the set of $T\in B(\ch)$
  such that $[T,W(\xi)]=0\,\forall\xi$.
\item
If $W$ is of finite multiplicity then $\re(E)=\rf(E)\cdot\Com(\Xi)$
and $\re=\rf\cdot\Com(\Xi)$.
\item
Assume $\Xi$ finite dimensional. Then
$\re(\Xi)=\rf(\Xi)\cdot\Com(\Xi)$ and $W$ is of finite multiplicity if
and only if $\re(\Xi)=K(\ch)$.
\end{compactenum}

\subsection{}\label{ss:comments}
\addtocontents{toc}{\SkipTocEntry}

The $\mbg(\Xi)$-graded structure of $\rf$ described in Theorem
\ref{th:Fgrad} is the main tool in the study of $\rf$ and its
applications.  Note that it allows one to construct operators in $\rf$
in a straightforward way: for each $E\in\mbg(\Xi)$ choose
$T(E)\in\rf(E)$ such that $\sum_{E}T(E)\doteq T$ is norm convergent;
then $T\in\rf$. Self-adjoint operators affiliated to $\rf$ may be
constructed by a similar procedure, see for example Theorem
\ref{th:afgrad} for a general abstract result, we give concrete
results later: we will see that in N-body type situations this
construction is surprisingly efficient. Note also that, by Theorem
\ref{th:mainthfd}, if $\Xi$ is finite dimensional and $W$ is
irreducible then the conditions ensuring the appartenance
$T(E)\in\rf(E)$ are quite explicit. 

Theorem \ref{th:idec}, which is a straightforward consequence of
Theorem \ref{th:Fgrad}, is important in the spectral analysis of the
operators in $\rf$ or observables affiliated to $\rf$, see Theorem
\ref{th:yeshvz} and Sections \ref{s:fdhvz} and \ref{s:efa} where we
prove extensions of the HVZ theorem describing the essential spectrum
of N-body Hamiltonians. We will see that if $\Xi$ is infinite
dimensional or is finite dimensional but $W$ is of infinite
multiplicity then $\rf\cap K(\ch)=0$, hence if $T\in\rf$ then its
essential spectrum coincides with its spectrum, so we have nothing to
say in these situations. But the relation $\rf\cap K(\ch)=0$ has much
worse consequences: it implies that many physically interesting
Hamiltonians are not affiliated to $\rf$ if $\dim\Xi=\infty$.  For
this reason in this paper we do not treat examples of infinite
dimensional symplectic spaces and of operators affiliated to the
corresponding field algebra. But let us make a comment on this case to
explain what is missing in $\rf$.

We consider \label{p:qft} the Fock space situation and describe a
result from \cite{G}.  Let $\Xi$ be a complex infinite dimensional
Hilbert space and $\ch=\Gamma(\Xi)$ the symmetric Fock space
associated to it. We keep the notation $\Xi$ for the underlying real
vector space of $\Xi$ equipped with the symplectic structure defined
by $\sigma(\xi,\eta)=\Im\braket{\xi}{\eta}$. Then $\rf$ is a
$C^*$-algebra of operators on $\ch$ which does not contain compact
operators and the usual quantum field Hamiltonians are not affiliated
to it. The problem comes from the fact that $\Gamma(A)\nin\rf$ if $A$
is a bounded operator on the one particle Hilbert space $\Xi$. A
solution is to extend $\rf$ by adding the necessary \emph{free kinetic
  energies}. More precisely, if $\co$ is an abelian $C^*$-algebra on
the Hilbert space $\Xi$ whose strong closure does not contain compact
operators then
\begin{equation}\label{eq:Ofield}
\Phi(\co)\doteq C^*(\phi(\xi)\Gamma(A) \mid \xi\in \Xi,
A\in\co,\|A\|<1) 
\end{equation}
is a $C^*$-algebra of operators on $\ch$ which contains the compacts
and whose quotient with respect to the ideal of compact operators is
canonically embedded in $\co\otimes\Phi(\co)$ which allows one to
describe the essential spectrum of the operators affiliated to
$\Phi(\co)$. The Hamiltonians of the $P(\varphi)_2$ models with a
spatial cutoff are affiliated to such algebras. The algebra
$\ra\doteq\Phi(\C)$ has a remarkable property: $K(\ch)\subset\ra$ and
$\ra/K(\ch)\cong\ra$.

\subsection{}\label{ss:find}
\addtocontents{toc}{\SkipTocEntry}

\emph{From now on until the end of the introduction we assume $\Xi$
  finite dimensional}.  Two finite dimensional symplectic spaces are
symplectically isomorphic hence their field algebras are isomorphic
and their realizations in irreducible representations are isomorphic.

We may describe $\rf_\sse$ and its commutant in $\rf$ independently of
the graded structure of $\rf$.

\begin{theorem}\label{th:comutant-i}
  For any subspace $E\subset\Xi$ we have \\[1mm]
  {\rm(1)} $\rf_\sse = \{T\in\rf \mid [T,W(\xi)]=0\ \forall \xi\in
  E^\sigma\}$, \\[1mm]
  {\rm (2)}
  $\rf_{\ss{E^\sigma}}=\{T\in\rf \mid [S,T]=0\ \forall S\in\rf_\sse\}$.
\end{theorem}

\begin{corollary}\label{co:comutant-i}
If $X\in\mbg(\Xi)$ is Lagrangian then $\rf_X$ is a maximal abelian
subalgebra of $\rf$, \ie if $T\in\rf$ then
$[S,T]=0$ $\forall S\in\rf_\ssx$ if and only if $T\in\rf_\ssx$. 
\end{corollary}

The \emph{spectrum} of an observable $T$ affiliated to a $C^*$-algebra
$\rc$ is the set $\spec(T)$ of real $\lambda$ such that
$\theta(T)\neq0$ if $\theta\in C_0(\R)$ and $\theta(\lambda)\neq0$.
Note that $\infty$ is the only observable with empty spectrum.  If
$\ch$ is a Hilbert space and $\rc\subset B(\ch)$, the \emph{essential
  spectrum} of $T$ is the set $\spe(T)$ of $\lambda$ such that
$\theta(T)\nin K(\ch)$ if $\theta(\lambda)\neq0$. We have
\begin{equation}\label{eq:inclusion}
  E\subset F \Rightarrow
  \spec(\cp_\sse T)\subset \spec(\cp_\ssf T) \quad\forall T\in\rf.
\end{equation}
Indeed, if $E\subset F$ then $\cp_\sse=\cp_\sse\cp_\ssf$ and
$\cp_\sse$ is a morphism.  The next theorem and its corollary follow
from Theorem \ref{th:idec} via Atkinson's theorem, \conf Theorems
\ref{th:gahvz}, \ref{th:gahvzcs}, \ref{th:RKR}.

\begin{theorem}\label{th:yeshvz}
  If $W$ of finite multiplicity then
\begin{equation}\label{eq:hvz0}
  \spe(T) = \ccup_{H\in\mbh(\Xi)}\spec(\cp_\ssh T) \quad
  \forall\ T\win\rf.
\end{equation}
Moreover, for any $H\in\mbh(\Xi)$ and any nonzero $\xi\in H^\sigma$ we
have
  \begin{equation}\label{eq:limpminf}
    \cp_\ssh T=\slim_{r\rarrow\infty} W(r\xi)^*T W(r\xi) .
\end{equation}
\end{theorem}

The definition of $\cp_\ssh T$ and of the limit in \eqref{eq:limpminf}
in the case of observables is given in \S\ref{ss:i0}.  The relation
\eqref{eq:hvz0} can be improved if $T$ belongs to a graded subalgebra
of $\rf$:

\begin{corollary}\label{co:yeshvz}
  Let $\cs\subset\mbg(\Xi)$ stable under sums with $\Xi\in\cs$ and
  $\cs_{\max}$ the set of maximal elements of $\cs\setminus\{\Xi\}$.
  Then
\begin{equation}\label{eq:hvz00}
  \spe(T) = \ccup_{E\in\cs_{\max}}\spec(\cp_\sse T) \quad
  \forall\ T\win\rf(\cs).
\end{equation}
\end{corollary}

Indeed, we have $\supset$ in \eqref{eq:hvz00} by
\eqref{eq:inclusion}. On the other hand, if $H$ is a hyperplane let
$E\in\cs$ of maximal dimension with $E\subset H$. Then $E$ is the
greatest element of $\cs$ included in $H$ hence
$\cp_\sse T=\cp_\ssh T$ so $\spec(\cp_E T)=\spec(\cp_HT)$ for all
$T\in\rf(\cs)$.

\begin{remark}\label{re:atk}{\rm The main points of the proof of
    Corollary \ref{co:yeshvz} are the relation
    $\rf(\Xi)=\rf(\cs)\cap K(\ch)$ and the fact that the Theorem
    \ref{th:idec} gives us an explicit description of the quotient
    $\rf(\cs)/\rf(\Xi)$ as a $C^*$-subalgebra of
    $\bigoplus_{E\in\cs_{\max}}\rf_\sse$ and of the canonical morphism
    $\cp:\rf(\cs)\to\rf(\cs)/\rf(\Xi)$. From this one gets
    \eqref{eq:hvz00} which is a general version of the HVZ theorem,
    and also the Mourre estimate with respect to certain conjugate
    operators by an extension of \cite[Th.\ 8.4.3]{ABG}.  Note that
    $\rf(\cs)/\rf(\Xi)$ is realised on the Hilbert space
    $\ch_\cs\doteq\bigoplus_{E\in\cs_{\max}}\!\ch$ and there are many
    self-adjoint operators $T$ affiliated to $\rf(\cs)$ such that
    $\cp(T)$ is not a densely defined operator on $\ch_\cs$.
    
}\end{remark}

\subsection{}\label{ss:i22}
\addtocontents{toc}{\SkipTocEntry}

We now turn to the question of self-adjoint operators are affiliated
with the field algebra $\rf$.  We consider here only the phase space
$\Xi=T^*\R^n$, for a general framework and proofs we refer to Section
\ref{s:LDsab}. We will see that $\rf$ is rather small and quite simple
and natural operators are not affiliated to it, but for example the
class of Hamiltonians with an N-body type structure affiliated to
$\rf$ is surprisingly large. Note that two finite dimensional
symplectic spaces of the same dimension are symplectically isomorphic
hence their field algebras are isomorphic, so there is no loss of
generality in the choice $\Xi=T^*\R^n$, but this is not natural for
some physical systems, \eg non-relativistic N-body systems with
translation invariant potentials require different Euclidean spaces
\cite{ABG,DeG1}.

Let us consider the Euclidean space $X=\R^n$ with scalar product
$\bracet{x}{y}=\sum_{i=1}^n x_iy_i$. The dual space $X^*$ of $X$ is
naturally identified with $X$ but it is convenient to keep the
notation $X^*$ for it. We think of $X$ as the configuration space of a
system whose phase space is $\Xi\equiv T^*X=X\oplus X^*$ with the
symplectic form
\begin{equation}\label{eq:standardLD}
\sigma(\xi,\eta)=\bracet{y}{k}-\bracet{x}{l}
\quad\text{if } \xi=(x,k), \eta=(y,l).
\end{equation}
The Schr\"odinger representation \cite[\S4.2.1]{DeG2} of $\Xi$ acts in
$\ch= L^2(X)$ by the rule
\begin{equation}
  (W(\xi)f)(y) = \e^{i\langle y+\frac{x}{2},k\rangle}f(y+x) \quad
  \text{for}\quad \xi=x+k\in\Xi.
\end{equation}
The Fourier transform is
$(F u)(k)=(2\pi)^{-n/2}\int_X \e^{-\rmi\bracet{x}{k}}u(x)\d x$.  The
position $q$ and momentum $p$ ($X$-valued) observables are defined as
follows: if $\varphi:X\to\C$ is a Borel function then $\varphi(q)$ is
the operator of multiplication by $\varphi$ in $\ch$ and
$\varphi(p)=F^*\varphi(q)F$.  Let $q_j$ be the operator of
multiplication by the j-th variable and $p_j=-\rmi\partial_j$ with
$\partial_j$ the derivative with respect to the j-th variable. Then
$\bracet{x}{p}=\sum_j x_jp_j$ and $\bracet{q}{k}=\sum_j k_jq_j$ if
$x,k\in X$ and the \emph{field operator at the point
  $\xi=(x,k)\in\Xi$} is $\phi(\xi)\doteq \bracet{q}{k}+\bracet{x}{p}$.
These operators are self-adjoint if the definitions are conveniently
interpreted and the field algebra $\rf$ is the $C^*$-algebra generated
by them. We have
\begin{equation}
W(\xi)  = \e^{\rmi\phi(\xi)}
= \e^{\frac{\rmi}{2}\bracet{x}{k}}\e^{\rmi\bracet{q}{k}}
  \e^{\rmi \bracet{x}{p}}
= \e^{-\frac{\rmi}{2}\bracet{x}{k}}\e^{\rmi\bracet{x}{p}}
  \e^{\rmi\bracet{q}{k}}.
\end{equation}
where $(\e^{\rmi\bracet{q}{k}}f)(y)=\e^{\rmi\bracet{y}{k}}f(y)$ and
$(\e^{\rmi\bracet{x}{p}}f)(y)=f(x+y)$.  By Definition
\ref{df:kas-field} \emph{$\ch$ is equipped with two remarkable
  $C^*$-algebras: the Kastler algebra $\rk$ and the field algebra
  $\rf$}.  Since $F^{-1}W(x,k)F=W(k,-x)$ we have
\begin{equation}\label{eq:foukl}
F^{-1}\rk F=\rk \quad\text{and}\quad F^{-1}\rf F=\rf
\end{equation}
For $T\in B(\ch)$ we have
$ W(\xi)^*TW(\xi)= \e^{-\rmi \bracet{x}{p}}\e^{-\rmi\bracet{q}{k}} T
\e^{\rmi\bracet{q}{k}}\e^{\rmi \bracet{x}{p}}$ hence the relation
\eqref{eq:1stcondW}, which is satisfied by any $T\in\rk$, is
equivalent to
\begin{equation}\label{eq:equicont0}
  \lim_{k\to0}\|\e^{-\rmi\bracet{q}{k}}T\e^{\rmi\bracet{q}{k}}-T\|=0
  \quad\text{and}\quad
  \lim_{x\to0}\|\e^{-\rmi\bracet{x}{p}}T\e^{\rmi\bracet{x}{p}}-T\|=0 .
\end{equation}
For example, if $\varphi\in L^\infty(X)$ then
\begin{equation}\label{eq:cbu}
  \varphi(q)\in\rk\Leftrightarrow \varphi(p)\in\rk
  \Rightarrow \varphi \in\cbu(X) .
\end{equation}
But the analogous result for the field algebra seems quite remarkable
(Proposition \ref{pr:fctrf}): 

\begin{theorem}\label{th:cbuxx}
  $\varphi(q)\in\rf \Leftrightarrow  \varphi(p)\in\rf
  \Leftrightarrow \varphi\in \cg_{\ssx}$.
\end{theorem}

$\cg_\ssx$ is defined in \eqref{eq:cgx1}. If $X=\R$ this means
$\varphi\in C_0(\R)+\C$, hence if $\varphi\in L^\infty(\R)$ then
$\vphi(q)\in\rf$ if and only if $\vphi$ is continuous and has finite
and equal limits at $\pm\infty$.

We now give examples of simple operators affiliated or not to the
field algebra.

\begin{example}\label{ex:xx2xy}{\rm If $X=\R^2$ the operators of
    multiplication by $x^2$, $x^2+y^2$, or $x-y$ are clearly
    affiliated to $\rf$.  But that of multiplication by $x^2-y^2$ is
    not affiliated to $\rk$ because if $\theta\in C_\rmc(\R)$ then
    $\theta(x^2-y^2)$ is not uniformly continuous.}
\end{example}

\begin{example}\label{ex:anisotropnaf}
  {\rm The limit in \eqref{eq:limpminf} is the same as $r\to+\infty$
    or $r\to-\infty$, so an anisotropic behaviour is not allowed by
    $\rf$.  It follows that certain physically relevant Hamiltonians
    are not affiliated to $\rf$, for example, the one dimensional
    anisotropic Schr\"odinger operators. More precisely, let $X=\R$
    and $H=p^2+v(q)$ with $v$ real locally integrable such that 
    \begin{equation}\label{eq:lim+-}
      \lim_{x\to\pm\infty}\int_{|y-x|<1}|v(y)-a_\pm|\d x=0 \text{ for some
        numbers } a_\pm.
    \end{equation}
    Then $(H+\rmi)^{-1}\in\rf$ if and only if $a_+=a_-$ because}
  \[
    \slim_{r\to\pm\infty} \e^{\rmi rp}(H+\rmi)^{-1}\e^{-\rmi
      rp}=(p^2+a_\pm+\rmi)^{-1} .
  \]
  {\rm Similarly, the anisotropic algebras
    $C(\overline\R)\cdot C_0(\R^*)$ and
    $C(\overline\R)\cdot C(\overline{\R^*})$ from \cite{GI5}, although
    natural algebras of quantum Hamiltonians, are not included in
    $\rf$. The N-body Hamiltonians studied in \cite{GN} are not
    affiliated to the field algebra associated to the natural $X$.}
\end{example}

\begin{example}\label{ex:riemann}
  {\rm This is similar to Example \ref{ex:anisotropnaf} but we perturb
    $p^2$ by a second order operator. Let $v\geq1$ a bounded Borel
    function on $\R$ satisfying \eqref{eq:lim+-}. Then  $H=pv(q)p$,
    with the Sobolev space $\ch^1(\R)$ as form domain, is affiliated
    to $\rf$ if and only if $a_+=a_-$. 
  }\end{example}

\begin{example}\label{ex:stark}
  {\rm For $X=\R$ the Stark Hamiltonian
    $H=p^2+q=\e^{\rmi p^3/3}q \e^{-\rmi p^3/3}$ is not affiliated to
    $\rf$: indeed, if $T=(H+\rmi)^{-1}$ then the second condition in
    \eqref{eq:equicont0} is satisfied but not the first one, \conf
    \cite[\S 4.2.7]{G1}. Also $p^2-q^2$ is not affiliated to $\rf$
    by \cite[Prop.\ 6.3]{BuGr08}.}
\end{example}

\begin{example}\label{ex:magn}{\rm

    If $X=\R^3$ and $b$ is a constant magnetic field with  magnetic
    vector potential $a(x)=\frac{1}{2}b\times x$ then the magnetic
    Hamiltonian $H_0=(p-a)^2$ is affiliated to $\rf$. This is an easy
    consequence of Theorem \ref{th:mainthfd} because $H_0=\sum_{j=1}^3
    \phi(\eta_j)^2$ with $\eta_j\in T^*X$ (Remark
    \ref{re:cross}). Theorem \ref{th:affg} allows one to treat
    $H=(p-a)^2+v$ with rather general $a$ and $v$. 
 
}\end{example}

\begin{example}\label{ex:dilat}
  {\rm An example of a different nature is the generator of the
    dilation group: $\omega=pq+qp=2pq+\rmi$ is not affiliated to $\rf$
    (again $X=\R$). Indeed, in Proposition \ref{pr:dilnaf} we show
    that $\wlim_{|r|\to\infty}W(r\xi)^*(\omega+\rmi)^{-1}W(r\xi)=0$
    for all $\xi\in\Xi$, $\xi\neq0$ and $(\omega+\rmi)^{-1}$ is not
    compact, hence $(\omega+\rmi)^{-1}\nin\rf$ by Theorem
    \ref{th:yeshvz}. Note that $p^2-q^2=p'q'+q'p'$ where
    $p'=(p-q)/\sqrt2, q'=(p+q)/\sqrt2$ satisfy $[p',q']=-\rmi$ }
\end{example}

\subsection{}\label{ss:i3}
\addtocontents{toc}{\SkipTocEntry}

We now consider N-body Hamiltonians in the framework of
\S\ref{ss:i22}.  Our purpose is to show that the class of such
Hamiltonians affiliated to $\rf$ is very large.

We say that a function $h:X\to\R$ is \emph{divergent} if
$\lim_{k\to\infty}h(k)=+\infty$. Below we fix the kinetic energy
function $h$ but we allow kinetic energy operators of the form
$h(p-k)$ with arbitrary $k\in X$ because the origin of the momentum
space should not play a r\^ole.

\begin{proposition}\label{pr:nbham1}
  Let $h:X\to\R$ be continuous and divergent and
  $\ct_0\subset\mbg(X)$. Denote $\ct$ the set of intersections of
  subspaces from $\ct_0$ and $\ct^\sigma=\{Y^\sigma\mid Y\in\ct\}$,
  which is a subsemilattice of $\mbg(\Xi)$.  Then $\rf(\ct^\sigma)$ is
  the $C^*$-algebra generated by the self-adjoint operators
  $H=h(p-k)+v(q)$ with $k\in X$ and real
  $v\in\sum_{Y\in\ct_0}C_0(X/Y)$.
\end{proposition}

This is a consequence of Theorem \ref{th:nbham}. The proposition says
that the algebras $\rf(\cs)$ with $\cs$ a subsemilattice of subspaces
of $\Xi$ such that $\min\cs=X$ are generated by Hamiltonians having an
N-body type structure. A computation page \pageref{p:stnb}  clarifies
this assertion. 

Let $\cs\subset\mbg(\Xi)$ be an arbitrary subsemilattice with
$\min\cs=X$ and $h:X\to\R$ continuous and divergent. Then $h(p)$ is a
kinetic energy operator affiliated to $\rf(X)=C_0(X^*)$ and our goal
is to build self-adjoint operators $H=h(p)+V$ affiliated to
$\rf(\cs)$. The case of bounded operators $V$ is easy but not without
interest if $\cs$ is infinite because we do not assume that $V$ is a
sum of components. By Proposition \ref{pr:afgrad} and Corollary
\ref{co:yeshvz} we have:

\begin{proposition}\label{pr:vbdd}
  If $V\in\rf(\cs)$ is symmetric then $H= h(p)+V$ is affiliated to
  $\rf(\cs)$ and $\cp_\sse H=h(p)+\cp_\sse V$ for all $E\in\cs$.  If
  $\Xi\in\cs$ then
 \begin{equation}\label{eq:hvzNbI}
   \spe(H) = \ccup_{E\in\cs_{\max}}\spec(\cp_\sse H) .
 \end{equation}
\end{proposition}

\begin{remark}\label{re:infcs}{\rm

The maximal choice $\cs=\mbg_{\ss{\supset X}}$ is particularly
interesting, well beyond the N-body problem. We have $\cs=\ct^\sigma$
with $\ct\subset\mbg(X)$ stable under intersections and the N-body
case corresponds to finite $\ct$: then $H=h(p)+\sum_{Y\in\ct} v_\ssy$
with $v_\ssy\in C_0(X/Y)$ in the simplest case, \conf \eqref{eq:NB01}.
But if $\ct=\mbg(X)$ then $H=h(p)+V$ and $V$ is not just a sum of
$v_\ssy$. In fact,  extending from finite $\ct$ to $\mbg(X)$ is like
going from trigonometric polynomials to uniformly (Bohr) almost
periodic functions. Instead of periodic functions, \ie functions
invariant under translations by elements of subgroups $a\Z$, we use
functions invariant under translations by elements of vector subgroups
(subspaces) of $X$. Thus the self-adjoint operators affiliated to
$\rf_{\ss{\supset X}}$ could be called ``almost N-body Hamiltonians'',
meaning that they can be approximated, in some sense, with elementary
N-body type Hamiltonians.

}\end{remark}

In order to treat unbounded $V$ we consider here only functions $h$
such that the form domain of $h(p)$ is a Sobolev space (Theorem
\ref{th:afgradII} is a more general but less explicit statement).
$\ch^s\equiv\ch^s(X)$ is the Sobolev space of order $s\in\R$ defined
by the norm $\|u\|_s=\|\jap{p}^su\|$ where $\jap{x}=(1+|x|^2)^{1/2}$.
We have $\ch^0 =\ch$ and if $s \geq 0$ then
$\ch^s\subset \ch \subset \ch^{-s}= (\ch^s)^*$.  If $s>0$ is we write
$h(x)\sim |x|^{2s}$ if $c'|k|^{2s} \leq h(k) \leq c''|k|^{2s}$ for
some constants $c', c''$ and all large $k$. If $h$ is bounded from
below this is equivalent to $D(|h(p)|^{1/2}=\ch^s$. Form sums are
defined in Appendix \S\ref{ss:perturb}; for the proof of the next
theorem see Theorem \ref{th:opafrf}.

\begin{theorem}\label{th:opafrfI}
  Let $\cs\subset\mbg(\Xi)$ a subsemilattice with $\min\cs=X$ and
  $h:X\to\R$ continuous with $h(x)\sim |x|^{2s}$ for some $s>0$.  Let
  $V:\ch^s\to\ch^{-s}$ symmetric such that $V\geq -\mu h(p)-\nu$ with
  $\mu<1,\nu\geq0$ and
  $\varphi(p)V\jap{p}^{-s} \in \rf(\cs)\ \forall\varphi\in
  C_\rmc(\R)$.  Then the form sum $H=h(p)+V$ is a self-adjoint
  operator strictly affiliated to $\rf(\cs)$.
\end{theorem}

The perturbations $V$ allowed by Theorem \ref{th:opafrfI} are much
more general than those usually considered in the standard N-body
problem even when $\cs$ (hence $\ct$) is finite and even in the
two-body problem (\ie $\ct=\{0,X\}$). We explain this below.

If $s \geq 0$ we denote $\mbfb^s(X) \doteq B(\ch^s,\ch^{-s})$ and say
that \emph{$T\in \mbfb^s(X)$ is small at infinity} if it verifies the
following equivalent conditions:
\[
  T:\ch^{s}\to\ch^{-t} \text{ is compact if } t>s \Leftrightarrow
  \varphi(p)T:\ch^{s}\to\ch \text{ is compact if }
  \varphi\in C_\rmc(X).
\]
Let $\mbfb_0^s(X)$ be the closed subset of $\mbfb^s(X)$ consisting of
small at infinity operators. For example, $\varphi(q)\in\mbfb^0(X)$ if
and only if $\varphi\in L^\infty(X)$ and $\varphi(q)\in\mbfb_0^0(X)$
means that we also have
$\lim_{a\to\infty}\int_{|x-a|}|\varphi(x)|\d x=0$ (which explains the
conditions \eqref{eq:lim+-}).

If $Y\in\mbg(X)$ and $Y'$ is a complementary subspace of $Y$ in $X$
then $X=Y\oplus Y'$ and
\[
L^2(X)\cong L^2(Y)\otimes L^2(Y')=L^2(Y;L^2(Y')).
\]
Let $F_\ssy$ be the Fourier transform in $L^2(Y)$ and keep the
notation $F_\ssy$ for $F_\ssy\otimes1$ acting in
$L^2(X)=L^2(Y)\otimes L^2(Y')$.  Let $q_\ssy,p_\ssy$ be the position
and momentum observables associated to $Y$, so
$\varphi(p_\ssy)=F_\ssy^{-1}\varphi(q_\ssy)F_\ssy$ if $\varphi:Y\to\C$
and $C_0(Y^*)=F_\ssy^{-1}C_0(Y)F_\ssy$ (\conf \S\ref{ss:i22} and
\S\ref{ss:HLdec}).  If $\Phi:Y\to\mbfb(Y')$ is continuous and we
denote $\Phi(q_\ssy)$ the operator of multiplication by $\Phi$ in
$L^2(X)=L^2(Y;L^2(Y'))$, then
$\Phi(p_\ssy)\doteq F_\ssy^{-1} \Phi(q_\ssy) F_\ssy$.

This definition extends to functions on $Y$ with values operators on
Sobolev spaces on $Y'$.  For example, if $V_\ssy:Y\to\mbfb^s(Y')$ is
continuos and $V_\ssy(y):\ch^s(Y')\to\ch^{-s}(Y')$ is symmetric and
satisfies $\pm V_\ssy(y)\leq c_\ssy\jap{|y|+|p_{Y'}|}^{2s}$ for some
constant $c_\ssy$ and all $y$, then clearly
$\pm V_\ssy(p_\ssy)\leq C_\ssy\jap{p}^{2s}$ for some number $C_\ssy$,
hence $V_\ssy(p_\ssy)\in\mbfb^s(X)$.

Let $\ct\subset\mbg(X)$ a \emph{finite} $\cap$-stable subset with
$0,X\in\ct$. Denote $\cs=\ct^\sigma$ and $\cs_{\max}$ the set of
maximal elements of $\cs\setminus\{\Xi\}$.

\begin{theorem}\label{th:afgradV}
  For any $Y\in\ct$ let $Y'$ a complementary subspace of $Y$ in $X$
  and $V_\ssy$ a norm continuos function $Y\to\mbfb_0^s(Y')$ such that
  $V_\ssy(y):\ch^s(Y')\to\ch^{-s}(Y')$ is symmetric and satisfies
  $\pm V_\ssy(y)\leq c_\ssy\jap{|y|+|p_{Y'}|}^{2s}$ for a constant
  $c_\ssy$ and all $y\in Y$. Let $V_0=0$.  For $E=Y^\sigma\in\cs$
  assume that the symmetric operator
  $V(E)=V_\ssy(p_\ssy)\in \mbfb^s(X)$ satisfies:
  \[
    V(E)\geq-\mu_\sse h(p)-\nu_\sse \text{ with }
    \mu_\sse,\nu_\sse\geq0 \text{ and } \tsum{E\in\cs}{}\mu_\sse<1.
  \]
  Then the form sums $H=h(p)+\sum_EV(E)$ and
  $H_\sse=h(p)+\sum_{F\leq E}V(F)$ are bounded from below self-adjoint
  operators, with form domain $\ch^s$, strictly affiliated to
  $\rf(\cs)$, and $\cp_\sse H=H_\sse\ \forall E\in\cs$.  Moreover, we
  have 
    \begin{equation}\label{eq:hvzNbII}
      \spe(H) = \ccup_{E\in\cs_{\max}}\spec(\cp_\sse H) .
    \end{equation}
\end{theorem}

We mention that the relations \eqref{eq:hvzNbI} and \eqref{eq:hvzNbII}
are HVZ type theorems for $H$.

\begin{remark}\label{re:formal}{\rm

Let us make a formal comment on the structure of the ``potentials''
$V_\ssy(p_\ssy)$ to clarify the gain in generality compared
to potentials $V_\ssy$ in the standard N-body problem. The usual
$V_\ssy$ is a function of the internal variable $q_{\ss{Y'}}$ (of
class $L^u$ for the natural number $u$) while $V_\ssy(p_\ssy)$ is an
operator depending on $q_{\ss{Y'}}$  and on the total momentum
$p=(p_\ssy, p_{\ss{Y'}})$.  
    
}\end{remark}

\begin{example}\label{ex:simple}{\rm

    The simplest case is $\ct=\{0,X\}$, hence $\cs=\{X,\Xi\}$ and
    $\cf(\cs)=C_0(X^*)+\mbfk(X)$. Then $V(X)=0$ hence
    $V=V(\Xi)$. Since $0^\sigma=\Xi$ we have $Y=0$ hence $Y'=X$ so
    $V:\ch^s\to\ch^{-s}$ must be a symmetric small at infinity
    operator such that $V\geq-\mu h(p)-\nu$ with $\mu<1$ (the operator
    $V$ may have the same order $2s$ as $h(p)$).  Then $H=h(p)+V$ is
    strictly affiliated to $\cf(\cs)$ and
    $\spe(H)=h(X)=[\min h,\infty[$.
    
   }\end{example}

\section{Field algebra}\label{s:afg}
\protect\setcounter{equation}{0}

\subsection{Kastler algebra} \label{ss:akas}

We recall here some basic facts about Kastler's algebra \cite{Kas}.

\subsubsection{Measure spaces}\label{sss:noterm}

If $X$ is a finite dimensional real vector space we denote $M(X)$ the
space of bounded Borel measures on $X$ identified with the Banach
space dual of $C_0(X)$ and $L^1(X)$ the subspace of absolutely
continuous measures. If $X=0\equiv\{0\}$ then
$M(X)=L^1(X)=\C\delta_0\equiv\C$ where $\delta_0$ is the Dirac measure
at zero. If $Y\subset X$ is a subspace then $M(Y)$ is identified with
the closed subspace of $M(X)$ consisting of the measures with support
in $Y$, so $M(Y)\subset M(X)$ isometrically, and $L^1(Y)\subset M(X)$
consists of measures with support in $Y$ and whose restriction to $Y$
is absolutely continuous. Since the elements of $L^1(X)$ are thought
as measures not functions their norms do not depend of the choice of a
Lebesgue measure on $X$.  $\mbs(X)$ is the space of Schwartz test
functions with dual the space $\mbs'(X)$ of tempered distributions,
hence $L^1(X)\subset M(X)\subset\mbs'(X)$ continuously.  We set
$\bracet{u}{v}=v(u)$ if $u\in\mbs(X),v\in\mbs'(X)$.

\subsubsection{Definition}\label{sss:kasdef}

If $\Xi$ is a symplectic space then $\{M(E)\}_{E\in\mbg(\Xi)}$ is a
filtered family of Banach spaces with $M(E)\subset M(F)$ isometrically
if $E\subset F$ so $M(\Xi)=\cup_{E\in\mbg(\Xi)}M(E)$ is a normed
vector space; we denote $\|\cdot\|_1$ its norm.  If $E,F\in\mbg(\Xi)$
and $E\neq F$ then
\begin{equation}\label{eq:FG}
  M(E)\cap M(F)=M(E\cap F) \quad\text{and}\quad
  L^1(E)\cap L^1(F)=0 .
\end{equation}
According to Kastler, the \emph{twisted convolution algebra} of $\Xi$
is the normed space $M(\Xi)$ equipped with the following unital
$*$-algebra structure: if $\mu,\nu\in M(\Xi)$ then
$\exists\,E\in\mbg(\Xi)$ such that $\mu,\nu\in M(E)$ and then the
twisted convolution $\mu\oast\nu\in M(E)$ is defined by
\begin{equation}\label{eq:tconv}
  \int f(\xi)(\mu \oast \nu)(\d\xi)=
  \iint \e^{-\frac{\rmi}{2}\sigma(\xi,\eta)}
f(\xi+\eta)\mu(\d\xi)\nu(\d\eta) \quad \forall f\in C_0(E),
\end{equation}
the unit is the Dirac measure at zero $\delta_0$, and the adjoint
measure $\mu^*$ is defined by
\begin{equation}\label{eq:tadjI}
\int f(\xi)\mu^*(\d\xi)=\int f(-\xi)\bar{\mu}(\d\xi)
\quad \forall f\in C_0(E).
\end{equation}
If $\Xi$ is finite dimensional then Kastler \cite[Th.\ 15]{Kas} proves
that \emph{the $*$-algebra $M(\Xi)$ has, modulo equivalence, only one
  faithful irreducible representation and any faithful representation
  of $M(\Xi)$ is a multiple of this one; in particular, $M(\Xi)$ has a
  unique $C^*$-norm $\|\cdot\|$ and}
$\|\mu\|\leq\|\mu\|_1=\int_\Xi|\mu|$. The $C^*$-algebra completion
$\rM(\Xi)$ of $M(\Xi)$ will be called \emph{Kastler $C^*$-algebra} of
$\Xi$. Moreover, by \cite[Th.\ 8]{LM2} we have a continuous embedding
\begin{equation}\label{eq:distrib}
  \rM(\Xi)\subset\mbs'(\Xi) .
\end{equation}
If $\Xi$ is infinite dimensional we have \eqref{eq:fil} and Kastler's
theorem implies
\begin{equation}\label{eq:EFkast}
  E,F\in\mbg_\rms(\Xi) \text{ and } E\subset F \Longrightarrow
  \rM(E)\subset\rM(F) \text{ isometrically} 
\end{equation}
hence one may define its \emph{Kastler $C^*$-algebra} as
\begin{equation}\label{eq:kastgen}
  \rM(\Xi)= \text{ inductive limit of the $C$*-algebras }
  \rM(E) \text{ with } E\in\mbg_\rms(\Xi) .
\end{equation}
From now on, since $\Xi$ is fixed, we simplify the notation and set
$\rM=\rM(\Xi)$.  We keep the notations $\mu,\nu$ for the elements of
$\rM$ and $\mu\oast \nu$ for their product.  Then for any
$E\in\mbg(\Xi)$ we have $E\subset F$ for some $F\in\mbg_\rms(\Xi)$
hence $M(E)\subset M(F)\subset\rM$ and
\begin{equation}\label{eq:rmeXi}
  \rM(E)\doteq\text{ closure of } M(E) \text{
    in } \rM. 
\end{equation}
is a $C^*$-subalgebra of $\rM$ and if $E$ is symplectic $\rM(E)$ is
canonically isomorphic to the Kastler algebra of $E$, so depends only
on the restriction of $\sigma$ to $E$. We have (see page
\pageref{p:E+F})
\begin{equation}\label{eq:kgrad}
E,F\in\mbg(\Xi)\Rightarrow \rM(E)\oast\rM(F)\subset\rM(E+F). 
\end{equation}
For any $\xi\in\Xi$ let $\delta_\xi$ be the Dirac measure at $\xi$
thought as element $\delta_\xi\in M(E)\subset\rM$ for any
$E\in\mbg(\Xi)$ such that $\xi\in E$. If $\xi,\eta\in\Xi$ we clearly
have
\begin{equation}\label{eq:ccrm}
  \delta_\xi^*=\delta_{-\xi} \quad\text{and}\quad
  \delta_\xi\oast\delta_\eta = \e^{-\frac{\rmi}{2}\sigma(\xi,\eta)}
  \delta_{\xi+\eta} .
\end{equation}
Thus $\delta_\xi$ is a unitary element of $\rM$ hence
$\|\delta_\xi\|=1$.  Let
$\xi_\sigma\doteq\sigma(\cdot,\xi)\colon\Xi\to\R$ and $\tau_\xi$ the
translation by $\xi$ defined by $(\tau_\xi
f)(\eta)=f(\eta-\xi)$. Clearly, if $E\in\mbg(\Xi)$ and $\mu\in M(E)$
\begin{equation}\label{eq:products}
   \delta_\xi \oast \mu = \e^{\frac{\rmi}{2}\xi_\sigma}
   \tau_\xi\mu, \quad 
   \mu \oast \delta_{-\xi}= \e^{\frac{\rmi}{2}\xi_\sigma}
   \tau_{-\xi}\mu, \quad 
   \delta_\xi \oast \mu \oast \delta_{-\xi}= \e^{\rmi \xi_\sigma}\mu  .
 \end{equation}  
 If $f:\Xi\to\R$ we write $\lim_{\xi\to0}f(\xi)=0$ if this holds for
 the restriction of $f$ to any finite dimensional subspace of $\Xi$.
 Then by a density and continuity argument we easily get
\begin{equation}\label{eq:1stcondI}
  \lim_{\xi\to0}\|\delta_\xi^*\oast\mu\oast\delta_{\xi}-\mu\| =
  \lim_{\xi\to0}\|\delta_\xi\oast\mu-\mu\oast\delta_\xi\| =0
  \quad \forall \mu\in\rM.
\end{equation}

\subsubsection{Representations}\label{sss:Krep}

If we embed $\Xi\subset\rM(\Xi)$ by using the map
$\xi\mapsto\delta_\xi$ then the representations of $\Xi$ naturally
extend to representations of the Kastler algebra $\rM(\Xi)$.
Proposition \ref{pr:wmorph} is a simple consequence of Stone-Von
Neumann's and Kastler's theorems.

\begin{remark}\label{re:stvon}{\rm We recall the statement Stone-Von
    Neumann theorem \cite{DeG2,Fol} because we will need several times
    the notations introduced here: any finite dimensional $\Xi$ has an
    irreducible representation $W_0:\Xi\to U(\ch_0)$ and if
    $W:\Xi\to U(\ch)$ is an arbitrary representation then there is a
    Hilbert space $\ck$ and a unitary operator
    $V:\ch\to\ch_0\otimes\ck$ such that
    $VW(\xi) V^*=W_0(\xi)\otimes1_\ck$ $\forall\ \xi\in\Xi$.  The
    \emph{multiplicity of $W$} is the dimension of $\ck$ and the
    \emph{commutant of $W$} is the set
    $\Com^\ssw(\Xi)=\{T\in B(\ch)\mid [T,W(\xi)]=0\,\forall\xi\}$.
    Clearly $V\Com^\ssw(\Xi)V^*=1_{\ch_0}\otimes B(\ck)$.
  }\end{remark}

\begin{proposition}\label{pr:wmorph}
  If $W:\Xi\to U(\ch)$ is a representation of the symplectic space
  $\Xi$ then there is a unique $C^*$-algebra representation
  $W:\rM(\Xi)\to B(\ch)$ such that
\begin{equation}\label{eq:wme}
  W(\mu) = \int_E W(\xi) \mu(\d\xi) 
  \quad  \forall E\in\mbg(\Xi) \text{ and } \forall\mu\in M(E) .
\end{equation}
This representation is faithful. 
\end{proposition}

\begin{proof}

  First assume $\Xi$ finite dimensional and define
  $W:M(\Xi)\to B(\ch)$ by \eqref{eq:wme} with $E=\Xi$.  We have
  $W(\mu\oast\nu)=W(\mu) W(\nu)$, $W(\mu^*)=W(\mu)^*$ and
  $W(\delta_\xi)=W(\xi)$ hence $W:M(\Xi)\to B(\ch)$ is a unital
  morphism such that $\|W(\mu)\|\leq\|\mu\|_1$.  To prove that this
  morphism is injective we use the Stone-Von Neumann theorem with the
  notations used above.  From $VW(\xi)V^*=W_0(\xi)\otimes1_\ck$ we get
  $VW(\mu)V^*=W_0(\mu)\otimes1_\ck$ hence
  $\|W(\mu)\|=\|W_0(\mu)\|\ \forall\mu\in M(\Xi)$. Hence it suffices
  to find one representation of $\Xi$ whose associated morphism is
  injective; this is the case for the regular representation
  \cite[Th.\ 5]{Kas}.

  Since $W:M(\Xi)\to B(\ch)$ is a faithful representation the relation
  $\|\mu\|_0=\|W(\mu)\|$ defines a $C^*$-norm on $M(\Xi)$ hence by
  Kastler's theorem we see that $\|W(\mu)\|=\|\mu\|$ and this clearly
  implies the statement of the theorem if $\Xi$ is finite dimensional.

  If $\Xi$ is infinite dimensional and $F\in\mbg_\rms(\Xi)$ the
  restriction $W|F$ is a representation of $F$ so it extends to an
  isometric representation $W_F\colon\rM(F)\to B(\ch)$.  If
  $F,G,H\in\mbg_\rms(\Xi)$ and $F+G\subset H$ then $W_H|\rM(F)=W_F$
  and $W_H|\rM(G)=W_G$ so $W$ induces a $C^*$-algebra representation
  $\rM(\Xi)\to B(\ch)$ that we also denote $W$ which clearly satisfies
  \eqref{eq:wme}. Since the restriction of $W$ to any $\rM(F)$ is
  isometric and $\rM(\Xi)$ is the closure of the union of such
  $\rM(F)$, it follows that the $W\colon \rM(\Xi)\to B(\ch)$ is
  isometric.
\end{proof}

\begin{definition}\label{df:wkast}
  The \emph{Kastler $C^*$-algebra in the representation $W$} is the
  norm closure $\rwk(\Xi)$ of the set of operators $W(\mu)$ with
  $\mu\in M(\Xi)$. The set $\rwk(\Xi)$ is a unital $C^*$-subalgebra
  of $B(\ch)$ and $\mu\mapsto W(\mu)$ extends to an isomorphism
  $W\colon\rM(\Xi)\to\rwk(\Xi)$.
\end{definition}

\begin{remark}\label{re:rwk}{\rm By Proposition \ref{pr:wmorph}, if
    $W_1,W_2$ are representations of $\Xi$ then the algebras
    $\xk{\ssw_{\!1}}(\Xi)$ and $\xk{\ssw_{\!2}}(\Xi)$ are canonically
    isomorphic in the following sense: there is a unique morphism
    $\xk{\ssw_1}(\Xi)\to\xk{\ssw_2}(\Xi)$ which sends $W_1(\xi)$ into
    $W_2(\xi)$ for all $\xi\in\Xi$. For this reason}
    we drop the superscript $W$ and write $\rk(\Xi)$ or just $\rk$
    unless $W$ plays a role.
\end{remark}

\begin{remark}\label{re:wmorK}

  {\rm If $\dim\Xi<\infty$ then $\rwk(\Xi)\cong
    \xk{\ssw_0}\otimes 1_{\ck}$ with the notations of Remark
    \ref{re:stvon}.} 

\end{remark}

\subsubsection{C*-norm on \texorpdfstring{$L^1$}{L1}}\label{sss:cnL1}

If $E\in\mbg(\Xi)$ is not symplectic the $C^*$-algebra $\rM(E)$ is
defined in \eqref{eq:rmeXi} but the $C^*$-norm on $M(E)$ is not
unique, it depends on the embedding of $E$ in $\Xi$. But the situation
is better if we replace $M(E)$ by $L^1(E)$: there is a unique
$C^*$-norm on $L^1(E)$ without non-degeneracy condition on
$\sigma$. This is interesting since it may be used to construct the
field algebra without referring to Kastler algebra (\conf comment
after \eqref{eq:rldf2}).

\begin{theorem}\label{th:semisymp}
  Let $\Xi$ be a finite dimensional real vector space equipped with a
  Lebesgue measure and a bilinear anti-symmetric form
  $\sigma:\Xi\times \Xi\to\R$. Equip $L^1(\Xi)$ with the Banach
  $*$-algebra structure defined by the involution
  $f^*(\xi)=\bar{f}(-\xi)$ and the twisted convolution
\begin{equation}\label{eq:L1twist}
  (f\oast g)(\xi)=\int \e^{-\frac{\rmi}{2}\sigma(\xi,\eta)}
  f(\xi-\eta)g(\eta) \dd\eta.
\end{equation}
Then there is a unique $C^*$-norm on $L^1(\Xi)$ (also independent of
the chosen measure $\d\eta$).
\end{theorem}

\begin{proof}

  If $\sigma$ is non-degenerate the assertion follows from Stone-Von
  Neumann theorem and if $\sigma=0$ then this holds for arbitrary
  locally compact abelian groups, \conf \cite[p.\ 224]{Boi}.

  We will need the following theorem of B.A. Barnes \cite[Thm.\
  5.5]{Bar}.  Recall that a Banach $*$-algebra $\ca$ is postliminal if
  its enveloping $C^*$-algebra $\tilde\ca$ is postliminal, \ie the
  range of each irreducible representation of $\tilde\ca$ contains a
  nonzero compact operator (then it will contain all compact
  operators) \cite[\S4.1.10 and Theorem 9.1]{Dix}. The theorem says
  that if $\ca,\cb$ are postliminal Banach $*$-algebras and have
  unique $C^*$-norms, then the algebraic tensor product
  $\ca\otimes\cb$ has a unique $C^*$-norm. Note that the convolution
  algebra of an amenable group can have several distinct $C^*$-norms
  \cite[p.\ 1]{Bar}.

  Let
  $\Xi^\sigma=\{\xi\in \Xi\mid \sigma(\xi,\eta)=0\ \forall\eta\in
  \Xi\}$ and $\Theta$ a subspace supplementary to $\Xi^\sigma$ in
  $\Xi$, so $\Xi=\Xi^\sigma\oplus \Theta$. Note that $\sigma=0$ on
  $\Xi^\sigma$. The restriction of $\sigma$ to $\Theta$ is
  non-degenerate because if $\xi\in \Theta$ and $\sigma(\xi,\eta)=0$
  for all $\eta\in \Theta$ then $\sigma(\xi,\eta)=0$ for all
  $\eta\in \Xi$ hence $\xi\in \Xi^\sigma$ so $\xi=0$. Thus $\Theta$ is
  a symplectic space when equipped with $\sigma|_{\Theta^2}$.

  We have $L^1(\Xi)=L^1(\Xi^\sigma)\hat\otimes L^1(\Theta)$ where
  $\hat\otimes$ denote the projective tensor product. It is clear that
  this equality holds at the Banach $*$-algebra level if we equip
  $L^1(\Xi)$ with the twisted convolution product \eqref{eq:L1twist},
  $L^1(\Xi^\sigma)$ with the usual convolution product, and
  $L^1(\Theta)$ with the twisted convolution associated to its
  symplectic form $\sigma|_{\Theta^2}$.

  We apply now Barnes' theorem to the algebraic tensor product
  $L^1(\Xi)=L^1(\Xi^\sigma)\otimes L^1(\Theta)$. The $C^*$-envelop of
  $L^1(\Xi^\sigma)$ is the abelian algebra of $C_0$ functions on the
  dual of $\Xi^\sigma$, hence is postliminal. Moreover, as we noticed
  at the beginning of the proof, there is only one $C^*$-norm on
  $L^1(\Xi^\sigma)$. On the other hand, the Stone-Von Neumann theorem
  applied to the symplectic space $\Theta$ implies that the
  $C^*$-envelop of $L^1(\Theta)$ is the $C^*$-algebra of compact
  operators on the Hilbert space of the unique (modulo unitary
  equivalence) irreducible representation of $\Theta$, hence
  $L^1(\Theta)$ is postliminal and has a unique $C^*$-norm.  Thus
  there is a unique $C^*$-norm on $L^1(\Xi^\sigma)\otimes L^1(\Theta)$
  hence on its projective completion $L^1(\Xi)$.
\end{proof}

\subsection{Components of the field algebra}\label{ss:components}

Let $\Xi$ be a symplectic space.  If $E\in\mbg(\Xi)$ then $L^1(E)$ is
an ideal of $M(E)$ so its closure in $\rM(E)$ is a $C^*$-subalgebra
and an ideal of $\rM(E)$ that we denote
\begin{equation}\label{eq:rldf1}
  \rl(E) \doteq \text{ closure of the $*$-algebra } L^1(E) \text{ in }
  \rM(E) .
\end{equation}
In particular $\rl(0)=\C$.  By Theorem \ref{th:semisymp} we have
\begin{equation}\label{eq:rldf2}
  \rl(E)=\text{ completion of } L^1(E) \text{ for its unique
    $C^*$-norm} 
\end{equation}
so we could use Mageira's reconstruction theorem \cite[Th.\
2.2]{M1} to define the field algebra independently of the Kastler
algebra. However, viewing it as a subalgebra of $\rM(\Xi)$ could be
useful for further developments, \eg for the treatment of anisotropic
operators mentioned in Example \ref{ex:anisotropnaf}. In this
subsection we study the structure of the algebras $\rl(E)$. 

If $\dim\Xi<\infty$ we have a simple intrinsic description of
$\rl(E)$:

\begin{theorem}\label{th:intrinsicE}
  If $\Xi$ is finite dimensional then $\rl(E)$ is the set of
  $\mu\in\rM$ such that
  \begin{equation}\label{eq:intrinsicE}
\lim_{\xi\in E,\xi\to0}\|\delta_\xi\oast\mu-\mu\|=0
\quad\text{and}\quad
\delta_\xi\oast\mu=\mu\oast\delta_\xi \text{ if } \xi \in E^\sigma.
  \end{equation}
\end{theorem}

This is a corollary of Theorem \ref{th:mainthfd} which is a much
deeper fact because it does not involve a condition like $\mu\in\rM$
where $\rM$ is a ``complicated'' algebra.

We call \emph{center} of $E\in\mbg(\Xi)$ the space
$E_\rmc=E\cap E^\sigma=\{\xi\in E\mid \sigma(\xi,\eta)=0\
\forall\eta\in E\}$. Clearly $E/E_\rmc$ is naturally a symplectic
space, so the $C^*$-algebra $\rl(E/E_\rmc)$ is well defined. $E$ is
symplectic if and only if $E_\rmc=0$, $E$ is isotropic if and only if
$E_\rmc=E$, and $E$ is coisotropic if and only if $E_\rmc=E^\sigma$.

If $E^*$ is the dual of $E$ then
$\Xi\ni\xi\mapsto\xi_\sigma|_{E}\in E^*$ is surjective with kernel
$E^\sigma$ hence we may identify $E^*=\Xi/E^\sigma$. Thus if $\Xi$ is
finite dimensional $C_0(E^*)\subset \cbu(\Xi)\subset\mbs'(\Xi)$.

A $C^*$-algebra $\ca$ is called \emph{elementary} if it is isomorphic
to the $C^*$-algebra of compact operators on a separable Hilbert
space. \label{p:elementary} The $C^*$-tensor products at (3) and (4)
below are unambiguously defined, both factors being nuclear algebras.

\begin{theorem}\label{th:structure}
  Let $E$ be a finite dimensional subspace of $\Xi$.\\[1mm]
  {\rm(1)} $E$ is isotropic if and only $\rl(E)$ is abelian and
  then $\rl(E){\cong} C_0(E^*)$.\\[1mm]
  {\rm(2)} $E$ is symplectic if and only if $\rl(E)$ is
  elementary.\\[1mm]
  {\rm(3)} $\rl(E)\simeq\rl(E_\rmc)\otimes\rl(E/E_\rmc)$.
  \\[1mm]
  {\rm(4)} If $E$ is coisotropic then
  $\rl(E)\simeq\rl(E^\sigma)\otimes\rl(E/E^\sigma)$. \\[1mm]
  {\rm(5)} $\rl(E)$ is unital if and only if $E=0$ and then $\rl(0)=\C$. 
  \end{theorem}

\begin{proof}  
  $\rl(E)$ is abelian if and only $\mu\oast\nu=\nu\oast\mu$
  $\forall\mu,\nu\in L^1(E)$ which by \eqref{eq:tconv} means
\begin{equation*}
\int \e^{-\frac{\rmi}{2}\sigma(\xi,\eta)} \mu(\xi-\eta)\nu(\eta) \d\eta=
\int \e^{\frac{\rmi}{2}\sigma(\xi,\eta)} \mu(\xi-\eta) \nu(\eta)\d\eta
\quad \forall \mu,\nu\in C_\rmc(E), \ \forall \xi\in E.
\end{equation*}
Integrating over $\xi$ we then get
\[
  \iint \left(\e^{-\frac{\rmi}{2}\sigma(\xi,\eta)} -
    \e^{\frac{\rmi}{2}\sigma(\xi,\eta)} \right) \mu(\xi)\nu(\eta)
  \d\xi\d\eta =0 \quad \forall \mu,\nu\in C_\rmc(E) .
\]
hence
$\e^{-\frac{\rmi}{2}\sigma(\xi,\eta)} -
\e^{\frac{\rmi}{2}\sigma(\xi,\eta)}=0\ \forall \xi,\eta\in E$ which
clearly is equivalent to $\sigma|_{E^2}=0$.

We now describe the canonical isomorphism $\rl(E)\cong C_0(E^*)$ for
isotropic $E$.  Then for $\mu,\nu\in L^1(E)$ the twisted convolution
product $\mu\oast\nu$ coincides with the ordinary convolution product
$\mu\star\nu$. Thus $L^1(E)$ is the usual Banach $*$-algebra
associated to the convolution product on $E$ whose enveloping
$C^*$-algebra is identified with $C_0(E^*)$ via a Fourier
transformation. We still need to show that the norm induced by
$\rM(\Xi)$ on $L^1(E)$ coincides with the norm induced by the
enveloping $C^*$-algebra. But this follows from the fact that for an
arbitrary locally compact abelian group $E$ there is only one
$C^*$-norm on the convolution algebra $L^1(E)$ \cite[p.\ 224]{Boi}.
This proves (1).

If $E$ is symplectic then by definition $\rl(E)$ is the completion of
the twisted convolution algebra $L^1(E)$ in $\rM(E)$.  If $W$ is an
irreducible representation of $E$ then from Proposition
\ref{pr:wmorph} and the Stone-Von Neumann theorem it follows that
$\rl(E)$ is isomorphic to the $C^*$-algebra generated by the operators
$W(\mu) = \int_E W(\xi) \mu(\d\xi)$ with $\mu\in L^1(E)$, which is the
$C^*$-algebra of compact operators on the Hilbert space of the
representation (obvious in the Schr\"odinger representation).
Reciprocally, if $E$ is not symplectic then $E_\rmc\neq0$ hence
$\rl(E)$ is not elementary by (3).

Assertion (3) follows from Theorem \ref{th:semisymp} and its proof. If
$F$ is a subspace supplementary to $E_\rmc$ in $E$, so
$E=E_\rmc\oplus F$, then $L^1(E)=L^1(E_\rmc)\hat\otimes L^1(F)$
projective tensor product of the usual convolution algebra
$L^1(E_\rmc)$ with the twisted convolution algebra $L^1(F)$ defined by
the symplectic structure of $F$. By the uniqueness of the $C^*$-norms
of these algebras and \cite[Thm.\ 5.5]{Bar} we get
$\rl(E)=\rl(E_\rmc)\otimes\rl(F)$ which is the uniquely defined
$C^*$-tensor product. This proves (3).  Finally, (4) is a particular
case of (3) and (5) is true because by the preceding description the
algebras $\rl(E)$ with $E\neq0$ cannot have unit.
\end{proof}

\subsection{Field algebra}\label{ss:FA}

The set of $C^*$-subalgebras $\rl(E)$ of $\rM$ has some remarkable
properties summarised in the next theorem.  If $\ca_1,\dots,\ca_n$ are
subsets of $\rM$ then $\ca_1\oast\dots\oast\ca_n$ is the linear span
of the products $\mu_1\oast\dots\oast\mu_n$ with $\mu_i\in\ca_i$ and
$\ca_1\dot\oast\dots\dot\oast\ca_n$ its closure.

\begin{theorem}\label{th:remarc}
{\rm(1)} The family of linear subspaces $\rl(E)$ of $\rM$ is linearly
independent. \\[1mm]
{\rm(2)} If $E,F\in\mbg(\Xi)$ and $\mu\in\rl(E),\nu\in\rl(F)$ then
$\mu\oast\nu\in\rl(E+F)$. \\[1mm]
{\rm(3)} $\rl(E)\dot\oast\rl(F)=\rl(E+F)$. \\[1mm]
{\rm(4)} If $\cs\subset\mbg(\Xi)$ is finite then
$\sum_{E\in\cs}\rl(E)$ is norm closed.
\end{theorem}

From (1) and (2) it follows that the linear subspace generated by the
$\rl(E)$, denoted
\begin{equation}\label{eq:rl0}
  \mathring\rl\doteq\tsum{E\in\mbg(\Xi)}{}\rl(E),
\end{equation}
is a unital $*$-subalgebra of $\rM$ hence its closure is a unital
$C^*$-subalgebra.  The sum in \eqref{eq:rl0} is direct hence if
$\mu\in\mathring\rl$ its component in $\rl(E)$ is uniquely determined;
we denote it $\mu(E)$, so that $\mu=\sum_E\mu(E)$.

\begin{definition}\label{df:deff}
  The \emph{field $C^*$-algebra} $\rl\equiv\rl^{\ssxi}$ of $\Xi$ is the
  closure of $\mathring\rl$:
  \begin{equation}\label{eq:deff}
    \rl\doteq\tsum{E\in\mbg(\Xi)}{\rmc}\rl(E).
  \end{equation}
\end{definition}

Parts (1) and (2) of Theorem \ref{th:remarc} and the preceding
definition say that \emph{$\rl$ is equipped with a $\mbg(\Xi)$-graded
  $C^*$-algebra structure with components $\rl(E)$}.  

In the rest of this section we prove Theorem \ref{th:remarc}. Note
that
\begin{equation}\label{eq:Kcommut}
\mu\in\rM(E),\,\nu\in\rM(F) \text{ with } F\subset E^\sigma
\Rightarrow  \mu\oast\nu=\nu\oast\mu.
\end{equation}
Indeed, it suffices to show that $\mu\oast\nu=\nu\oast\mu$ if
$\mu\in M(E)$ and $\nu\in M(F)$. If $G\in\mbg_\rms(\Xi)$ with
$E+F\subset G$ and $f\in C_0(G)$ then by \eqref{eq:tconv} 
\[
  \int f(\xi)(\mu \oast \nu)(\d\xi)= \int_{E\times F}
  \e^{\frac{\rmi}{2}\sigma(\xi,\eta)} f(\xi+\eta)\mu(\d\xi)\nu(\d\eta)
  = \int_{E\times F} f(\xi+\eta)\mu(\d\xi)\nu(\d\eta)
\]
because $\sigma(\xi,\eta)=0$ if $\xi\in E,\eta\in F$ and we have the
same formula for $\nu\oast\mu$.

\begin{lemma}\label{lm:liminf}
  If $\Xi$ is finite dimensional,
  $\mu=\sum_{E\in\mbg(\Xi)}\mu(E)\in\mathring\rl$, and $\xi\in\Xi$,
  then
\begin{equation}\label{eq:liminf}
  \lim_{r\to\infty} \delta_{r\xi} \oast \mu \oast \delta_{-r\xi}
  =\tsum{\xi\in F^\sigma}{} \mu(F) \quad
  \text{in } \mbs'(\Xi) .
\end{equation}
\end{lemma}

\begin{proof}
  We have
  \begin{equation}\label{eq:mudelta}
    \delta_\xi \oast \mu \oast \delta_{-\xi}=
    \e^{\rmi\xi_\sigma}\mu \quad \forall \mu\in\rM(\Xi) . 
  \end{equation}
  Indeed, \eqref{eq:products} ensures this for $\mu\in M(\Xi)$ and
  then the relation extends to any $\mu\in\mbs'(\Xi)$ by continuity
  and density of $M(\xi)$ in $\mbs'(\Xi)$; then we use the embedding
  $\rM(\Xi)\subset\mbs'(\Xi)$.

  Now let $E$ be a subspace of $\Xi$ and $\mu\in\rl(E)$. From
  \eqref{eq:Kcommut} and \eqref{eq:ccrm} we get
  \begin{equation}\label{eq:fix}
    \delta_\xi \oast \mu \oast \delta_{-\xi}=\mu \quad\text{if }
    \xi\in E^\sigma .
  \end{equation}
  On the other hand, as weak limit in $\mbs'(\Xi)$ we have
  \begin{equation}\label{eq:limmod}
    \lim_{r\to\infty} \delta_{r\xi} \oast \mu \oast \delta_{-r\xi}=0
    \quad\text{if } \xi\nin E^\sigma.
  \end{equation}
  For the proof, note that, by the last relation in
  \eqref{eq:products} and a density and continuity argument, for any
  $\mu,\nu\in\rM(\Xi)$ we have
  \[
    \|\delta_{r\xi} \oast \mu \oast \delta_{-r\xi} -\delta_{r\xi}
    \oast \nu \oast \delta_{-r\xi}\|\leq\|\mu-\nu\|
  \]
  hence, since $\rl(E)$ is the norm closure of $L^1(E)$ in $\rM(\Xi)$,
  it suffices to prove \eqref{eq:limmod} for $\mu\in L^1(E)$.  If
  $\xi\nin E^\sigma$ then $\xi_\sigma|_E$ is a nonzero linear form on
  $E$ hence by \eqref{eq:mudelta} if $\theta\in\mbs(\Xi)$ 
  \begin{equation}\label{eq:limmod1}
    \bracet{\theta}{\delta_{r\xi} \oast \mu \oast \delta_{-r\xi}}
    =\bracet{\theta}{\e^{\rmi r\xi_\sigma}\mu}
    =\int_E\e^{\rmi r\sigma(\xi,\eta)}\theta(\eta)\mu(\eta) \d \eta
  \end{equation}
  tends to zero as $r\to\infty$ by the Riemann-Lebesgue lemma. This
  proves \eqref{eq:liminf}.
\end{proof}

We prove (1) of the Theorem \ref{th:remarc}, \ie if $\mu(E)\in\rl(E)$
$\forall E\in\cs=$ finite, then
\begin{equation}\label{eq:dirsum}
  \tsum{E\in\cs}{}\mu(E)=0 \Longrightarrow \mu(E)=0\ \forall E\in\cs .
\end{equation}
Since there is a finite dimensional symplectic subspace of $\Xi$
containing all the subspaces $E\in\cs$, it suffices to show this under
the assumption that $\Xi$ is finite dimensional.  For any $\xi\in\Xi$
we will have
\[
  \tsum{E\in\cs}{} \lim_{r\to\infty} \delta_{r\xi} \oast \mu(E) \oast
  \delta_{-r\xi}=0.
\]
Fix $F\in\cs$ and consider the spaces $E\in\cs$ such that
$E\not\subset F$. For such $E$ we have $F^\sigma\not\subset E^\sigma$
hence $E^\sigma\cap F^\sigma$ is a strict subspace of $F^\sigma$, so
we may choose $\xi\in F^\sigma$ which does not belong to any of these
subspaces, hence $\xi\nin E^\sigma$ if $E\not\subset F$. On the other
hand, if $E\subset F$ then $\xi\in F^\sigma\subset E^\sigma$. Thus
from \eqref{eq:fix} and \eqref{eq:limmod} we get
\[
  \tsum{E\subset F}{}\mu(E)=0 \quad \forall F\in\cs .
\]
If $F$ is minimal in $\cs$ we get $\mu_F=0$. Then if $\cs_1$ is the
set of $E\in\cs$ which are not minimal, we get
$\sum_{E\in\cs_1}\mu(E)=0$. By repeating the above argument for
$\cs_1$ we get $\mu_F=0$ for all $F$ minimal in $\cs_1$, etc.  This
proves part (1) of the theorem

For the proof of (2) of Theorem \ref{th:remarc} we may assume $\Xi$
finite dimensional, \conf \eqref{eq:fil}.  If $\mu\in M(E)$ and
$\nu\in M(F)$ then $\supp(\mu\oast\nu)\subset E+F$ because if
$f\in C_0(\Xi)$ has support disjoint from $E+F$ then the right hand
side of \eqref{eq:tconv} is equal to zero. Thus
$\rM(E)\dot\oast\rM(F)\subset\rM(E+F)$, in particular we have
\eqref{eq:kgrad}.  Alternatively, we may use
$|\mu\oast\nu|\leq|\mu|\star|\nu|$, where $\star$ is the ordinary
convolution operation of measures, and the following known fact: if
$\mu,\nu$ are positive bounded measures on $\Xi$ then
$\supp(\mu\star\nu)$ is included in the closure of
$\supp\mu+\supp\nu$. \label{p:E+F}

Now we prove that $\mu\oast\nu\in L^1(E+F)$ if
$\mu\in L^1(E),\nu\in L^1(F)$.  By the preceding comments, it suffices
to prove that $\mu\star\nu\in L^1(E+F)$ if $\mu,\nu$ are positive
measures in $L^1(E)$ and $L^1(F)$ respectively.  Denote
$\mu\otimes\nu$ the product measure on $E\oplus F$. This is clearly an
absolutely continuous positive bounded measure on $E\oplus F$ and if
we denote $S:E\oplus F\to E+F$ the sum operation
$S(\xi,\eta)=\xi+\eta$ then $\mu\star\nu$ is the bounded positive
measure on $E+F$ defined by
$(\mu\star\nu)(A)=(\mu\otimes\nu)(S^{-1}(A))$ for any Borel set
$A\subset E+F$. We have to show that this measure is absolutely
continuous. But $S$ is a linear surjective map, hence if
$N\subset E+F$ is of measure zero then $S^{-1}(N)$ is of measure
zero. Indeed, if $G$ is a subspace of $E\oplus F$ supplementary to
$\ker S$ then $S:G\to E+F$ is a linear bijective map, so
$M=S^{-1}(N)\cap G$ is of measure zero and $S^{-1}(N)=G\oplus\ker S$
is also of measure zero by Fubini theorem.

Thus we have $\rl(E)\oast\rl(F)\subset\rl(E+F)$ and for the proof of
(3) of Theorem \ref{th:remarc} it remains to show that 
$\rl(E)\oast\rl(F)$ is dense in $\rl(E+F)$. For this it suffices to
show that the only function $f\in L^\infty(E+F)$ such that
$\int f(\xi)(\mu\oast\nu)(\d\xi)=0$ for all $\mu\in L^1(E)$ and
$\nu\in L^1(F)$ is $f=0$. More explicitly, this condition means
\begin{equation*}
  \int_E\int_F \e^{-\frac{\rmi}{2}\sigma(\xi,\eta)}
  f(\xi+\eta)\mu(\xi)\nu(\eta)\d\xi\d\eta =0 \quad\text{for all }
  \mu\in L^1(E),\nu\in L^1(F) ,
\end{equation*}
where $\d\xi,\d\eta$ are the Euclidean measures associated to some
scalar products on $E,F$ respectively. For $\xi_0\in E$,
$\eta_0\in F$, and $r>0$ real let us take above $\mu$ equal to the
characteristic function of the ball $|\xi-\xi_0|<r$ in $E$ divided by
its volume $B_E(r)$ and $\nu$ the similarly defined function on
$F$. Then we get
\begin{equation*}
  \frac{1}{B_E(r) B_F(r)}\int_{|\xi-\xi_0|<r}\int_{|\eta-\eta_0|<r}
  \e^{-\frac{\rmi}{2}\sigma(\xi,\eta)}
  f(\xi+\eta)\d\xi\d\eta =0.
\end{equation*}
By the Lebesgue differentiation theorem, the limit as $r\to0$ of the
left hand side above is equal to
$\e^{-\frac{\rmi}{2}\sigma(\xi_0,\eta_0)} f(\xi_0+\eta_0)$ for almost
every $\xi_0\in E,\eta_0\in F$. Thus $f=0$.

This finishes the proof of (3).  By (1), (2) and \eqref{eq:deff} $\rl$
is a $\mbg(\Xi)$-graded $C^*$-algebra hence (4) is true by (1) of
Lemma \ref{lm:subsemi}. This finishes the proof of the theorem.

\begin{remark}\label{re:kdefofrl}{\rm From Theorems \ref{th:remarc}
    and \ref{th:structure} it follows that for each $L\in\mbp(\Xi)$ we
    have an abelian $C^*$-subalgebra $\rl(L)$ of $\rl$ with
    $\rl(L)\cong C_0(L)\simeq C_0(\R)$ such that the family of
    subspaces $\{\rl(L)\}_{L\in\mbp(\Xi)}$ is linearly independent and
    generates $\rl$. In other terms, the linear sum
    $\ri=\sum_{L\in\mbp(\Xi)}\rl(L)\subset\rl$ is direct and the
    $C^*$-algebra generated by $\ri$ is $\rl$. Thus $\rl$ is the field
    $C^*$-algebra introduced by Kastler in \cite[\S6]{Kas}.
  }\end{remark}

\begin{remark}\label{re:lestruct}

  {\rm By Theorem \ref{th:structure} the algebras $\rl(E)$ have a
    rather simple structure: they are isomorphic to $C^*$-algebras of
    the form $\ra\otimes\rb$ with $\ra$ an abelian $C^*$-algebra and
    $\rb$ an elementary $C^*$-algebra. This and \cite[Prop.\ 4.2]{M1}
    imply for example that \emph{$\rl$ is a nuclear
      $C^*$-algebra}. This has been proved before in \cite[Th.\
    3.8]{Bu14} by other techniques.}

\end{remark}

\subsection{Projection morphisms} \label{ss:projmor}

For any subset $\cs\subset\mbg(\Xi)$ we set
 \begin{equation}\label{eq:sumcs}
 \mathring\rl(\cs)\doteq\tsum{E\in\cs}{}\rl(E)\quad\text{and}\quad
 \rl(\cs)=\tsum{E\in\cs}{\rmc}\rl(E)=\text{ closure of }
 \mathring\rl(\cs) .
\end{equation}
By Theorem \ref{th:remarc} if $\cs$ is a subsemilattice of
$\mbg(\Xi)$ then $\mathring\rl(\cs)$ is a $*$-subalgebra and
$\rl(\cs)$ an $\cs$-graded $C^*$-subalgebra of $\rl$.  If $\cs$ is
finite then $\rl(\cs)=\mathring\rl(\cs)$ is a $C^*$-subalgebra of
$\rl$ if and only if $\cs$ is a subsemilattice of $\mbg(\Xi)$, by
Lemma \ref{lm:subsemi} and Theorem \ref{th:remarc}-(3).

The linear direct sum decomposition
$\mathring\rl=\sum_{E\in\mbg(\Xi)}\rl(E)$ gives us a linear projection
$\mathring\cp(\cs)\colon \mathring\rl\to \mathring\rl(\cs)$ for any
$\cs$.  Then
$\mathring\cp(E)=\mathring\cp(\{E\}): \mathring\rl\to\rl(E)$ is the
linear projection determined by the direct sum decomposition and
$\mathring\cp(\cs)=\sum_{E\in\cs}\mathring\cp(E)$.  If
$\mathring\cp(\cs)$ extends to a continuous map $\rl\to\rl(\cs)$
we denote $\cp(\cs)$ the extension, this will be a projection
of $\rl$ onto the subspace $\rl(\cs)$.

Any subspace $E\subset\Xi$ determines three subsemilattices of
interest defined as in \eqref{eq:sas} and the corresponding
$C^*$-subalgebras will be denoted
$\rl_\sse,\rl'_\sse,\rl_{\ss{\supset E}}$. Thus
\begin{equation}\label{eq:deffle}
  \rl_\sse\doteq\tsum{\ss{F\subset E}}{\rmc} \rl(F),\quad
  \rl'_\sse \doteq{\textstyle\sum^\rmc_{F\ss{\not\subset E}}}\rl(F),
  \quad \rl_{\ss{\supset E}} = \tsum{F\supset E}{\rmc} \rl(F).
\end{equation}
Clearly $\rl_\sse $ is a unital $C^*$-subalgebra of $\rl$ graded by
$\mbg(E)$ and $\rl'_\sse$ and $\rl_{\ss{\supset E}}$ are ideals graded
by the obvious subsemilattices. From Theorem \ref{th:gmain} and
Proposition \ref{pr:ecapf} we get:
\begin{compactenum}
\item $\rl = \rl_\sse + \rl'_\sse$ and $\rl_\sse \cap \rl'_\sse = 0$.
\item The projection $\cp_\sse\colon \rl\to\rl_\sse$ determined by
  the preceding direct sum decomposition is a morphism such that
  $\mu=\tsum{\ssf}{}\mu(F)\in\mathring\rl \Rightarrow
  \cp_\sse\mu=\tsum{\ss{F\subset E}}{}\mu(F)$.
\item  $\rl_{\ss{E\cap F}}=\rl_\sse\cap\rl_\ssf \quad\text{and}\quad
  \cp_{\ss{E\cap F}}=\cp_\sse\cp_\ssf=\cp_\ssf\cp_\sse$. 
\end{compactenum}  
Thus \emph{$\cp_E$ is a projection morphism of $\rl$ onto its
  subalgebra $\rl_E$}.

\begin{remark}\label{re:ideals}{\rm In particular, $\rl$ has many
    ideals, fact established in \cite{Bu14} by different
    techniques. $\rl$ has many other ideals, \eg $\rl(\cj)$ is an
    ideal if $\cj\subset\mbg(\Xi)$ satisfies $E\in\cj, F\supset
    E\Rightarrow F\in\cj$.  If 
    $\cj_k=\{E\in\mbg(\Xi) \mid \dim E\geq k\}$ we get a sequence of
    ideals $\rl_{(k)}\doteq\rl(\cj_k)$ such that $\rl_{(0)}=\rl$ and
    $\rl_{(k)}\supset\rl_{(k+1)}$.  }\end{remark}

\begin{remark}\label{re:zero}{\rm If $E=0$ then $\rl_0=\rl(0)=\C$ and
    $\cp_0\colon\rl\to\C$ is a projection morphism and a trace because
    $\cp_0(\mu\oast\nu)=\cp_0(\mu)\cp_0(\nu)=\cp_0(\nu\oast\mu)$.
  }\end{remark}

Let
$\rl_E^\com=\{\nu\in\rl \mid \nu\oast\mu=\mu\oast\nu\ \forall
\mu\in\rl_E\}$. From Theorem \ref{th:comutant} we get:

\begin{proposition}\label{pr:comut}
  If $\Xi$ is finite dimensional then for any subspace $E\subset\Xi$:
  \begin{equation}\label{eq:comutantL}
    \rl_E = \{\mu\in\rl \mid \mu\oast\delta_\xi=\delta_\xi\oast\mu\
    \forall \xi\in E^\sigma\} \quad\text{and}\quad
    \rl_E^\com=\rl_{E^\sigma}.
  \end{equation}
\end{proposition}

We mention a consequence of Lemma \ref{lm:liminf} which says that, if
$\Xi$ is finite dimensional, \emph{the projection morphism $\cp_E$
  associated to a hyperplane $E$ may be thought as a ``translation at
  infinity'' in a direction $\sigma$-orthogonal to $E$}. In
Proposition \ref{pr:Fxisigma} we prove a similar result in a
representation $\rf$ of $\rl$ for $\Xi$ of any dimension. If
$\xi\in\Xi$ we set $\xi^\sigma\doteq(\R\xi)^\sigma$, hence we have
$\xi\in F^\sigma\Leftrightarrow F\subset\xi^\sigma$.  Any hyperplane
$E\subset\Xi$ is of the form $E=\xi^\sigma$ with
$\xi\in E^\sigma\!\setminus\!\{0\}$.

\begin{proposition}\label{pr:xisigma}
  If $\Xi$ is finite dimensional, $\xi\in\Xi\setminus\!\{0\}$, and
  $E=\xi^\sigma$, then $\forall \mu\in\rl$
  \begin{equation}\label{eq:liminfc}
    \cp_{E}\mu=\lim_{r\to\infty} \delta_{r\xi} \oast \mu \oast \delta_{-r\xi}
    \quad\text{in } \mbs'(\Xi) .
  \end{equation}
\end{proposition}

We now extend this interpretation of $\cp_E$ to the case when $E$ is
not a hyperplane. Note that by Lemma \ref{lm:subsemi}-(3) any
$\mu\in\rl$ belongs to some $\rl(\cs)$ with $\cs$ a countable
subsemilattice.

\begin{proposition}\label{pr:limcsc}
  Assume $\Xi$ finite dimensional and let $E\in\mbg(\Xi)$ and
  $\cs\subset\mbg(\Xi)$ a countable subsemilattice.  Then there is
  $\xi\in E^\sigma$ such that $\xi\nin F^\sigma$ if
  $F\in\cs,F\not\subset E$, and for any such $\xi$ and any
  $\mu\in\rl(\cs)$
  \begin{equation}\label{eq:liminfcs}
    \cp_{E}\mu=\lim_{r\to\infty} \delta_{r\xi} \oast \mu \oast
    \delta_{-r\xi} 
    \quad\text{in } \mbs'(\Xi) .
  \end{equation}
\end{proposition}

\begin{proof}
  If $F\not\subset E$ then $E^\sigma\not\subset F^\sigma$ hence
  $E^\sigma\cap F^\sigma$ is a strict subspace of $E^\sigma$ which
  cannot be a countable union of strict subspaces so $\xi$ exists and
  is not zero. Then it suffices to consider $\mu\in \mathring\rl(\cs)$
  hence $\mu=\sum_{F\in\cs}\mu(F)$ with only a finite number of
  nonzero terms hence $\cp_E\mu=\sum_{F\subset E}\mu(F)$. On the other
  hand
  \begin{align*}
    &\delta_{r\xi} \oast \mu \oast \delta_{-r\xi}
    =\tsum{F\in\cs}{} \,\delta_{r\xi}\oast \mu(F) \oast
      \delta_{-r\xi}\\ &=\tsum{F\subset E}{} \,\mu(F) 
      +\tsum{F\not\subset E}{}\,
      \delta_{r\xi}\oast \mu(F) \oast \delta_{-r\xi}
  \end{align*}
and if $r\to\infty$ then each term in the last sum tends to zero by
Lemma \ref{lm:liminf}. 
\end{proof}

\subsection{Field observables}\label{ss:fops}

Our purpose here is to give a description of the field algebra in
terms of fields thus making the link with the Buchholz-Grundling
approach.  We define the field operators at an abstract level as
observables affiliated to $\rl$ and express the components $\rl(E)$
and the subalgebras $\rl_E$ in terms of the fields.

\subsubsection{Fourier algebra and observables}\label{sss:morphext}

Observables affiliated to a $C^*$-algebra $\rc$ may be described by
using $*$-subalgebras $\ca$ of $C_0(\R)$ such that any morphism
$\ca\to\rc$ extends uniquely to a morphism on $C_0(\R)$. For example,
the algebra of continuous rational functions of degree $<0$ has this
property and gives the description of observables in terms of
self-adjoint resolvents. Here we show that the Fourier algebra allows
one to define observables as generators of unitary groups in the
multiplier algebra of $\rc$.

More generally, if $X$ is a locally compact abelian group, we
construct $X$-valued observables affiliated to $\rc$, \ie morphisms
$\phi:\ca(X)\to\rc$, by using the Fourier algebra of $X$. Let $X^*$ be
the space of characters of $X$ equipped with a Haar measure
$\dd\chi$. We define the Fourier transformation by the condition
$\theta(x)=\int_{X^*} \chi(x)\hat\theta(\chi)\dd\chi$.  The set of
Fourier transforms of integrable functions on $X^*$ is a dense stable
under conjugation and translations subalgebra of $C_0(X)$ which, when
equipped with the norm $\|\theta\|=\|\hat\theta\|_{L^1(X^*)}$, becomes
a Banach $*$-algebra $\ca(X)$ continuously embedded into $C_0(X)$
called \emph{Fourier algebra of $X$}.

\begin{proposition}\label{pr:fourieralg}
  If $\rc$ is a $C^*$-algebra and $\phi:\ca(X)\to\rc$ is a
  morphism then $\phi$ has a unique extension to a morphism
  $C_0(X)\to\rc$.
\end{proposition}

\begin{proof}
  Since any homomorphism from a Banach algebra into a $C^*$-algebra is
  continuous \cite[Prop.\ 4.2]{Dal}, the morphism $\phi:\ca(X)\to\rc$
  is continuous.  We may assume that $\rc$ is commutative. If $Y$ is
  the spectrum of $\rc$ then $Y$ is a locally compact space,
  $\rc\simeq C_0(Y)$, so $\phi:\ca(X)\to C_0(Y)$ is a continuous
  morphism.  If $y\in Y$ then the map $\theta\mapsto \phi(\theta)(y)$
  is clearly a character of $\ca(X)$.  By \cite[Cor.\ 23.7]{HR} the
  spectrum of $\ca(X)$ can be identified with $X$ with the help of the
  evaluation characters $\theta\mapsto\theta(x)$, hence there is a
  unique $x=u(y)\in X$ such that $\phi(\theta)(y)=\theta(u(y))$. Thus
  we get a function $u:Y\to X$ such that $\phi(\theta)=\theta\circ u$
  hence $\|\phi(\theta)\|\leq \sup_x|\theta(x)|=\|\theta\|_{C_0(X)}$.
\end{proof}  

Assume that $\ri$ is an essential ideal of a unital $C^*$-algebra
$\rc$ and equip $\rc$ with the \emph{strict topology} defined by the
family of seminorms $\|S\|_J=\|SJ\|+\|JS\|$ where $J$ runs over $\ri$.
If $U=\{U_t\}_{t\in\R}$ is a one parameter group of unitary elements
of $\rc$ then the continuity of the map $t\mapsto U_t\in\rc$ in the
strict topology is equivalent to the norm continuity of
$t\mapsto U_tJ$ for any $J\in\ri$; if this is satisfied we say that
$U$ is \emph{strictly continuous}. Then, by Proposition
\ref{pr:fourieralg}, we may define an observable $A$ affiliated to
$\rc$ by requiring $\theta(A)=\int\hat\theta(t) U_t\dd t$ if
$\theta\in\ca(\R)$ the integral being taken in the strict topology.
We say that $A$ is the \emph{infinitesimal generator of $U$} and we
write $U_t=\rme^{\rmi tA}$.  We reformulate these remarks as a
proposition and give an alternative proof in terms of double
centralizers.

\begin{proposition}\label{pr:affideal}
  Assume that $\ri$ is an essential ideal of the unital $C^*$-algebra
  $\rc$ and $\{U_t\}_{t\in\R}$ is a group of unitary elements of $\rc$
  such that $t\mapsto U_tJ$ is norm continuous for any $J\in\ri$. Then
  $\{U_t\}_{t\in\R}$ has an infinitesimal generator affiliated to
  $\rc$.
\end{proposition}

\begin{proof}

  For each $\theta\in\ca(\R)$ we define continuous maps
  $L_\theta,R_\theta:\ri\to\ri$ by
  \[
    L_\theta(J) =\int\hat\theta(t) U_tJ\dd t \quad\text{and}\quad
    R_\theta(J) =\int\hat\theta(t) JU_t\dd t.
  \]
  Clearly the pair $(L_\theta,R_\theta)$ is a double centralizer of
  the $C^*$-algebra $\ri$ \cite[p.\ 38]{Mur}, \eg
  $J_1L_\theta(J_2)=R_\theta(J_1)J_2$, hence it is an element of the
  multiplier algebra of $\ri$ which is $\rc$ because $\ri$ is an
  essential ideal of $\rc$.  Thus there is an element
  $\theta(A)\in\rc$, formally equal to $\int\check\theta(t) U_t\dd t$,
  such that $L_\theta(J)=\theta(A)J$ and $R_\theta(J)=J\theta(A)$.  It
  is easy to check that $\theta\mapsto\theta(A)$ is a morphism
  \cite[p.\ 39]{Mur} hence by Proposition \ref{pr:fourieralg} this map
  defines a self-adjoint operator $A$ affiliated to $\rc$.
\end{proof}

\subsubsection{Field observables}\label{sss:fops}

Now we go back to our framework. We need two simple facts.

\begin{lemma}\label{lm:essid}
  If $E\in\mbg_\rms(\Xi)$ then $\rl(E)$ is an essential ideal of
  $\rM(E)$.
\end{lemma}

If $E\in\mbg_\rms(\Xi)$ then the \emph{strict topology on $\rM(E)$} is
that associated to $\rl(E)$.

\begin{lemma}\label{lm:afo}
  If $\xi\in E\in\mbg_\rms(\Xi)$ then the family
  $\{\delta_{t\xi}\}_{t\in\R}$ is a strictly continuous one parameter
  group of unitary elements in $\rM(E)$.
\end{lemma}

Both lemmas are easily proven.  Thanks to Lemma \ref{lm:afo} and
Proposition \ref{pr:affideal} we may now define the fields as
observables affiliated to the Kastler algebra $\rM(\Xi)$.

\begin{definition}\label{df:abstrfop}
  The field observable $\phi(\xi)$ at the point $\xi\in\Xi$ is the
  infinitesimal generator of the one parameter group
  $\{\delta_{t\xi}\}_{t\in\R}$ of unitary elements in $\rM(\Xi)$.
\end{definition}

$\phi(\xi)$ is affiliated to any $C^*$-algebra $\rM(E)$ such that
$\xi\in E\in\mbg_\rms(\Xi)$ and a priori it could depend on $E$, so
should be denoted $\phi_E(\xi)$. But if $E\subset F\in\mbg_\rms(\Xi)$
then clearly $\phi_F(\xi)=\phi_E(\xi)$ and then by \eqref{eq:fil}
$\phi_E(\xi)$ is independent of $E$ so we may denote it $\phi(\xi)$.
This is the observable affiliated to $\rM(\Xi)$ defined by
$\theta(\phi(\xi))=\int_\R \hat\theta(t) \delta_{t\xi} \d t$ for all
$\theta\in\ca(\R)$.

\begin{theorem}\label{th:fieldgen}
{\rm(1)}  The $C^*$-algebra generated by $\phi(\xi)$ is
$C^*(\phi(\xi))=\rl(\R\xi)$. \\[1mm]
{\rm(2)} If $\{\xi_1,\dots,\xi_n\}$ is a generating set for the
subspace $E\in\mbg(\Xi)$ then
\begin{equation} \label{eq:fieldgen1} \rl(E) =
  C^*(\phi(\xi_1))\dot\oast C^*(\phi(\xi_2)) \dot\oast\ldots
  \dot\oast C^*(\phi(\xi_n)) .
  \end{equation}
  {\rm(3)} If $E\subset \Xi$ is an arbitrary subspace then
  \begin{equation} \label{eq:fieldgen2}
    \rl_E =C^*(\phi(\xi)\mid \xi\in E) .
  \end{equation}
\end{theorem}

\begin{proof}
  The assertion (1) is trivial if $\xi=0$, so let $\xi\neq0$. If
  $\theta\in\ca(\R)$ then
  $\theta(\phi(\xi))=\int \hat\theta(t) \delta_{t\xi} \d t$ is a
  measure on the line $\R\xi$ acting as follows:
  $\int f \theta(\phi(\xi))=\int_\R\hat\theta(t)f(t\xi)\d t$ if
  $f\in C_0(\R\xi)$.  Thus $\theta(\phi(\xi))$ is absolutely
  continuous, \ie belongs to $L^1(\R\xi)\subset\rl(\R\xi)$. It is
  clear that the $C^*$-algebra generated by $\phi(\xi)$ is the closure
  in $\rl(\R\xi)$ of the set of elements $\theta(\phi(\xi))$ with
  $\theta\in\ca(\R)$, or this set is clearly dense in $L^1(\R\xi)$
  which in turn is dense in $\rl(\R\xi)$. This finishes the proof of
  (1). Then (1) and (3) of Theorem \ref{th:remarc} imply (2). Finally,
  (3) is an easy consequence of the definition \eqref{eq:deffle}.
\end{proof}

We adopt \eqref{eq:fieldgen2} as definition of $\rl_E$ \emph{for any
  subset} $E\subset\Xi$.  Clearly $ E\subset F$ $\Rightarrow$
$\rl_E\subset \rl_F$. We have $\rl_\emptyset=0$, $\rl_0=\C$ (because
$\phi(0)=0$), and if $E=\{\xi\}$ with $\xi\neq0$ then
$\rl_E=C^*(\phi(\xi))$ which is not unital.  Below we say that two
vectors are collinear if they generate the same subspace; so $0$ is
not colinear with any nonzero vector.

\begin{proposition}\label{pr:gen123}
  Let $E\subset\Xi$ a \emph{finite set} and $\ce$ the set of
  \emph{linear subspaces} generated by the subsets of $E$; then
  $\rl_E=\sum_{F\in\ce}\rl(F)$. If $\xi\in\Xi$ then $\phi(\xi)$ is
  affiliated to $\rl_E$ if and only if $\xi$ is collinear with some
  $\eta\in E$. 
\end{proposition}

\begin{proof}
  By (4) of Theorem \ref{th:remarc} the sum $\sum_{F\in\ce}\rl(F)$
  is a closed linear subspace of $\rl$. If $F',F''\in\ce$ are
  generated by the subsets $f',f''$ of $E$ then $F'+F''$ is generated
  by the subset $f'\cup f''$ hence belongs to $\ce$ and
  $\sum_{F\in\ce}\rl(F)$ is a $C^*$-algebra by (3) of Theorem
  \ref{th:remarc}.  $\rl_E$ is the $C^*$-algebra generated by the
  operators $\phi(\xi)$ with $\xi\in E$ and if
  $\xi_1,\dots,\xi_n\in E$ generate the subspace $F$ and
  $u_1,\dots,u_n\in C_0(\R)$ then
  $u_1(\phi(\xi_1))\oast\dots\oast u_n(\phi(\xi_n))$ belongs to
  $\rl(F)$ by \eqref{eq:fieldgen1} hence
  $\rl_E\subset\sum_{F\in\ce}\rl(F)$. We have equality here because
  $\rl(F)\subset\rl_E$ by the same argument.  The last assertion
  follows from Theorem \ref{th:remarc}-(1).
\end{proof}

\begin{remark}\label{re:naff}
  {\rm In particular, if $\xi,\eta\in\Xi$ are linearly independent and
    $\zeta\in\Xi$ is not collinear with any of them, then the operator
    $\phi(\zeta)$ is not affiliated to
    $C^*(\phi(\xi),\phi(\eta))$.
    Note also that if $\dim E>1$ then
    $\rl_E$ is not generated by a finite set of operators $\phi(\xi)$
    hence the definitions (1.3) and (3.32) in \cite{GI19} are
    wrong, but this does not play any role in later arguments there,
    it suffices to change the quoted relations in
    $\Phi_E=C^*(\phi(\xi)\mid \xi\in E)$. }
\end{remark}

\subsection{Field algebra in a representation} \label{ss:fcaw}

Here we describe Hilbert space realizations of the abstract field
algebra $\rl$ obtained via representations $W$ of $\Xi$. We define
them by transporting $\rl$ with the help of the $C^*$-isomorphism
$W\colon\rM(\Xi)\to\rwk(\Xi)$.

In Definition \ref{df:abstrfop} we introduced the ``abstract'' field
operator $\phi(\xi)$ as an observable affiliated to $\rM(\Xi)$. On the
other hand, in \S\ref{ss:i1} we defined the field operator
$\phi_\ssw(\xi)$ as a self-adjoint operator on the Hilbert space
$\ch$. These two operators are related by the algebraic representation
$W$ of Proposition \ref{pr:wmorph}: indeed, if we take
$\mu=\delta_{t\xi}$ in \eqref{eq:wme} we get
$W(\delta_{t\xi})=W(t\xi)$ which implies
$W(\phi(\xi))=\phi_\ssw(\xi)$. Thus the image of the ``abstract''
field operator $\phi(\xi)$ through the $C^*$-algebra representation
$W$ is the field operator $\phi_\ssw(\xi)$. If there is no risk of
confusion, we identify these operators
$\phi_\ssw(\xi)\equiv\phi(\xi)$.

In the next definition we use Proposition \ref{pr:wmorph} and Theorem
\ref{th:fieldgen} which gives a description \`a la Buchholz-Grundling
of the field algebra and its components in the representation $W$.

\begin{definition}\label{df:wkfield}
  Let $W$ be a representation of $\Xi$ on the Hilbert space
  $\ch$. \\[1mm]
  {\rm(1)} The \emph{field $C^*$-algebra $\rf$ 
    of $\Xi$ in the representation $W$} is the norm closure of the set
  of operators $W(\mu)$ with $\mu\in L^1(E)$ for some $E\in\mbg(\Xi)$.
  We have $\rf= C^*(\phi(\xi)\mid \xi\in\Xi)$.\\[1mm]
  {\rm(2)} If $E\subset\Xi$ is any linear subspace then
  $\rf_\sse$ 
  is the norm closure of the set of operators $W(\mu)$ with
  $\mu\in L^1(F)$ for some $F\in\mbg(E)$. We have
  $\rf_\sse =C^*(\phi(\xi)\mid \xi\in E)$. \\[1mm]
  {\rm(3)} If $E\in\mbg(\Xi)$ then the norm closure of the set of
  operators $W(\mu)$ with $\mu\in L^1(E)$ is a $C^*$-algebra
  $\rf(E)$.
  If $\{\xi_1,\dots,\xi_n\}$ is a generating set for $E$ then
  \begin{equation} \label{eq:b-g-e2}
    \rf(E)=C^*(\phi(\xi_1))\cdot C^*(\phi(\xi_2)) \cdot\ldots\cdot
    C^*(\phi(\xi_n)) .
  \end{equation}
\end{definition}

\eqref{eq:b-g-e2} follows from Theorem \ref{th:fieldgen} and implies
the last assertions of (1) and (2). As in the purely algebraic
framework we call the algebras $\rf(E)$ \emph{components} of $\rf$.
Clearly
\begin{equation}\label{eq:rfE}
  \rf_\sse=\tsum{F\subset E}{\rmc} \rf(F) \quad
  \text{for any linear subspace } E\subset\Xi. 
\end{equation}

\begin{remark}\label{re:just}{\rm The spaces $\rf,\rf_\sse,\rf(E)$
    from Definition \ref{df:wkfield} depend of $\Xi$ and $W$ but in
    general we do not specify this explicitly unless this is necessary
    and then we use the notations
    $\rwf^\ssxi, \rwf^\ssxi_\sse, \rwf^\ssxi(E)$ or just
    $\rf^\ssxi, \rf^\ssxi_\sse, \rf^\ssxi(E)$. The last three simpler
    notations are justified by Proposition \ref{pr:independence}.  If
    $\Xi$ is clear from the context we set $\rf=\rf^\ssxi$,
    etc.}\end{remark}

\begin{remark}\label{re:sympsubsp}

  {\rm The restriction of $W$ to a symplectic subspace $E\subset\Xi$
    is a representation of $E$ on $\ch$ still denoted $W$. If $F$ is a
    symplectic subspace and $E\subset F$ then
    $\rwf^\sse\subset \rwf^\ssf$ and \eqref{eq:fil} implies
    $\rwf^\ssxi=\overline{\cup}_{E\in\mbg_s(\Xi)}\rwf^\sse$, where
    $\overline{\cup}$ denotes the closure of the union.}
\end{remark}

Below, the relation $\Phi(\phi_1(\xi))=\phi_2(\xi)$ means
$\Phi\big(u(\phi_1(\xi))\big)=u(\phi_2(\xi))$ $\forall u\in C_0(\R)$
and is equivalent to $\Phi(R_1(\xi))=R_2(\xi)$ with
$R_k(\xi)=(\phi_k(\xi)-\rmi)^{-1}$.

\begin{proposition}\label{pr:independence}
  The algebras $\prescript{\ssw_{\!1}\!\!}{}\rf^\ssxi$ and
  $\prescript{\ssw_{\!2}\!\!}{}\rf^\ssxi$ associated to two
  representations $W_1,W_2$ of $\Xi$ are canonically isomorphic in the
  following sense: if $\phi_k(\xi)$ are the field operators in the
  representation $W_k$, then there is a unique morphism
  $\Phi\colon\prescript{\ssw_{\!1}\!\!}{}\rf^\ssxi \to
  \prescript{\ssw_{\!2}\!\!}{}\rf^\ssxi$ such that
  $\Phi(\phi_1(\xi))=\phi_2(\xi)$ for all $\xi\in\Xi$; this morphism
  is an isomorphism.
\end{proposition}

\begin{proof}
  Assume first $\Xi$ finite dimensional and let $W_0:\Xi\to U(\ch_0)$
  an irreducible representation and $W:\Xi\to U(\ch)$ an arbitrary
  representation. By the Stone-Von Neumann theorem and with the
  notations of Remark \ref{re:stvon} we have
\[
  Vu_1(\phi(\xi_1))\dots u_n(\phi(\xi_n))V^*
  =\big(u_1(\phi_0(\xi_1))\dots u_n(\phi_0(\xi_n))\big)\otimes1_\ck
\]
for any $\xi_1,\dots,\xi_n\in\Xi$ and $u_1,\dots,u_n\in C_0(\R)$.  The
map $A\mapsto V^{-1}(A\otimes1_\ck)V$ is a morphism
$B(\ch_0)\to B(\ch)$ and, by the preceding relation, its restriction
to $\prescript{\ssw_{\!0}\!\!}{}\rf^\ssxi$ is an isomorphism
$\Phi\colon \prescript{\ssw_{\!0}\!\!}{}\rf^\ssxi\to \rwf^\ssxi$ which
satisfies $\Phi (\phi_0(\xi))=\phi(\xi)$ for all $\xi\in\Xi$ and is
uniquely determined by this relation.  If $\Xi$ is infinite
dimensional for each $E\in\mbg_s(\Xi)$ we have a canonical isomorphism
$\Phi_E\colon\prescript{\ssw_{\!1}\!\!}{}\rf^\sse \to
\prescript{\ssw_{\!2}\!\!}{}\rf^\sse$ and by uniqueness
$\Phi_F|\rf_E^{\ssw_1}=\Phi_E$ if $E,F\in\mbg_s(\Xi)$ and
$E\subset F$. Then we use Remark \ref{re:sympsubsp}.
\end{proof}

\begin{remark}\label{re:isonot}

  {\rm If $\dim\Xi<\infty$ then by the Stone-Von Neumann theorem
    $\rwf^\ssxi\cong \prescript{\ssw_{\!0}\!\!}{}\rf^\ssxi\otimes
    1_{\ck}$ and $\rwf^\ssxi(\Xi)\cong K(\ch_0)\otimes 1_{\ck}$
    with the notations of Remark \ref{re:stvon}.

  }\end{remark}

In the next proposition we describe some simple properties of the
algebras $\rf(E)$.

\begin{proposition}\label{pr:properties}
  Let $E,F$ be finite dimensional subspaces of $\Xi$.
\begin{compactenum}
\item[{\rm(a)}] $\rf(E)$ is a non-degenerate $C^*$-subalgebra of
  $B(\ch)$.
\item[{\rm(b)}] $\rf(0)=\C$ and this is the only $\rf(E)$ which
  is unital.
\item[{\rm(c)}] $\rf(E)$ is abelian if and only if $E$ is
  isotropic and then $\rf(E)\cong C_0(E^*)$.  
\item[{\rm(d)}] If $S\in\rf(E)$ and $\xi\in E^\sigma$ then
  $SW(\xi)=W(\xi)S$.
\item[{\rm(e)}] If $S\in\rf(E)$ and $T\in\rf_{E^\sigma}$ then $ST=TS$.
\end{compactenum}
\end{proposition}

\begin{proof}
  Let $\lambda_E$ be a Lebesgue measure on $E$ and $\rho$ an
  integrable function on $E$ with $\int_{E}\rho\lambda_E=1$. Then for
  $\xi\in E$ and $\veps >0$ consider the function
  $\rho^\varepsilon_\xi(\eta) =
  \varepsilon^{-n}\rho((\eta-\xi)/\veps)$, where $n=\dim E$, and let
  $\mu^\varepsilon_\xi(\dd\eta) =
  \rho^\varepsilon_\xi(\eta)\lambda_E(\dd\eta)$. Then we have
    \begin{equation*}
    W(\mu^\varepsilon_\xi)=
    \int_E W(\eta)
    \varepsilon^{-n}\rho((\eta-\xi)/\veps)\lambda_E(\dd\eta) = 
    \int_E W(\xi+\varepsilon\eta) \rho(\eta)\lambda_E(\dd\eta)
  \end{equation*}
  hence
  \begin{equation}\label{eq:approx}
    \slim_{\varepsilon\to0}W(\mu^\varepsilon_\xi)=W(\xi)
  \end{equation}
  For example $\slim_{\varepsilon\to0}W(\mu_0^\varepsilon)=1$, which
  implies property (a). The first assertion of (b) is obvious and the
  second follows from $\rf(E)\cap\rf(F)=0$ if $E\neq F$. If $\rf(E)$
  is abelian then $W(\mu)W(\nu)=W(\nu)W(\mu)$ for all
  $\mu,\nu\in L^1(E)$. By taking $\mu=\mu^\varepsilon_\xi$ and
  $\nu=\mu^\varepsilon_\eta$ with $\xi,\eta\in E$ and making
  $\varepsilon\to0$ we get $W(\xi)W(\eta)=W(\eta)W(\xi)$ and thus
  $(\e^{-\rmi\sigma(\xi,\eta)}-1)W(\eta)W(\xi)=0$ hence
  $\e^{-\rmi\sigma(\xi,\eta)}=1$ for all $\xi,\eta\in E$, so $E$ is
  isotropic. The converse assertion is obvious and the isomorphism
  with $C_0(E^*)$ is discussed in Theorem \ref{th:structure}. The
  assertions (d) and (e) are obvious. 
\end{proof}

Clearly Theorem \ref{th:remarc} implies Theorem \ref{th:Fgrad} and
Theorem \ref{th:idec} is a consequence of Theorem \ref{th:gmain} and
Proposition \ref{pr:ecapf}.
We now show that certain projections $\cp_\sse$ may be thought as
``translations at infinity''.

\begin{lemma}\label{lm:liminfinity}
  If $F\in\mbg(\Xi)$, $\xi\nin F^\sigma$, and $T\in\rf(F)$ then
  $\slim_{r\to\infty} T W(r\xi)=0$
\end{lemma}

\begin{proof}
  We have to prove $\lim_{r\to\infty} \|T W(r\xi)f\|=0$
  $\forall f\in\ch$.  We may assume $\Xi$ finite dimensional, if not
  we replace it by a finite dimensional symplectic subspace of $\Xi$
  containing $F$ and $\xi$.  Then it suffices to consider $T=W(\mu)$
  with $\mu\in L^1(F)$.  If $\nu=\mu^*\oast\mu$ then
  \[
    \|W(\mu)W(r\xi)f\|^2 =
    \braket{f}{W(r\xi)^*W(\mu)^*W(\mu)W(r\xi)f}
    = \braket{f}{W(r\xi)^*W(\nu)W(r\xi)f}
\]
and from \eqref{eq:products} we have
\[
  W(r\xi)^*W(\nu)W(r\xi)=W(-r\xi)W(\nu)W(r\xi)
  =W(\delta_{-r\xi}\oast\nu\oast\delta_{r\xi})
  =W\big(\e^{\rmi\sigma(r\xi,\cdot)}\nu \big)
\]
with $\nu\in L^1(E)$, hence
\[
  \|W(\mu)W(\xi)f\|^2
  = \braket{f}{W\big(\e^{\rmi\sigma(r\xi,\cdot)}\nu \big)f}
  =\int_E \e^{\rmi r\sigma(\xi,\eta)} \braket{f}{W(\eta)f} \nu(\dd\eta) .
\]
The function $\theta\doteq\sigma(\xi,\cdot)|F$ is a nonzero linear
form on $F$ and the integral above is of the form
$\int_F\e^{\rmi r\theta(\eta)} u(\eta)\dd\eta$ with $u\in L^1(F)$ and
if $r\to\infty$ such an integral tends to zero by the Riemann-Lebesgue
lemma, which finishes the proof
\end{proof}

\begin{proposition}\label{pr:Fxisigma}
  Let $\xi\in\Xi\setminus\{0\}$ and $E=\xi^\sigma$. Then for any
  $T\in\rf$
  \begin{equation}\label{eq:Fliminfc}
    \cp_\sse T=\slim_{r\to\infty} W(r\xi)^*T W(r\xi) .
  \end{equation}
\end{proposition}

\begin{proof}
  $\cp_\sse$ is continuous so it suffices to prove
  \eqref{eq:Fliminfc} for $T\in\mathring\rf$ and for this it suffices
  to consider the case $T\in\rf(F)$ for some $F\in\mbg(\Xi)$. By (d)
  of Proposition \ref{pr:properties} for $F\subset E$ we have
  $TW(r\xi)=W(r\xi)T$ $\forall r\in\R$ hence
  $\slim_{r\to\infty} W(r\xi)^*T W(r\xi)=S$. If
  $F\not\subset E=\xi^\sigma$ then $\xi\nin F^\sigma$ hence from Lemma
  \ref{lm:liminfinity} we get
  $\slim_{r\to\infty} W(r\xi)^*T W(r\xi)=0$.
\end{proof}

The following proposition is proved by an argument similar to that of
Proposition \ref{pr:limcsc}; the last assertion is a consequence of
Lemma \ref{lm:subsemi}-(3).

\begin{proposition}\label{pr:limcsc-b}
  Let $\Xi$ be finite dimensional, $E\in\mbg(\Xi)$, and
  $\cs\subset\mbg(\Xi)$ a countable subsemilattice.  Then 
  $\exists\,\xi\in E^\sigma$ such that $\xi\nin F^\sigma$ if
  $F\in\cs,F\not\subset E$, and for any such $\xi$
  \begin{equation}\label{eq:liminfcs-b}
    \cp_\sse T=\slim_{r\to\infty} W^*(r\xi) T W(r\xi) \quad
    \forall T\in\rf(\cs) .
  \end{equation}
  In particular, for any $T\in\rf$ and $E\in\mbg(\Xi)$ there is
  $\xi\in E^\sigma$ such that
    \begin{equation}\label{eq:liminfcs-c}
    \cp_\sse T=\slim_{r\to\infty} W^*(r\xi) T W(r\xi) .
  \end{equation}
\end{proposition}

The next result improves for finite dimensional $\Xi$ the statements
(d) and (e) of Proposition \ref{pr:properties}. If $\ca\subset\rf$
then $\ca^\com\doteq\{A\in\rf \mid [A,B]=0\ \forall B\in\ca\}$ is its
commutant in $\rf$.

\begin{theorem}\label{th:comutant}
  If $\Xi$ is finite dimensional then for any subspace $E\subset\Xi$
  we have
\begin{equation}\label{eq:comutant}
  \rf_{\ss{E^\sigma}}= \{T\in\rf \mid [T,W(\xi)]=0\ \forall \xi\in E\}
  = \{T\in\rf \mid [T,S]=0\ \forall S\in \rf_\sse\}.
\end{equation}
In particular $\rf_\sse^\com\equiv(\rf_\sse)^\com=\rf_{\ss{E^\sigma}}$
and
\begin{equation}\label{eq:comutant2}
  \rf_\sse= \{T\in\rf \mid [T,W(\xi)]=0\ \forall \xi\in E^\sigma\}.
\end{equation}
\end{theorem}

\begin{proof}
  If $T\in\rf_\sse$ then $[T,W(\xi)]=0\ \forall \xi\in E^\sigma$ by
  Proposition \ref{pr:properties}-(d). Reciprocally, assume that $T$
  has the last property. Then by the last assertion of Proposition
  \ref{pr:limcsc-b} we get $\cp_\sse T=T$ which is equivalent to
  $T\in\rf_\sse$. This proves the first equality in
  \eqref{eq:comutant}.  To prove the second, note that by Definition
  \ref{df:wkfield}-(2) $\rf_\sse$ is the $C^*$-algebras generated by
  the $u(\phi(\xi))$ with $u\in C_0(\R)$ and $\xi\in E$ hence
  $T\in\rf_\sse^\com$ if and only if $[T,u(\phi(\xi))]=0$ for all such
  $u$ and $\xi$ and this in turn is equivalent to
  $[T,\e^{\rmi\phi(\xi)}]=0$ for all $\xi\in E$.
\end{proof}

\begin{corollary}\label{co:comutant}
  If $X$ is a Lagrangian subspace of $\Xi$ then $\rf_\ssx^\com=\rf_\ssx$.
\end{corollary}

An explicit description of the algebra $\rf_X$ is given in Proposition
\ref{pr:rfyz}.

We mention a fact which allows one to consider crossed products
of graded $C^*$-subalgebras $\rf(\cs)$ by the action of finite
dimensional subspaces of $\Xi$.

\begin{proposition}\label{pr:normcont} 
  {\rm(1)} If $\cs\subset\mbg(\Xi)$ then
  $W(\xi)^*\rf(\cs)W(\xi)=\rf(\cs)$  $\forall\xi\in\Xi$. \\[1mm]
  {\rm(2)} $\forall\xi\in\Xi$ the map $T\mapsto W^*(\xi)TW(\xi)$ is an
  automorphism of $\rf$ and for any $T\in\rf$ the map
  $\xi\mapsto W^*(\xi)TW(\xi)\in\rf$ is norm continuous on finite
  dimensional subspaces of $\Xi$.
\end{proposition}

\begin{proof}
  We have $W(\xi)^*W(\eta)W(\xi)= \e^{\rmi\sigma(\xi,\eta)}W(\eta)$
  hence if $E\in\mbg(\Xi)$ and $\mu\in L^1(E)$
\begin{equation*}
  W(\xi)^* W(\mu) W(\xi) = \int_E W(\xi)^*W(\eta)W(\xi)
  \mu(\eta)\d_E\eta
  = \int_E W(\eta) \e^{\rmi\sigma(\xi,\eta)}\mu(\eta)\d_E\eta
\end{equation*}
which clearly implies (1) from which it follows that
$W(\xi)^*\rf W(\xi)=\rf$ hence the first part of (2).  Then the
relation above implies
\[
  \|W(\xi)^* W(\mu) W(\xi)-W(\zeta)^* W(\mu) W(\zeta)\|\leq
  \int_E
  \left|\e^{\rmi\sigma(\xi,\eta)}-\e^{\rmi\sigma(\zeta,\eta)}\right
  | |\mu(\eta)|\d_E\eta
\]
hence $\xi\mapsto W(\xi)^* W(\mu) W(\xi)\in B(\ch)$ is norm continuous
on finite dimensional subspaces of $\Xi$ which implies the last part
of (2) by a density argument. 
\end{proof}

\section{Finite dimensional symplectic spaces }\label{s:fdhvz}
\protect\setcounter{equation}{0}
\renewcommand\thesubsubsection{\thesubsection.\arabic{subsubsection}}

\subsection{Intrinsic description of
  \texorpdfstring{$\rf(E)$}{F(E)}}\label{ss:idc}

This subsection is devoted to the proof Theorem \ref{th:mainthfd}.  We
use the first characterization of $\rf(E)$ from Definition
\ref{df:wkfield}, so $\rf(E)$ is the norm closure in $B(\ch)$ of the
set of operators $W(f)=\int_E W(\xi) f(\xi)\d\xi$ with $f\in L^1(E)$.

We denote $\re(E)$ the set of operators $T$ satisfying the three
conditions \eqref{eq:maineq}. Clearly $\re(E)$ is a $C^*$-algebra and
we have to prove $\rf(E)=\re(E)$.  This is easy in two cases. First,
if $E={ 0}$, then $E^\sigma =\Xi$ hence $\re(0) =\C$ because $W$ is
irreducible, so this case is trivial. The second case is $E=\Xi$: then
$E^\sigma ={ 0}$ so $\re(\Xi)=K(\ch)=\rf(\Xi)$ by the Kolmogorov-Riesz
and Stone-Von Neumann theorems (or see Theorem \ref{th:RKR}).

It is easy to show that $\rf(E)\subset\re(E)$ (\conf Proposition
\ref{pr:tatar}) so Theorem \ref{th:mainthfd} is equivalent to
\begin{equation}\label{eq:re:2}
  \text{the set of operators } W(\mu) \text{ with } \mu \in L^1(E)
  \text{ is dense in }\re(E).
\end{equation}
The rest of this section is devoted to the proof of this fact for
$E\neq0,\Xi$.

\ssubsection{}\label{sss:312}
\addtocontents{toc}{\SkipTocEntry}

For this we need to go beyond measures and define the operators
$W(\mu)$ for $\mu$ temperate distributions on $\Xi$. This Weyl
pseudo-differential calculus requires some supplementary formalism, we
refer to \cite{Fol} for details.

Let $\mbs(\Xi)$ be the space of Schwartz test functions on $\Xi$ and
$\mbs'(\Xi)$ its dual, the space of tempered distributions.  If
$\mu\in\mbs'(\Xi)$ and $f\in\mbs(\Xi)$ the value $\mu(f)$ of $\mu$ at
$f$ is denoted $\bracet{f}{\mu}$ but we also write this as
$\int_\Xi f(\xi)\mu(\xi)\dd\xi$, which is often a convenient abuse of
notation. We set $\braket{f}{\mu}=\bracet{\bar{f}}{\mu}$.  We have
continuous linear embeddings
$\mbs(\Xi)\subset L^2(\Xi)\subset\mbs'(\Xi)$. By \eqref{eq:distrib} we
also have $\rf\subset\rk\subset\mbs'(\Xi)$.

Although the twisted convolution product defined on $M(\Xi)$ cannot be
extended to $\mbs'(\Xi)$, it is possible to define $\mu\oast\nu$ for
$\mu,\nu \in \mbs'(\Xi)$ if one of them has compact support (as in the
case of the usual convolution).  And the relations \eqref{eq:products}
remain valid for $\mu\in\mbs'(\Xi)$.

The map $\xi\mapsto\xi_\sigma=\sigma(\cdot,\xi)$ is a linear
isomorphism $\Xi\to \Xi^*$ hence the Fourier transform of a
distribution on $\Xi$ is naturally identified with a distribution on
$\Xi$. More precisely, $\Xi$ is equipped with the symplectic measure
$\d\xi$ and the Fourier measure $\dbar\xi=(2\pi)^{-n}\d\xi$; then the
\emph{symplectic Fourier transform} of $\mu\in L^1(\Xi)$ is
\begin{equation}\label{eq:FTS}
  (\cf_{\!\sigma}\mu)(\xi)\equiv\what\mu(\xi) \doteq
  \int_\Xi \e^{\rmi\sigma(\xi,\eta)}\mu(\eta)\dbar\eta .
\end{equation}
The map $\cf_{\!\sigma}:\mbs(\Xi)\to\mbs(\Xi)$ satisfies
$\cf_{\!\sigma}^2=1$ hence is a homeomorphism and extends to a
self-adjoint unitary operator in $L^2(\Xi)$ and to a homeomorphism in
$\mbs'(\Xi)$.

For any linear subspace $E\subset\Xi$ we choose and fix a Lebesgue
measure $\lambda_E$ or $\d_E\xi$ on $E$.  We think of $\lambda_E$ as a
distribution on $\Xi$, namely
\[
  \bracet{f}{\lambda_E}=\int_E f\lambda_E=\int_E f(\xi)\d_E\xi.
\]  
According to the conventions from \S\ref{sss:noterm} we embed
isometrically $L^1(E)\subset M(E)$ hence if $\mu$ is an absolutely
continuous measure on $E$ of the form $\mu(\d\xi)=f(\xi)\d_E\xi$ then
we may use both notations $W(\mu)$ and $W(f)$.

We could continue to work with an arbitrary irreducible representation
of $\Xi$ but it is simpler, and there is no loss of generality, if we
assume that $W$ is the Schr\"odinger representation associated to a
Lagrangian decomposition $\Xi=X\oplus X^*$ (we use the notations of
\S\ref{ss:HLdec}).  Clearly then if $f,g\in\mbs(X)$ then the map
$\xi\mapsto \braket{f}{W(\xi)g}$ belongs to ${\mbs}(\Xi)$ so if
$\mu \in \mbs'(\Xi)$ we can define $W(\mu)=\int W\mu$ as a continuous
linear operator $\mbs(X)\rarrow\mbs'(X)$ by setting
\begin{equation}\label{eq:Wmu}
  \braket{f}{W(\mu)g}
  = \int_{\Xi}\braket{f}{W(\xi)g}\mu(\xi)\dd\xi \quad \forall
  f,g\in\mbs(X).  
\end{equation}
We get a map ${\mbs}'(\Xi)\ni\mu\mapsto W(\mu)\in B(\mbs(X),\mbs'(X))$
which is bijective \cite[Th.\ 1.30]{Fol}.  We denote $T\mapsto T^{\#}$
its inverse map, hence $T=W(\mu)\Leftrightarrow\mu = T^{\#}$.  Since
$\ch=L^2(X)$ we have $B(\ch)\subset B(\mbs(X),\mbs'(X))$ hence to
each $T\in B(\ch)$ is associated $T^{\#} \in {\mbs}'(\Xi)$.

We mention that if $\nu\in\mbs'(\Xi)$ is compactly supported then
$W(\nu)$ leaves $\mbs(X)$ invariant and extends to a continuous
operator on $\mbs'(X)$. And then the relation
$W(\mu\oast\nu)=W(\mu)W(\nu)$ remains valid for any
$\mu\in\mbs'(\Xi)$.

\ssubsection{}\label{sss:313}
\addtocontents{toc}{\SkipTocEntry}

We now start the proof of \eqref{eq:re:2}. First note the following
fact.

\begin{lemma}\label{l:benji}
  If $T\in B(\ch)$ and $[W(\xi),T] = 0$ for all $\xi\in E^\sigma$ then
  $\supp T^{\#} \subset E$.
\end{lemma}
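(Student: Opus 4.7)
The plan is to translate the commutation hypothesis into a multiplicative relation for the distribution $T^{\#}\in\mbs^*(\Xi)$ and then extract the support condition.

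First, I would compute the Weyl symbol of $W(\xi)TW(\xi)^*$ for $\xi\in\Xi$. Since $T=W(T^{\#})$ extends to a continuous map $\mbs\to\mbs^*$ and $W(\delta_\xi)$ preserves $\mbs$ and extends continuously to $\mbs^*$, the composite $W(\xi)TW(\xi)^*$ equals $W(\delta_\xi\times T^{\#}\times\delta_{-\xi})$. Using the third identity of \eqref{eq:products}, $\delta_\xi\times\mu\times\delta_{-\xi}=\e^{\rmi\xi^\sigma}\mu$ for any $\mu\in\mbs^*(\Xi)$, one gets
\[
  \bigl(W(\xi)TW(\xi)^*\bigr)^{\#}
  =\e^{\rmi\sigma(\xi,\cdot)}\,T^{\#}.
\]
Now $[W(\xi),T]=0$ is equivalent to $W(\xi)TW(\xi)^*=T$, so the hypothesis gives
\[
  \bigl(\e^{\rmi\sigma(\xi,\cdot)}-1\bigr)\,T^{\#}=0 \quad\text{in }\mbs^*(\Xi)\ \ \forall\,\xi\in E^\sigma.
\]

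Next I would deduce the support condition. Take any $\zeta_0\in\Xi\setminus E$. Since $E=(E^\sigma)^\sigma$ (by non-degeneracy of $\sigma$ and finite dimensionality of the relevant spaces), there exists $\xi\in E^\sigma$ with $\sigma(\xi,\zeta_0)\neq 0$. Rescaling $\xi$ by a small $t>0$ if necessary, we may assume $\sigma(\xi,\zeta_0)\in(0,2\pi)$; by continuity, $\sigma(\xi,\zeta)\in(0,2\pi)$ for $\zeta$ in an open neighborhood $U$ of $\zeta_0$. On $U$ the smooth function $g(\zeta)\doteq\e^{\rmi\sigma(\xi,\zeta)}-1$ is nowhere zero, hence $1/g$ is smooth on $U$. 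For any test function $\varphi\in\mbs(\Xi)$ with $\supp\varphi\subset U$, the function $\varphi/g$ lies in $C^\infty_\rmc(U)\subset\mbs(\Xi)$, and
\[
  \bracet{\varphi}{T^{\#}}=\bracet{(\varphi/g)\,g}{T^{\#}}
  =\bracet{\varphi/g}{\,g\,T^{\#}\,}=0.
\]
Thus $T^{\#}$ vanishes on $U$, so $\zeta_0\notin\supp T^{\#}$.

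The main (essentially only) subtlety is the passage from the pointwise condition $\sigma(\xi,\zeta_0)\neq 0$ to a multiplier $\e^{\rmi\sigma(\xi,\cdot)}-1$ that is invertible in a whole neighborhood; this is handled by the rescaling step above, which is why the linear (not just set-theoretic) structure of $E^\sigma$ is used. Everything else is a direct manipulation of the Weyl calculus and the twisted convolution identities already recorded in \eqref{eq:products}.
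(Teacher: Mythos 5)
Your proof is correct and follows essentially the same route as the paper's: both use the identity $\delta_\xi\times\mu\times\delta_{-\xi}=\e^{\rmi\xi^\sigma}\mu$ together with the injectivity of $\mu\mapsto W(\mu)$ to get $(\e^{\rmi\sigma(\xi,\cdot)}-1)T^{\#}=0$ for $\xi\in E^\sigma$, and then, for a point $\zeta_0\nin E$, pick (and rescale) $\xi\in E^\sigma$ so that the multiplier is nonvanishing on a neighborhood of $\zeta_0$ and divide by it to conclude $T^{\#}$ vanishes there. Your explicit rescaling into $(0,2\pi)$ is a slightly more careful rendering of the paper's step ``there is $\xi_0\in E^\sigma$ with $\e^{\rmi\sigma(\xi_0,\eta_0)}\neq1$'', but the argument is the same.
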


\begin{proof}
  Denote $T^{\#}\equiv\mu\in\mbs'(\Xi)$. Then $T=W(\mu)$ and by
  \eqref{eq:products} we have
\begin{align*}
T &=  W(\xi)TW(\xi)^*= W(\xi)W(\mu)W(\xi)^* =W(\xi)W(\mu)W(-\xi)\\
& = W(\delta_\xi)W(\mu)W(\delta_{-\xi})
= W(\delta_\xi\oast\mu\oast\delta_{-\xi})
= W(\e^{\rmi\xi_\sigma}\mu).
\end{align*}
Since $W:{\mbs}'(\Xi)\rarrow B({\mbs}(X),{\mbs}'(X))$ is bijective,
we get $(\e^{\rmi\xi_\sigma} -1)\mu$ $ = 0$ $\forall\xi \in E^\sigma$.
Let now $\eta^\circ\nin E$. Then there is a $\xi^\circ \in E^\sigma$
such that $\e^{\rmi\sigma(\eta^\circ,\xi^\circ)} \neq 1$. So there is
a compact neighborhood $V$ of $\eta^\circ$ such that
$\e^{\rmi\sigma(\eta,\xi^\circ)}-1 \neq 0$ for all $\eta \in V$. If
$\theta \in C_{\rmc}^\infty(V)$ then
$\varphi\doteq\theta/(\e^{\rmi\xi^\circ_\sigma}-1)\in
C_{\rmc}^\infty(V)$ so by the previous computation 
$\theta\mu=\vphi(\e^{\rmi\xi^\circ_\sigma}-1)\mu=0$.  It follows that 
$\eta^\circ\nin\supp \mu$.
\end{proof}

We construct now a regularization of $T\in B(\ch)$.
Let $\theta \in L^1(\Xi)$ with $\int_\Xi \theta(\xi)\dbar \xi=1$,
denote $\theta_\veps(\xi)=\veps^{-2n}\theta(\xi/\veps)$ with
$2n=\dim\Xi$, and define for $\veps>0$
\begin{equation}                     \label{e:Teps}
T_\varepsilon\doteq\int_\Xi W(\xi)\,T\,W(\xi)^*\,\theta_\veps(\xi)
\dbar\xi =\int_\Xi W(\veps\xi)\,T\,W(\veps\xi)^*\,\theta(\xi)\dbar\xi.
\end{equation}

\begin{lemma}\label{l:Teps}
{\rm(i)}
If\; $\lim_{\xi\rarrow0}\|\,[W(\xi),T]\,\|=0$ then
$\lim_{\veps\rarrow 0}\|T_\veps -T\|=0$.\\[1mm]
{\rm (ii)}
If\; $T\in {\re}(E)$ then $T_\veps\in {\re}(E)$ for all
$\veps>0$.\\[1mm]
{\rm (iii)} If\; $\theta\in\mbs(\Xi)$, then 
$T^{\#}_\varepsilon\equiv
(T_\varepsilon)^{\#}=\what{\theta}_{\!\veps}\, T^{\#}$. 
\end{lemma}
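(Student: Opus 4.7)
The proof will handle the three items in turn, each being a short calculation once one unfolds the definition of $T_\veps$ and uses either the hypothesis on $T$ or the CCR.

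For (i), the plan is to write
\[
T_\veps-T=\int_\Xi\big(W(\xi)TW(\xi)^*-T\big)\theta_\veps(\xi)\dd\xi
=\int_\Xi[W(\xi),T]W(\xi)^*\theta_\veps(\xi)\dd\xi,
\]
using $\int\theta_\veps=1$. Taking norms gives
$\|T_\veps-T\|\leq\int\|[W(\xi),T]\|\,|\theta_\veps(\xi)|\dd\xi$, and the change of variable $\xi=\veps\eta$ turns this into $\int\|[W(\veps\eta),T]\|\,|\theta(\eta)|\dd\eta$. The integrand goes to $0$ pointwise by the hypothesis $\lim_{\xi\to0}\|[W(\xi),T]\|=0$ and is dominated by $2\|T\|\,|\theta(\eta)|$, so dominated convergence concludes.

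For (ii), I would check the three conditions of Theorem \ref{th:Main} directly. For condition (i), a short computation with $W(\eta)T_\veps W(\eta)^*$ and the change of variable $\xi\mapsto\xi-\eta$ yields
\[
\big[W(\eta),T_\veps\big]W(\eta)^*=\int_\Xi W(\xi)TW(\xi)^*\big(\theta_\veps(\xi-\eta)-\theta_\veps(\xi)\big)\dd\xi,
\]
whose norm is bounded by $\|T\|\cdot\|\tau_\eta\theta_\veps-\theta_\veps\|_{L^1(\Xi)}\to0$ as $\eta\to0$ by continuity of translation in $L^1$. For condition (ii), take $\eta\in E^\sigma$. Using \eqref{eq:ccr2} together with $W(\eta)TW(\eta)^*=T$ (condition (ii) for $T$), a direct calculation gives
\[
W(\eta)\,W(\xi)TW(\xi)^*\,W(\eta)^*=e^{\rmi\sigma(\eta,\xi)}W(\xi)\cdot T\cdot e^{-\rmi\sigma(\eta,\xi)}W(\xi)^*=W(\xi)TW(\xi)^*,
\]
so $W(\eta)T_\veps W(\eta)^*=T_\veps$. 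For condition (iii), take $\eta\in E$ and decompose
\[
(W(\eta)-1)W(\xi)TW(\xi)^*=W(\xi)\bigl(e^{\rmi\sigma(\eta,\xi)}(W(\eta)-1)T+(e^{\rmi\sigma(\eta,\xi)}-1)T\bigr)W(\xi)^*,
\]
which yields
\[
\|(W(\eta)-1)T_\veps\|\leq\|\theta\|_{L^1}\|(W(\eta)-1)T\|+\|T\|\!\int_\Xi|e^{\rmi\sigma(\eta,\xi)}-1|\,|\theta_\veps(\xi)|\dd\xi.
\]
The first term goes to $0$ by condition (iii) for $T$, and the second by dominated convergence.

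For (iii), I would use the identity $W(\xi)W(\mu)W(\xi)^*=W(\delta_\xi\times\mu\times\delta_{-\xi})=W(e^{\rmi\xi^\sigma}\mu)$ from \eqref{eq:products}, valid for any $\mu\in\mbs^*(\Xi)$. Applying this with $\mu=T^\#$ and exchanging the $\xi$-integration with $W$ (justified because $\theta\in\mbs(\Xi)$ makes $\int\theta_\veps(\xi)e^{\rmi\xi^\sigma(\cdot)}\dd\xi$ a well-defined function in $\mbs(\Xi)$ whose product with the tempered distribution $T^\#$ makes sense), I get
\[
T_\veps=W\!\Big(\big(\tsum{}{}\theta_\veps(\xi)e^{\rmi\sigma(\xi,\cdot)}\dd\xi\big)\,T^\#\Big),
\]
and the function in the inner parenthesis is $\what{\theta}_\veps$ (with the paper's normalization $\eqref{eq:FT}$ of the symplectic Fourier transform). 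Hence $T_\veps^\#=\what{\theta}_\veps\,T^\#$.

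The main obstacle is the computation in (ii), condition (ii), where one must recognize that the twists in $W(\eta)W(\xi)W(\eta)^*$ and $W(\eta)W(\xi)^*W(\eta)^*$ cancel exactly; once this is seen the argument is routine. Parts (i) and (iii) are essentially bookkeeping via dominated convergence and the twisted-convolution identities \eqref{eq:products}.
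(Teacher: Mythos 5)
Your proof is correct and follows essentially the same route as the paper's: dominated convergence after the scaling change of variables for (i), direct verification of the three conditions of Theorem \ref{th:Main} via the CCR phase cancellations and $L^1$-continuity of translations for (ii), and the identity $W(\xi)W(\mu)W(\xi)^*=W(\e^{\rmi\sigma(\xi,\cdot)}\mu)$ plus an exchange of integration for (iii). The only difference is cosmetic: you spell out the translation estimate for condition (i) of Theorem \ref{th:Main}, which the paper merely asserts holds for all bounded $T$.
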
 

\begin{proof}
Clearly
\[
\|T_\veps -T\| \leq
\int_\Xi \|W(\veps\xi)\,T\,W(\veps\xi)^*-T\|\,|\theta(\xi)|\dbar\xi
\]
which tends to zero as $\veps\rarrow 0$ under the condition of (i).
For (ii), notice first that $T_\veps$ satisfies (i) of Definition
\ref{df:intrinsic} (for all $T\in B(\ch)$).  Moreover, since
$T\in {\re}(E)$, for all $\eta \in E^\sigma$
\begin{eqnarray*}
W(\eta)W(\xi)\,T\,W(\xi)^*
& = &
\e^{i\sigma(\xi,\eta)}W(\xi)\,T\,W(\eta)W(-\xi)\\
& = &
\e^{i(\sigma(\eta,\xi)+\sigma(\eta,-\xi))}W(\xi)\,T\,W(\xi)^*W(\eta)
\end{eqnarray*}
thus (ii) of Definition \ref{df:intrinsic} is valid for $T_\veps$
too. Finally, if $\eta \in E$ then
\begin{align*}
\|(W(\eta)-1)T_\veps\|
& \leq
\int_\Xi |1-\e^{\rmi\sigma(\eta,\xi)}|\,\|T\|\,|\theta_\veps(\xi)|\dbar\xi \\
& + \|(W(\eta)-1)T\|\int_\Xi |\theta_\veps(\xi)|\dbar\xi
\end{align*}
which tends to zero if $\eta\rarrow 0$. Thus $T_\veps\in {\re}(E)$.
 
Assume now $\theta \in {\mbs}(\Xi)$ and let us compute
$T^{\#}_\veps$. If $T^\#=\mu$ and $f,g \in\mbs(X)$, since the
operators $W(\xi)$ leave $\mbs(X)$ invariant, one has (as in the
proof of Lemma \ref{l:benji}):
\begin{align*}
\braket{f}{T_\veps g} &=
\int_\Xi \braket{f}{W(\xi)W(\mu)W(\xi)^*g} \theta_\varepsilon(\xi)\dbar\xi 
= \int_\Xi \braket{f}{W(\e^{\rmi\xi_\sigma}\mu)g}
\theta_\varepsilon(\xi)\dbar\xi\\
&= \left\langle f \Big| W\left(\int_\Xi
\e^{\rmi\xi_\sigma}\theta_\veps(\xi)\dbar\xi \, \mu\right)g\right\rangle
=\braket{f}{W\big(\what{\theta}_{\!\veps}\mu \big)g} .   
\end{align*}
For the proof of the third equality above, note that it suffices to
show it for a set of $\mu$ which is dense in ${\mbs}'(\Xi)$, e.g.\
for $\mu\in {\mbs}(\Xi)$; in this case the equality is easy to
justify.  Anyway, since $W$ is bijective we get
$T^{\#}_\veps=\what{\theta}_{\!\veps}\mu$.
\end{proof}

Let us choose $\theta$ such that
$\what{\theta}\in C_{\rmc}^\infty(\Xi)$. Then if $T\in\re(E)$, due to
Lemma \ref{l:benji} and (ii) and (iii) of Lemma \ref{l:Teps}, the
distribution $T^{\#}_\varepsilon$ has compact support included in $E$.
And by (i) of Lemma \ref{l:Teps} we have
$\lim_{\varepsilon\to0}T_\varepsilon=T$ in norm. Thus for the proof of
\eqref{eq:re:2} it suffices to show that any operator of the form
$T_\varepsilon$ is norm limit of operators $W(\mu)$ with
$\mu\in L^1(E)$.

\ssubsection{}\label{sss:314}
\addtocontents{toc}{\SkipTocEntry}

Thus from now on we assume that $T\in {\re}(E)$ and $T^{\#}$ is a
distribution on $\Xi$ whose support is a compact subset of $E$. But we
may simplify the problem still further as follows.

Let $\rho\in C^\infty_{\rmc}(E)$ with $\rho\geq0$ and
$\int_{E}\rho(\eta)\d_E\eta=1$; for $\veps >0$ set
$\rho_\veps (\eta) = \veps^{-m}\rho(\eta/\veps)$ where $m=\dim
E$. Then $W(\rho_\veps)\in {\re}(E)$ by Proposition \ref{pr:tatar}
hence $W(\rho_\veps)T\in {\re}(E)$, and it is easy to check that $T$
is the norm limit of the operators $W(\rho_\veps)T$ (Lemma
\ref{lm:surya}-(ii)). Thus it suffices to prove that such products are
norm limit of operators $W(\mu)$ with $\mu\in L^1(E)$.

Let $\rho\in C^\infty_{\rmc}(E)$ identified with the measure
$\rho\lambda_E$. Then
\[
  W(\rho)T=W(\rho)W(T^{\#})= W(\rho\oast T^{\#})
\]
where $\rho\oast T^{\#}$ makes sense because both $\rho$ and $T^{\#}$
are compactly supported distributions.  Our purpose is to show
\begin{equation}\label{e:mut}
  \rho\oast T^{\#} = u\lambda_{_{E}} \quad
  \text{for some } u\in C^\infty_{\rmc}(\Xi) 
\end{equation}
and this clearly finishes the proof of the theorem.

We begin by proving that the twisted convolution $\rho\oast T^{\#}$
also has compact support contained in $E$, and we also deduce precise
information about its structure.  Let $E'$ be a vector subspace of
$\Xi$, supplementary to $E$. Since $T^{\#}$ is a distribution whose
compact support is contained in $E$, we may use \cite[Th.\
2.3.5]{Hor1} to get a representation of $T^{\#}$ as a finite sum
\[
  T^{\#} = {\textstyle\sum_{\alpha}} \, u_\alpha
  \otimes \partial_{\ss{E'}}^\alpha\delta.
\]
Here the tensor product refers to the identification
$\Xi = E \times E'$, $u_\alpha$ are compactly supported distributions
on $E$, and $\partial_{\ss{E'}}^\alpha\delta$ are derivatives of the
Dirac measure at zero in directions of $E'$. Then by \cite[Th.\
6.27]{Rud} we may write each $u_\alpha$ as a finite sum of derivatives
(in the directions of $E$) of continuous functions with compact
support on $E$ so $T^{\#}$ is a finite sum
\begin{equation}\label{eq:Tdiez}
T^{\#} = {\textstyle\sum_{\alpha,\beta}} \, \partial_\sse^\beta
v_{\alpha\beta} \otimes \partial_{\ss{E'}}^\alpha\delta,
\quad\text{where } v_{\alpha\beta}\in C_\rmc(E) .
\end{equation}

\begin{lemma}\label{lm:teknik}
  There is a finite family $\{w_\alpha\}_{|\alpha|\leq k}$ of
  functions in $C^\infty_{\rmc} (E)$ such that
\begin{equation}\label{eq:teknik}
  \rho\oast T^{\#} =
  {\textstyle\sum_{|\alpha|\leq k}} \, w_\alpha
  \otimes \partial_{\ss{E'}}^\alpha \delta. 
\end{equation}
\end{lemma}

\begin{proof}
  We will give a representation of the form \eqref{eq:teknik} for each
  term of the sum in \eqref{eq:Tdiez}, then we obtain
  \eqref{eq:teknik} by adding these representations.  We simplify the
  writing and set $v=v_{\alpha\beta}$ and
  $T^{\#} =\partial_\sse^\beta v\otimes
  \partial_{\ss{E'}}^\alpha\delta$ with $v\in C_\rmc(E)$. If
  $\theta \in C^\infty_{\rmc} (\Xi)$ then, by \eqref{eq:tconv} and
  with the abuse of notation mentioned in \S\ref{sss:312}, the action
  of the distribution $\rho\oast T^{\#}$ on $\theta$ is
\begin{align*}
  \bracet{\theta}{\rho\oast T^{\#}}
  & =
    \int_\Xi\int_\Xi  \theta(\xi +\eta)\,
    \e^{\frac{\rmi}{2}\sigma(\eta,\xi)}\rho(\xi)\d_E\xi\,
    T^{\#}(\eta)\dd\eta \\
  & = 
    \int_{E}\rho(\xi) \d_E\xi
    \int_\Xi\e^{\frac{\rmi}{2}\sigma(\eta,\xi)}\,\theta(\xi +\eta)\,
    T^{\#}(\eta)\dd\eta \\
  & \equiv
    \int_{E} \left\langle
    \e^{\frac{\rmi}{2}\xi_\sigma}\,\theta(\xi +\cdot\,)\,,\,T^{\#}
    \right\rangle
    \rho(\xi) \d_E\xi.
\end{align*}
If $\eta\in \Xi$ we denote by $y$ and $y'$ its components in $E$ and
$E'$ respectively.  Then $T^{\#}$ may be written
$T^{\#}(\eta)\equiv\partial_{y}^\beta
v(y)\partial_{y'}^\alpha\delta(y')=v^{(\beta)}(y)\delta^{(\alpha)}(y')$
which of course is slightly formal and must be interpreted in the
sense of distributions. Anyway, the action of the distribution
$T^{\#}$ on $\e^{\frac{i}{2}\xi_\sigma}\theta(\xi +\cdot\,)$ is given
by
\begin{align*}
  &\int_{E'} \d_{\ss{E'}} y'
    \int_{E}\e^{\frac{\rmi}{2}\sigma(y+y',\xi)}\,\theta(\xi+y+y')
    v^{(\beta)}(y)\delta^{(\alpha)}(y') \dd_\sse y \\
  & = (-1)^{|\alpha|} \int_{E} \e^{\frac{\rmi}{2}\sigma(y,\xi)}
    \left[\partial_{y'}^\alpha
    (\e^{\frac{i}{2}\sigma(y',\xi)}\theta(\xi+y+y') )\right]_{y'=0}
    v^{(\beta)}(y) \d_\sse y \\
  & = \sum_{\gamma\leq \alpha}(-1)^{|\alpha|}\int_{E}
    \e^{\frac{\rmi}{2}\sigma(y,\xi)} v^{(\beta)}(y)
    \varphi_{\alpha-\gamma}(\xi)
    \left[ \partial_{y'}^\gamma \theta(\xi+y+y')
    \right]_{y'=0}\dd_\sse y \\
  & =\sum_{\gamma\leq \alpha}(-1)^{|\alpha|+|\beta|} 
       \varphi_{\alpha-\gamma}(\xi) \int_{E} v(y)
    \partial_{y}^\beta\left(\e^{\frac{\rmi}{2}\sigma(y,\xi)}
    \left[\partial_{y'}^\gamma \theta(\xi
    +y+y')\right]_{y'=0}\right)\dd_\sse y .
\end{align*}
for some polynomials $\vphi_{\alpha-\gamma}(\xi)$ coming from the
derivatives of order $\alpha-\gamma$ of
$\e^{\frac{\rmi}{2}\sigma(y',\xi)}$ with respect to $y'$ at $y'=0$.
Then we get
\begin{align*}
  \langle \theta , \rho\oast T^{\#}\rangle
  &= \sum_{\gamma\leq \alpha}(-1)^{|\alpha|+|\beta|}
    \int_{E} \dd_\sse\xi\, \rho(\xi)\,\vphi_{\alpha-\gamma}(\xi)\oast \\ 
  & \oast  \int_{E}\! v(y) 
    \partial_{y}^\beta\left(\e^{\frac{\rmi}{2}\sigma(y,\xi)}
    \left[\partial_{y'}^\gamma \theta(\xi
    +y+y')\right]_{y'=0}\right)\dd_\sse y .
\end{align*}
By taking into account the relation
\begin{align*}
\partial_{y}^\beta &\left(\e^{\frac{i}{2}\sigma(y,\xi)}
\left[\partial_{y'}^\gamma \theta(\xi +y+y')\right]_{y'=0}\right)\\
& = \sum_{\lambda\leq\beta} \binom{\beta}{\lambda}
\left(\partial_{y}^{\beta-\lambda}\e^{\frac{i}{2}\sigma(y,\xi)}\right)
\partial_{y}^{\lambda}\left[\partial_{y'}^\gamma
\theta(\xi +y+y')\right]_{y'=0}
\end{align*}
and since
$\partial_{y}^{\lambda}
\left[\partial_{y'}^\gamma \theta(\xi +y+y')\right]_{y'=0}
=\partial_{\xi}^{\lambda}\left[\partial_{y'}^\gamma \theta(\xi
  +y+y')\right]_{y'=0}$ 
we get
\begin{align*}
\bracet{\theta}{\rho\oast T^{\#}}
&= \sum_{\gamma,\lambda}
C_{\gamma\lambda}\int_{E}\dd_\sse \xi \int_{E}
\partial_{\xi}^\lambda\left(\rho(\xi)\,\vphi_{\alpha-\beta}(\xi)
\partial_{y}^{\beta-\lambda}\e^{\frac{\rmi}{2}\sigma(y,\xi)}\right) \\
& \hspace{30 mm} \cdot v(y)
\left[\partial_{y'}^\gamma \theta(\xi+y+y')\right]_{y'=0}\dd y\\ 
&=\sum_{\gamma,\lambda} C_{\gamma\lambda}
\int_{E} \dd_\sse\tau \int_{E}\partial_{\tau}^\lambda
\left(\rho(\tau -y)\vphi(\tau -y)\partial_{y}^{\beta-\lambda}
\e^{\frac{\rmi}{2}\sigma(y,\tau)}\right) \\
& \hspace{30 mm} \cdot v(y)
\left[\partial_{y'}^\gamma \theta(\tau +y')\right]_{y'=0}\dd y.
\end{align*}
Now it is clear that there is a finite number of functions
$w_{\gamma}\in C_{\rmc}^{\infty}(E)$ such that
\[
\bracet{\theta}{\rho\oast T^{\#}}= 
\sum_\gamma \int_{E}w_{\gamma}(\tau)\,
\left[\partial_{y'}^{\gamma} \theta(\tau+y')\right]_{y'=0}\dd_\sse\tau  
=\sum_{\gamma}
\bracet{\theta}{w_{\gamma}\otimes\partial_{\ss{E'}}^{\gamma}\delta} .
\qedhere
\]
\end{proof}

From Lemma \ref{lm:teknik} we see that (\ref{e:mut}) holds if we may
take $k=0$ in the representation \eqref{eq:teknik} of
$\rho\oast T^{\#}$.  So the proof of the Theorem \ref{th:mainthfd} is
finished once we show that {\em the operator $W(\rho\oast T^{\#})$ is
  bounded if and only if $k=0$\/}. Before starting this last step of
the proof, we describe a direct sum decomposition of the symplectic
space $\Xi$ determined by $E$.

\ssubsection{}\label{sss:315}
\addtocontents{toc}{\SkipTocEntry}

Recall that the center of $E$ is $E^{\rmc}\doteq E\cap E^\sigma$ and
set $\bar{E}:=E+E^\sigma$.  A subspace is symplectic if its center
reduces to $0$.  Since $\Xi$ is finite dimensional, by using
$ E^{\sigma\sigma}=E$ it is easy to see that
$E^{\rmc}= E^{\sigma\rmc}=\bar{E}^{\rmc}=\bar{E}^{\sigma}$ and
$\bar{E}=E^{\rmc\sigma}=\overline{E^{\rmc}}=\overline{E^\sigma}$.  In
what follows, we shall denote by $\oplussymp$ and $\oplus$ the
symplectic (respectively vector) direct sum, and by $\perp$ the
symplectic orthogonality between elements of $\Xi$.  Let now
$G\subset E$ such that $E = G\oplus E^{\rmc}$ and
$F\subset E^{\sigma}$ such that $E^{\sigma} = F\oplus E^{\rmc}$.  Then
{\em $G$ and $F$ are symplectic\/}: indeed, suppose $\xi \in G$ and
$\xi \perp G$; then $E^{\rmc} \subset E^\sigma$ shows that
$\xi \perp E^{\rmc}$, so $\xi \perp (G+E^{\rmc})$, thus
$\xi \in E^{\sigma}$, which means that
$\xi \in E \cap E^\sigma = E^{\rmc}$; then $G\cap E^{\rmc}=0$ shows
$\xi =0$.  Hence $H\equiv G\oplussymp F$ {\em is a symplectic subspace
  too, and we have $\bar{E} =H \oplussymp E^{\rmc}$\/}.  For, if
$\xi \in H\cap E^{\rmc}$ then there are $\xi_{\ss{G}} \in G$ and
$\xi_{\ss{F}} \in F$ such that
$\xi_{\ss{G}} = \xi - \xi_{\ss{F}} \in
E^{\sigma}-E^{\sigma}=E^{\sigma}$ which shows that
$\xi_{\ss{G}} \perp G \subset E$.  Since $G$ is symplectic,
$\xi_{\ss{G}} =0$ and the same holds for $\xi_{\ss{F}}$. On the other
hand,
$H+E^{\rmc}=G+F+E^{\rmc}=(G+E^{\rmc})+(F+E^{\rmc})
=E+E^{\sigma}=\bar{E}$.  Further, remark that {\em $H^\sigma$ is
  also a symplectic space and $E^{\rmc}$ is a Lagrangian subspace of
  $H^\sigma$\/}.  Indeed, $E^{\rmc} \perp H$ thus
$E^{\rmc}\subset H^\sigma$ and is isotropic.  If it where not maximal,
there would exist some $\xi \in H^\sigma \setminus E^{\rmc}$ with
$\xi \perp E^{\rmc}$. Thus $\xi \perp (E^{\rmc}+H)=\bar{E}$,
i.e.\ $\xi \in \bar{E}^\sigma =E^{\rmc}$, which is absurd.
 
Now let $K$ be a Lagrangian subspace of $ H^\sigma$ such that
$ H^\sigma = E^{\rmc} \oplus K$ (this will be a Lagrangian decomposition
of the symplectic space $ H^\sigma$).
Then $\Xi$ splits as
\[
\Xi = H \oplussymp H^\sigma = G \oplussymp F\oplussymp (E^{\rmc}
\oplus K) = E \oplus F \oplus K.
\]

\ssubsection{}\label{sss:316}
\addtocontents{toc}{\SkipTocEntry}

Let us go back now to the proof of the fact that $W(\rho\oast T^{\#})$
is bounded only if $k=0$.  Note that we may take $E'=F \oplus K$.  Any
$\xi \in \Xi$ may be uniquely written as
$\xi = \eta + \zeta_{\ss{F}} +\zeta_{\ss{K}}$ with $\eta \in E$,
$\zeta_{\ss{F}}\in F$ and $\zeta_{\ss{K}}\in K$.  In the next formulas
we abbreviate $\lambda_E(\dd\eta)=\d\eta$,
$\lambda_F(\dd\zeta_{\ss{F}})=\dd\zeta_{\ss{F}}$ and
$\lambda_K(\dd\zeta_{\ss{K}})=\dd\zeta_{\ss{K}}$.  Then, by
\eqref{eq:Wmu} and \eqref{eq:teknik},
\[
W(\rho\oast T^{\#}) \!=\! \sum_{|\alpha|\leq k}\int_{F\oplus K}\int_{E}
W(\eta +\zeta_{\ss{F}} +\zeta_{\ss{K}})w_{\alpha}(\eta)
[\partial_{\ss{F\oplus K}}^{\alpha}\delta](\zeta_{\ss{F}} +\zeta_{\ss{K}})
\d\eta\dd\zeta_{\ss{F}}\dd\zeta_{\ss{K}}
\]
which can be further developed as follows: 
\begin{align*}
&\sum_{|\alpha|\leq k}\int_{F\oplus K}\dd\zeta_{\ss{F}}\dd\zeta_{\ss{K}}
W(\zeta_{\ss{F}} +\zeta_{\ss{K}}) \int_{E}
\e^{\frac{\rmi}{2}\sigma(\zeta_{\ss{F}} +\zeta_{\ss{K}},\eta)}\,W(\eta)
w_{\alpha}(\eta)[\partial_{\ss{F\oplus K}}^{\alpha}\delta]
(\zeta_{\ss{F}} +\zeta_{\ss{K}}) \d\eta \\
& = \sum_{|\alpha|\leq k} \int_{F\oplus K}
\dd\zeta_{\ss{F}}\dd\zeta_{\ss{K}}W(\zeta_{\ss{F}})W(\zeta_{\ss{K}})
\int_{E} \e^{\frac{\rmi}{2}\sigma(\zeta_{\ss{K},\eta})}W(\eta)
w_{\alpha}(\eta)[\partial_{\ss{F\oplus K}}^{\alpha}\delta]
(\zeta_{\ss{F}} +\zeta_{\ss{K}}) \d\eta \\
& = \sum_{|\alpha|+|\beta|\leq k} (-1)^{|\alpha|+|\beta|}
[\partial_{\ss{F}}^{\alpha}W(\zeta_{\ss{F}})]_{\zeta_{\ss{F}}=0}
\!\!\int_{E}\! [\partial_{\ss{K}}^{\beta}\{W(\zeta_{\ss{K}})
  \e^{\frac{\rmi}{2}\sigma(\zeta_{\ss{K}},\eta)}\}]_{\zeta_{\ss{K}}=0}
W(\eta)w_{\alpha}(\eta)\d\eta \\
& = \sum_{|\alpha|+|\beta|\leq k}
[\partial_{\ss{F}}^{\alpha}W(\zeta_{\ss{F}})]_{\zeta_{\ss{F}}=0}
[\partial_{\ss{K}}^{\beta}W(\zeta_{\ss{K}})]_{\zeta_{\ss{K}}=0}
W(w_{\alpha\beta}).
\end{align*}
We have used the relations $F \perp K$, hence
$W(\zeta_{\ss{F}}+\zeta_{\ss{K}})=W(\zeta_
{\ss{F}})\,W(\zeta_{\ss{K}})$, and $F \perp E$, hence
$\sigma(\eta,\zeta_{\ss{F}})=0$.  Further, we took into account that
the derivatives with respect to $\zeta_{\ss{K}}$ at $\zeta_{\ss{K}}=0$
of $\e^{\frac{\rmi}{2}\sigma(\zeta_{\ss{K}},\eta)}$ are polynomial
functions of $\eta \in E$ and $w_{\alpha}\in C^{\infty}_{\rmc}(E)$,
thus there are $w_{\alpha\beta}\in C_\rmc^\infty(E)$ such that the
last equality be true with $W(w_{\alpha\beta})=\int_E
W(\eta)w_{\alpha\beta}(\eta)\d\eta$. 

Let $F=\Omega \oplus \Omega^*$ and $G=\Gamma \oplus \Gamma^*$ be some
Lagrangian decompositions of the symplectic subspaces $F$ and $G$.
Thus we get a decomposition
\[
\Xi = (\Omega \oplus \Omega^*)\oplussymp
(\Gamma \oplus \Gamma^* )\oplussymp ( E^{\rmc}\oplus K)
\]
each of the three parentheses being a symplectic space equipped with a
Lagrangian decomposition. We also have
$E=(\Gamma \oplus \Gamma^* )\oplussymp E^{\rmc}$.  Then
$[\partial_{\ss{F}}^{\alpha}W(\zeta_{\ss{F}})]_{\zeta_{\ss{F}}=0}\equiv
\phi_\alpha (q_{\ss{\Omega}},p_{\ss{\Omega}})$ and
$[\partial_{\ss{K}}^{\beta}W(\zeta_{\ss{K}})]_{\zeta_{\ss{K}}=0}
\equiv \psi_\beta (p_{\ss{E^{\ss c}}})$ with $\phi_\alpha$ and
$\psi_\beta$ polynomial functions of degrees $|\alpha|$ and $|\beta|$
respectively and $q,p$ are the position and momentum observables
associated to the space in the index.  Observe that the notation
$p_{\ss{E^{\ss c}}}$ is justified by the identification
$(E^{\rmc})^* = K$. Moreover, the operators $W(\mu_{\alpha\beta})$ can
be written in the form
$\omega_{\alpha\beta}(q_{\ss\Gamma},p_{\ss\Gamma},q_{\ss{E^{\ss c}}})$
for some smooth function $\omega_{\alpha\beta}$, expression which can
be rigorously interpreted in terms of the Weyl calculus (this fact,
however, does not play any role in what follows). Thus we have
\[
W(\rho\oast T^{\#})= \sum_{|\alpha|+|\beta|\leq k} \phi_\alpha
(q_{\ss{\Omega}},p_{\ss{\Omega}})\, \psi_\beta (p_{\ss{E^{\ss c}}})\,
\omega_{\alpha\beta}(q_{\ss\Gamma},p_{\ss\Gamma},q_{\ss{E^{\ss c}}})
\]
and we have to prove that if this operator is bounded then necessarily 
$k=0$.
 
From now on we may assume that we work in the Schr\"odinger
representation associated to $X=\Omega\oplus\Lambda$, where
$\Lambda=\Gamma\oplus E^{\rmc}$.  Then our Hilbert space factorizes as
$\ch=L^2(\Omega)\otimes L^2(\Lambda)$ and, if we denote
$\phi_{\alpha} (q_{\ss{\Omega}},p_{\ss{\Omega}})$ by $S_\alpha$ and
$\sum_{\beta} \psi_{\beta}(p_{\ss{E^{\ss c}}})
\omega_{\alpha\beta}(q_{\ss\Gamma},p_{\ss\Gamma},q_{\ss{E^{\ss c}}})$
by $T_\alpha$ we have
$W(\rho\oast T^{\#})=\sum_{|\alpha|\leq k}S_\alpha\otimes T_\alpha$.
We show first that $S_\alpha \in B(L^2(\Omega))$ for all $\alpha$; in
fact they are complex multiples of the identity operator on
$L^2(\Omega)$.
 
Notice that $T_\alpha\in B({\mbs}(\Lambda))$ and that  we may
assume that the family 
$\{T_\alpha\}_{|\alpha| \leq k}$ is linearly independent in
$B({\mbs}(\Lambda))$.
Let $\rond T$ denote the vector space of operators
$L: {\mbs}(\Lambda)\rarrow {\mbs}(\Lambda)$
of the form $L=\sum_{i}\ket{u_i}\bra{v_i}$ (finite sum), with
$u_i , v_i \in {\mbs}(\Lambda)$.
Then we may realize $\rond T$ as a space of linear forms on
$B({\mbs}(\Lambda))$ by defining
$B({\mbs}(\Lambda))\ni T \mapsto
\tr{LT}=\sum_{i}\braket{v_i}{Tu_i}$. Notice that
${\rond T}$ is a subspace of the dual of $B({\mbs}(\Lambda))$
which separates the points
(for, if $\tr{LT}=\tr{LT'}$ for all $L\in \rond T$, then
$\braket{u,(T-T')v}=0$ for all $u,v \in {\mbs}(\Lambda)$).
It follows that we may find a finite family of operators
$\{L_\alpha\} \in \rond T$ such that 
\[
\tr{L_\alpha T_\beta} =\delta_{\alpha\beta}
\mbox{\ for all\ } \alpha,\beta \mbox{\ with\ } |\alpha|\leq k,\ 
|\beta| \leq k.
\]
Indeed, if $\rond V$ is the finite dimensional vector subspace
generated by the family $\{T_\alpha\}$ (which is a basis in it) we may
find a basis $\{\Psi_\alpha\}$ in the dual space ${\rond V}'$ such
that $\Psi_\alpha (T_\beta)=\delta_{\alpha\beta}$.  Thus we are
reduced to prove that for each $\Psi \in \rond V'$ we may find
$L\in \rond T$ such that $\Psi(T)=\tr{LT}$ for all $T \in \rond V$,
i.e.\ that the mapping
${\rond T}\ni L \mapsto \tr{L\,\cdot\,}\!  \mid_{\!_{\ss{\rond V}}}\in
{\rond V}'$ is surjective.  Equivalently, the dual mapping
${\rond V}\equiv{\rond V}''\ni T \mapsto \tr{\,\cdot\, T}\in {\rond
  T}'$ has to be injective.  But this is true, since $\rond T$
separates points.
 
Let now $L_\alpha =\sum_{i}\ket{u_{\alpha i}} \bra{v_{\alpha i}}$ with
$u_{\alpha i},v_{\alpha i} \in {\mbs}(\Lambda)$ and, for
$u,v \in {\mbs}(\Omega)$, let us compute
\begin{align*}
\sum_{i} & \braket{v\otimes v_{\alpha i}}{W(\rho\oast T^{\#})
u\otimes u_{\alpha i}} \\
& = \sum_i\sum_{|\gamma|\leq k}\braket{v}{S_\gamma u}
\braket{v_{\alpha i}}{T_\gamma u_{\alpha i}}
=
\sum_{|\gamma|\leq k}\braket{v}{S_\gamma u}\tr{L_\alpha T_\gamma}
=
\braket{v}{S_\alpha u}.
\end{align*}
Since $W(\rho\oast T^{\#})$ is supposed bounded, this shows that all
$S_\alpha$ are bounded. But $S_\alpha$ are polynomials
$\phi_{\alpha} (q_{\ss{\Omega}},p_{\ss{\Omega}})$, so these
polynomials have to be of degree zero, i.e.\ $\phi_\alpha$ are complex
numbers.
 
Now we can write
$W(\rho\oast T^{\#})= 1\otimes\sum_\alpha \phi_\alpha T_\alpha$.  Thus
it remains to show that if an operator of the form
$B=\sum_{\beta} \psi_\beta(p_{\ss{E^{\ss c}}})W(\mu_{\beta})$ is
bounded in $L^2(\Lambda)$, then the polynomials $\psi_\beta$ are
constants. By rearranging the sum we can assume that
$B=\sum_\gamma p_{\ss{E^{\ss c}}}^\gamma W(\mu_{\gamma})$ (a basis in
$E^{\rmc}$ has been chosen).  Let $a=\lambda\zeta$, with
$\lambda \in \R$ and $\zeta\in E^{\rmc} $ and for any
$u,v\in {\mbs}(\Lambda)$, denote
$u_a = \exp{[\rmi\bracet{q_{\ss{E^{\ss c}}}}{a}]}u$ and
$v_a = \exp{[\rmi\bracet{q_{\ss{E^{\ss c}}}}{a}]}v$.  Then
$|\braket{u_a}{Bv_a}|\leq C\,\|u\|\,\|v\|$, and on the other hand
\begin{align*}
&|\braket{u_a}{Bv_a}| = |\braket{u}{\sum_{|\gamma|\leq m}
(p_{\ss{E^{\ss c}}}+a)^{\gamma}W(\mu_\gamma) v}|\\
&= \lambda^m \biggl| \sum_{|\gamma|= m}\!\!
\zeta^\gamma\braket{u}{W(\mu_\gamma) v} +\!\!\!
\sum_{|\gamma|< m}\sum_{\delta\leq \gamma}
\!\!\binom{\gamma}{\delta}\lambda^{-m+|\gamma|-|\delta|}
\zeta^{\gamma-\delta}
\braket{u}{p_{\ss{E^{\ss c}}}^{\delta}W(\mu_\gamma) v}
\biggr|.
\end{align*}
By taking $\lambda\rarrow \infty$ we see that we must have
$\sum_{|\gamma|= m}\zeta^\gamma\braket{u}{W(\mu_\gamma) v} =0$ for all
$\zeta \in E^{\rmc}$ if $m\geq 1$.  But this implies
$\braket{u}{W(\mu_\gamma) v} =0$ for all $|\gamma|\geq 1$ by a
standard argument.  This ends the proof of Theorem \ref{th:mainthfd}.

\subsection{The ideal
  \texorpdfstring{$\rf(\Xi)$}{F(Xi)}}\label{ss:rkr}

Recall that $L^1(\Xi)$ is a Banach $*$-algebra for the twisted
convolution and the definition $W(f) = \int_\Xi W(\xi) f(\xi) \d\xi$
extends $W$ to an injective morphism $W\colon L^1(\Xi)\to B(\ch)$.
Then $\rf(\Xi)$ is the norm closure in $B(\ch)$ of the set of such
$W(f)$, hence is a $C^*$-algebra of operators on $\ch$ and is the
smallest graded ideal of $\rf$.  By Definition \ref{df:wkfield}, if
$\{\xi_1,\dots,\xi_n\}$ is a generating set for $\Xi$ then
\begin{equation}\label{eq:b-g-xi}
  \rf(\Xi)= C^*(\phi(\xi_1))\cdot
  C^*(\phi(\xi_2)) \cdot\ldots\cdot C^*(\phi(\xi_n)) .
\end{equation}
The commutant algebra $\Com(\Xi)$ and the multiplicity of $W$ are
introduced in the Remark \ref{re:stvon}.  We use the following
convention: if $T\in B(\ch)$ and the symbol $T^{(*)}$ appears in a
relation, then that relation has to be satisfied both by $T$ and by
$T^*$. Note that the assertion (a) below is a version of the
Kolmogorov-Riesz compactness criterion.

\begin{theorem}\label{th:RKR}
Let $W$ be a representation of a finite dimensional symplectic space
$\Xi$.
\begin{itemize}
\item[{\rm(1)}]
  $\rf(\Xi)\cdot \Com(\Xi)=\{T\in
  B(\ch)\mid\lim_{\xi\to0}\|(W(\xi)-1)T^{(*)}\|=0\}$. 
\item[{\rm(2)}] $W$ is irreducible
  $\Leftrightarrow \rf(\Xi)=K(\ch) \Leftrightarrow \rk(\Xi)\supset
  K(\ch)$ and then
  \begin{equation}  \label{eq:propirr}
    \rf(\Xi)=\{T\in B(\ch) \mid
    {\textstyle\lim_{\xi\to0}}\|(W(\xi)-1)T\|=0\}.
  \end{equation}
  
\item[{\rm(3)}] $W$ is of finite multiplicity
  $\Leftrightarrow \rf(\Xi)\subset K(\ch) \Leftrightarrow
  \rf(\Xi)\cdot\Com(\Xi)=K(\ch)$; then 
\begin{itemize}
\item[{\rm(a)}] $\cj\subset\ch$ is relatively compact
  $\Leftrightarrow \lim_{\xi\to0}\sup_{h\in\ck}\|(W(\xi)-1)h\|=0$;
\item[{\rm(b)}] $T\in B(\ch)$ is compact $\Leftrightarrow$
  $\lim_{\xi\to0}\|(W(\xi)-1)T\|=0$.
\end{itemize}

\item[{\rm(4)}] $W$ is of infinite multiplicity if and only if
  $\rk(\Xi)\cap K(\ch)=0$.
\end{itemize}
\end{theorem}

\begin{proof}

  By the Stone-Von Neumann theorem we may assume $\ch=\ch_0\otimes\ck$
  with $\ck$ finite dimensional and $W(\xi)=W_0(\xi)\otimes1$ with
  $W_0$ is an irreducible representation of $\Xi$ on $\ch_0$.
  
  We first discuss (2).  If $W$ is irreducible then $\rf(\Xi)= K(\ch)$
  by a simple argument in the Schr\"odinger representation and this
  implies $K(\ch)\subset\rk(\Xi)$. On the other hand, if $W$ is not
  irreducible we cannot have $K(\ch)\subset\rk(\Xi)$. Indeed, if
  $\dim\ck>1$ then an operator of the form $K\otimes F$ with
  $K\in B(\ch_0)$ and $F\in B(\ck)$ of rank one is compact on $\ch$
  but does not belong to $\rk(\Xi)$. Then \eqref{eq:propirr} follows
  from \cite[Corollary 3.5]{GI0}.

  From $\rk_\ssw(\Xi)=\rk_{\ssw_0}(\Xi)\otimes 1_{\ck}$ (4).  And
  $\rf^\ssw(\Xi)=K(\ch_0)\otimes 1_{\ck}$ hence $W$ is of finite
  multiplicity if and only if $\rf^\ssw(\Xi)\subset K(\ch)$ which
  proves the first assertion of (3).

  Next we prove (1).  It is easy to see that $\Com^\ssw(\Xi)$ coincides
  with the commutant $\rf^\ssw(\Xi)'$ of the algebra $\rf^\ssw(\Xi)$
  (\conf the argument which proves (a) of Proposition
  \ref{pr:properties}).  On the other hand, the commutant of
  $K(\ch_0)\otimes 1_{\ck}$ coincides with the commutant of its weak
  closure, which is $B(\ch_0)\otimes 1_{\ck}$, and by the
  commutation theorem for tensor products the commutant of the last
  algebra is $1_{\ch_0}\otimes B(\ck)$. Thus
  $\Com^\ssw(\Xi)=1_{\ch_0}\otimes B(\ck)$ hence
  \[
    \rf^\ssw(\Xi)\cdot \Com^\ssw(\Xi)= K(\ch_0)\otimes 1_{\ck} \cdot
    1_{\ch_0}\otimes B(\ck) =K(\ch_0)\otimes B(\ck)
  \]
  where the last tensor product is the spatial tensor product of the
  $C^*$-algebras $K(\ch_0)$ and $B(\ck)$. Now the assertion (1) of
  the theorem follows from \cite[Th.\ 3.8]{GI0}.

  If $W$ is of finite multiplicity then $B(\ck)=K(\ck)$ hence
  $\Com(\Xi)=1_{\ch_0}\otimes K(\ck)$ and thus
  $\rf(\Xi)\cdot\Com(\Xi)=K(\ch_0)\otimes K(\ck)=K(\ch)$. Thus due
  to (1) an operator $T\in B(\ch)$ is compact if and only if
  $\lim_{\xi\to0}\|(W(\xi)-1)T^{(*)}\|=0$. But we have to eliminate
  the condition on $T^*$ in order to get the assertion (b) of (3).

  The implications $\Rightarrow$ in (a) and (b) of (3) are clear for
  any $W$ because $W$ is strongly continuous. Now we prove the
  implications $\Leftarrow$ assuming (a) is known in the irreducible
  case.

  Assume $W$ of finite multiplicity and let $\{e_1,\dots,e_n\}$ an
  orthonormal basis in the finite dimensional Hilbert space
  $\ck$. Then any $h\in\ch$ can be uniquely written as
  $h=\sum_ih_i\otimes e_i$ with $h_i\in\ch_0$ and
  $\|h\|^2=\sum_i\|h_i\|^2$ hence
  $\|(W(\xi)-1))h\|^2=\sum_i\|(W_0(\xi)-1) h_i\|^2$ Let
  $\cj\subset\ch$ and $\cj_i=\{h_i\mid h\in\cj\}\subset\ch_0$. If
  $\lim_{\xi\to0}\sup_{h\in\cj}\|(W(\xi)-1)h\|=0$ then we have
  $\lim_{\xi\to0}\sup_{h_i\in\cj_i}\|(W_0(\xi)-1)h_i\|=0$. Since part
  (a) of (3) is true in the case of irreducible representations, it
  follows that $\cj_i$ is relatively compact in $\ch_0$. Hence each
  $\cj_i\otimes e_i$ is relatively compact in $\ch$ so
  $\cj\subset\sum_i\cj_i\otimes e_i$ is relatively compact because a
  finite sum of relatively compact sets is relatively compact. This
  finishes the proof of part (a) of (3) for an arbitrary $W$ of finite
  multiplicity. Part (a) of (3) is an immediate consequence: if
  $T\in B(\ch)$ and $\lim_{\xi\to0}\|(W(\xi)-1)T\|=0$ then
  $\lim_{\xi\to0}\sup_{\|h\|\leq1}\|(W(\xi)-1)Th\|=0$ hence the set
  $\{Th\mid \|h\|\leq1\}$ is relatively compact which means that $T$
  is compact.

  Finally, assume $W$ is irreducible. If $\cj$ is bounded then part
  (a) of (3) follows from \cite[Theorem 3.4]{GI0}. In order to
  eliminate the boundedness condition we use an improved
  Kolmogorov-Riesz criterion \cite{Su,HHM}. We may assume that $\Xi$
  is the phase space associated to a configuration space $X$ and that
  $W$ is the corresponding Schr\"odinger representation.  Equip $X$
  with an Euclidean structure and denote $\chi_r$ is the
  characteristic function of the region $|x|>r$.  We prove first that
  the condition $\lim_{\xi\to0}\sup_{u\in\cj}\|W(\xi)h-h\|=0$ implies
  $\|\chi_r(q)h\|\to0$ as $r\to\infty$ uniformly in $h\in\cj$. This
  follows from the following more precise estimate which is of some
  independent interest: there is a number $c$ depending only on
  $\dim X$ such that for any $r>0$
  \begin{equation}\label{eq:abelard}
    \|\chi_{r}(q)h\|\leq c\sup_{|a|<4/r}
    \|\e^{\rmi\bracet{q}{a}}h -h\| \quad\forall u\in\ch .
  \end{equation}
  To prove it note that for any $\varepsilon>0$ we have
  \[
    \int_{|a|<\varepsilon}\|\e^{\rmi\bracet{q}{a}}h-h\|^2\d a = \int_X
    \d x |h(x)|^2 \int_{|a|<\varepsilon}
    2\big(1-\cos\bracet{x}{a}\big) \d a .
  \]
  In order to estimate the last integral for a fixed $x$ we identify
  $X=\R^d$ with the help of an orthonormal basis $e_1,\dots,e_d$ such
  that $x=|x|e_1$. If we set $a_1=s$ and $(a_2,\dots,a_d)=t$ then the
  ball $|a|\leq\varepsilon$ is described by the conditions
  $-\varepsilon\leq s\leq\varepsilon$ and $t^2\leq\varepsilon^2-s^2$,
  hence there are numbers $C',C''$ depending only on $d$ such that
  \begin{align*}
    &\int_{|a|<\varepsilon} 2\big(1-\cos\bracet{x}{a}\big) \d a
    =C' \int_{0}^\varepsilon
      \big(1-\cos(|x|s)\big) (\varepsilon^2-s^2)^{\frac{d-1}{2}} \d s \\
    &\geq C' \int_{0}^{\varepsilon/2}
      \big(1-\cos(|x|s)\big) (\varepsilon^2-s^2)^{\frac{d-1}{2}} \d s
      \geq C''  \varepsilon^{d-1}
      \int_{0}^{\varepsilon/2}\big(1-\cos(|x|s)\big) \d s.
  \end{align*}
  Then we get
  \begin{align*}
    \varepsilon^{-d}
    \int_{|a|<\varepsilon} &\|\e^{\rmi\bracet{q}{a}}h -h\|^2\d a
     \geq C'' \int_X\d x |h(x)|^2
    \int_{0}^{1/2}\big(1-\cos(\varepsilon|x|s)\big) \d s \\
     &=(C''/2) \int_X\d x |h(x)|^2
    \big(1-\sin(\varepsilon|x|/2)/(\varepsilon|x|/2) \big)
  \end{align*}
  If $\varepsilon|x|>4$ then the last parenthesis is $>1/2$.  Thus
  there is a number $C$ depending only on the dimension of $X$ such
  that for any $\varepsilon>0$ and any $h\in\ch$
\begin{equation}\label{eq:abel}
  \|\chi_{4/\veps}(q)h\|^2\leq C \varepsilon^{-d}
  \int_{|a|<\varepsilon} \|\e^{\rmi\bracet{q}{a}}h -h\|^2\d a. 
\end{equation}
This estimate is better than \eqref{eq:abelard}. Now the assertion (a)
follows from \cite[Theorem 1]{HHM}.
\end{proof}

\subsection{HVZ theorem}\label{ss:hvz}

The original HVZ theorem, due to Hunziker, Van Winter and Zhislin,
gives a description of the essential spectrum of a non-relativistic
$N$-body Hamiltonian in terms of the spectra of its internal
hamiltonians \cite{ABG,CFKS,DeG1}. Proposition 8.4.2 from \cite{ABG} is
an abstract version of this theorem valid for $C^*$-algebras graded by
finite semilattices that is very easy to prove and covers the usual
versions of the theorem, see \S 9.4 and Ch.\ 10 of \cite{ABG}. Our
results here cover all the observables affiliated to the field
algebra.

We begin with an abstract $C^*$-algebra versions of the HVZ theorem
which follows from Theorems \ref{th:gahvz} and \ref{th:gahvzcs}. Since
$\Xi$ is finite dimensional $\rl(\Xi)$ is an ideal of $\rl$ and if
$\mu\in\rl$ its $\rl$-essential spectrum $\rl\hyphen\spe(\mu)$ is
defined by \eqref{eq:rcspe}.  In a finite multiplicity representation of
$\Xi$ this is the essential spectrum of the operator corresponding to
$\mu$.

\begin{theorem}\label{th:ahvz}
  The map $\rl\to\bigoplus_{E\in\mbh(\Xi)}\rl_\sse$ defined by
  $\mu\mapsto\big(\cp_\sse\mu\big)_{E\in\mbh(\Xi)}$ is a morphism with
  kernel $\rl(\Xi)$ which gives an embedding
  $\rl/\rl(\Xi)\hookrightarrow\bigoplus_{E\in\mbh(\Xi)}\rl_\sse$.  For
  any $\mu\in\rl$ the set $\{\cp_\sse\mu \mid E\in \mbh(\Xi)\}$ is
  compact in $\rl$ and
  \begin{equation}\label{eq:hvzspexi}
    \rl\hyphen\spe(\mu)=\ccup_{E\in\mbh(\Xi)}\spec(\cp_\sse\mu).
  \end{equation} 
  Let $\cs\subset\mbg(\Xi)$ a subsemilattice containing $\Xi$ and
  $\cs_{\max}$ the set of maximal elements of $\cs\setminus\{\Xi\}$. 
  If $\mu\in\rl(\cs)$ then
  \begin{equation}\label{eq:hvzspexics}
    \rl\hyphen\spe(\mu)=\ccup_{E\in\cs_{\max}}\spec(\cp_\sse\mu) .
  \end{equation} 
  If $\cs$ is a countable set then $\cp_E$ is expressed in terms of
  translations at infinity as in \eqref{eq:liminfcs}.
\end{theorem}

We give now a Hilbert space version of these algebraic results.

\begin{lemma}\label{lm:nocomp} $ $\\[1mm]
  {\rm(1)} If $\dim\Xi<\infty$ and $W$ is of finite multiplicity then
  $\rf(\Xi)=\rf\cap K(\ch)$. \\[1mm]
  {\rm(2)} If $\dim\Xi<\infty$ and $W$ is of infinite multiplicity or
  $\dim\Xi=\infty$ then $\rf\cap K(\ch)=0$.
\end{lemma}

\begin{proof}
  If $W$ is a representation of an arbitrary symplectic space and if
  $\xi\neq0$ the spectrum of $\phi(\xi)$ is equal to $\R$ and purely
  absolutely continuous, hence
  \begin{equation}\label{eq:wdecay}
    \wlim_{r\to\infty}W(r\xi)=0 \quad\text{if } \xi\neq0 .
  \end{equation}
  Thus if $T\in\rf$ is compact then $\slim_{r\to\infty}T W(r\xi)=0$
  $\forall\xi\neq0$ hence $\cp_\sse T=0$ with the notations of
  Proposition \ref{pr:Fxisigma}.  If $\dim\Xi<\infty$ we get
  $\cp_\sse T=0$ for any hyperplane $E$ and then Theorem \ref{th:ahvz}
  implies $T\in\rf(\Xi)$. Thus $\rf\cap K(\ch)\subset\rf(\Xi)$ and if
  $W$ of finite multiplicity then the inverse inclusion is true by
  Theorem \ref{th:RKR}.  If $W$ is of infinite multiplicity then
  $\rf\cap K(\ch)=0$ by Remark \ref{re:isonot}.
  
  Now assume $\dim \Xi=\infty$ and let $T\in\rf\cap K(\ch)$. Then
  $\cp_\sse T=0$ if $E=\xi^\sigma$ with $\xi\neq0$ by the argument at the
  beginning of this proof. If $\varepsilon>0$ there is $F\in\mbg(\Xi)$
  and $S\in\rf_\ssf$ such that $\|S-T\|<\varepsilon$. If
  $\xi\in F^\sigma$ is nonzero then $F\subset\xi^\sigma=E$ hence
  $\cp_\sse S=S$ and $\cp_\sse T=0$ so $\|S\|<\varepsilon$ hence
  $\|T\|<2\varepsilon$.  Since $\varepsilon$ is arbitrary we get
  $T=0$.
\end{proof}

\begin{proposition}\label{pr:nohvz}
  If $\Xi$ is infinite dimensional or if $\Xi$ is finite dimensional
  and $W$ is of infinite multiplicity then $\rf\cap K(\ch)=0$, hence
  $\spe(T)=\spec(T)$ if $T\win\rf$. 
 \end{proposition}

 This is a consequence of \eqref{eq:easy} and Lemma \ref{lm:nocomp}.
 Then we have a general HVZ type theorem:

\begin{theorem}\label{th:hvz1}
  Assume $\Xi$ finite dimensional and $W$ of finite multiplicity.  If
  $T\win\rf$ then
\begin{equation}\label{eq:hvz1}
  \spe(T) = \ccup_{E\in\mbh(\Xi)}\spec(\cp_\sse T) .
\end{equation}
Moreover, for any $E\in\mbh(\Xi)$ and any nonzero $\xi\in E^\sigma$ we
have
\begin{equation}\label{eq:hvz2}
\cp_\sse T=\slim_{|r|\rarrow\infty} W(r\xi)^*T W(r\xi) .
\end{equation}
\end{theorem}

\begin{proof}  
  We will deduce this from Theorem \ref{th:ahvz} by taking into
  account the canonical isomorphisms between the algebras $\rl$ and
  $\rf$. For any operator $T\in\rf$ let us denote {\small
    $\Xi$}-$\spe(T)$ the spectrum of the image of $T$ in the quotient
  algebra $\rf/\rf(\Xi)$. Then \eqref{eq:hvzspexi} give us 
  \begin{equation}\label{eq:hvzxi}
    \text{\small $\Xi$-}\spe(T)=\ccup_{E\in\mbh(\Xi)}\spec(\cp_\sse T).
  \end{equation} 
  Taking into account Proposition \ref{pr:Fxisigma}, it remains only
  to prove that $\text{\small $\Xi$-}\spe(T)$ coincides with the
  essential spectrum of $T$ in the Hilbert space sense. But (1) of
  Lemma \ref{lm:nocomp} implies that $\rf/\rf(\Xi)$ is the image of
  $\rf$ in the Calkin algebra.
\end{proof}

The next result is an immediate consequence of Theorem \ref{th:ahvz}
(by the same argument as above).  In the simple case when $\cs$ is
finite this is a generalisation of the usual $N$-body HVZ theorem,
\conf Ch.\ 10, \S 9.4 and Proposition 8.4.2 in \cite{ABG}.

\begin{theorem}\label{th:hvz3}
  Let $\Xi$ be finite dimensional and $W$ of finite multiplicity.If
  $\cs\subset\mbg(\Xi)$ is a subsemilattice containing $\Xi$ and
  $\cs_{\max}$ the set of maximal elements of $\cs\setminus\{\Xi\}$,
  then
  \begin{equation}\label{eq:hvz3}
    \spe(T)=\ccup_{E\in\cs_{\max}}\spec(\cp_\sse T) \quad
    \forall\, T\win\rf(\cs).
  \end{equation} 
\end{theorem}

One may use Theorem \ref{th:hvz1} to show that some operators are not
affiliated to $\rf$. Note first the following consequence of
\eqref{eq:hvz2}.

\begin{corollary}\label{co:lim0}
  Assume $\Xi$ finite dimensional and $W$ of finite multiplicity. Then
  if $T\in\rf$ we have $\slim_{r\to\infty}W(r\xi)^*T W(r\xi)=0$
  $\forall\xi\neq0$ if and only if $T$ is a compact.
\end{corollary}

From this it follows that the generator of the dilation group
associated to a Lagrangian decomposition of $\Xi$ is a self-adjoint
operator not affiliated to $\rf$. We prove this for $\dim\Xi=2$, the
general case is similar.  We may assume $\Xi=T^*\R$ and we work in the
Schr\"odinger representation on $\ch=L^2(\R)$.  Then $(qp+pq)/2$ is
essentially self-adjoint on $C_\rmc^\infty(\R)$ and its closure is a
self-adjoint operator $\omega$ such that
$(\e^{\rmi t\omega}f)(x)=\e^{t/2}f(\e^tx)$, so $\omega$ is the
\emph{generator of the dilation group} on $\ch$.  Clearly
$\e^{\rmi t\omega}q\e^{-\rmi t\omega}= \e^{t}q$ and
$\e^{\rmi t\omega}p\e^{-\rmi t\omega}= \e^{-t}p$ and the field
operators are of the form $aq+bp$ with $a,b\in\R$ hence
$\e^{\rmi t\omega}\rf\e^{-\rmi t\omega}=\rf$ $\forall t\in\R$ and the
dilations induce a group of automorphisms of $\rf$.

\begin{proposition}\label{pr:dilnaf}
The self-adjoint operator  $\omega$ is not affiliated to $\rf$.
\end{proposition}

\begin{proof}
  We will prove that
  \begin{equation}\label{eq:wdelta}
    \wlim_{|r|\to\infty}W(r\xi)^*(\omega+\rmi)^{-1}W(r\xi)=0
    \quad \forall \xi\in\Xi\setminus\{0\} .
  \end{equation}
  If $\omega$ is affiliated to $\rf$ then
  $(\omega+\rmi)^{-1}\in\rf$. Then by \eqref{eq:wdelta} and Corollary
  \ref{co:lim0} the operator $(\omega+\rmi)^{-1}$ is compact which is
  impossible because the spectrum of $\omega$ is purely absolutely
  continuous. It remains to prove \eqref{eq:wdelta}.  Note that
  \begin{equation}\label{eq:resdil}
    [\rmi(\omega+\rmi)^{-1}f](x)=\int_1^\infty f(\lambda x)
    \frac{\dd\lambda}{\lambda^{3/2}} .
  \end{equation}
  Indeed, since
  $\rmi(\omega+\rmi)^{-1}=\int_0^\infty\rme^{\rmi t\omega-t}\dd t$
  this follows by a change of variables in
\[
\rmi[(\omega+\rmi)^{-1}f](x)=
\int_0^\infty \rme^{-t} [\rme^{\rmi t\omega}f](x)  \dd t=
\int_0^\infty \e^{-t/2}f(\e^tx)  \dd t .
\]
Now let $\xi=(x,k)$. Then
\[
[W(\xi)^*\rmi(\omega+\rmi)^{-1}W(\xi)f](y)=
\int_1^\infty \e^{\rmi(\lambda-1)k(y-x)}
f(\lambda(y-x)+x)
  \frac{\dd\lambda}{\lambda^{3/2}} 
\]
hence
\[
|\braket{W(r\xi)g}{\rmi(\omega+\rmi)^{-1}W(r\xi)f}|\leq
\int_1^\infty \frac{\dd\lambda}{\lambda^{3/2}}
\left|\int_\R \dd y \, \e^{\rmi(\lambda-1)rky}
\bar{g}(y) f(\lambda y +(1-\lambda) rx) \right| .
\]
Let us denote $\ci_r$ the last integral over $y$. Then
\[
  \left|\ci_r\right|\leq \|g\|
  \left[\int_\R |f(\lambda y +(1-\lambda) rx)|^2 \dd y \right]^{1/2}
  =\lambda^{-1/2}\|g\|\|f\|
\]
We have to show that
$\braket{W(r\xi)g}{\rmi(\omega+\rmi)^{-1}W(r\xi)f}\to0$ if
$r\to\infty$ for any $f,g\in L^2(\R)$. Clearly it suffices to assume
$f,g$ continuous with compact support and then, by the preceding
estimate and the Lebesgue dominated convergence theorem, it suffices
to prove that $\ci_r\to0$ for any $\lambda>1$. Since $\xi\neq0$, if
$x=0$ then $k\neq0$ then $\ci_r\to0$ by the Riemann-Lebesgue lemma. If
$x\neq0$ and $r$ is large enough then
$\bar{g}(y) f(\lambda y +(1-\lambda) rx)=0\ \forall y$ hence
$\ci_r=0$. This proves the assertion. 
\end{proof}

\section{Subalgebras of \texorpdfstring{$\rf$}{F} and
  Hamiltonians affiliated to them} \label{s:LDsab}
\protect\setcounter{equation}{0} 
\renewcommand\thesubsubsection{\thesubsection.\arabic{subsubsection}}

\subsection{Half-Lagrangian decompositions and magnetic
  fields}\label{ss:HLdec}

The decompositions of a symplectic space considered here and the
representations associated to them will enable us to describe rather
explicitly certain subalgebras of interest of $\rf$.

Let $X$ be a finite dimensional real vector space, $X^*$ its dual, and
$\bracet{\cdot}{\cdot}:X\times X^*\to\R$ the duality form. Then the
Fourier transform%
\footnote{\ This definition and that of the scalar product in $L^2(X)$
  require the choice of a Lebesgue measure $\d x$ on $X$ but the
  choice is irrelevant in the construction of the observables and
  algebras of interest.}
$(F u)(k)=\int_X \e^{-\rmi\bracet{x}{k}}u(x)\d x$ induces an
isomorphism of $L^2(X)$ onto $L^2(X^*)$. The \emph{position and
  momentum observables} $q$ and $p$ are defined as follows: if
$\varphi$ and $\psi$ are (equivalences classes of) complex Borel
functions on $X$ and $X^*$ then $\varphi(q)=M_\varphi$ and
$\psi(p)=F^{-1}M_\psi F$ where $M_\varphi$ and $M_\psi$ are the
operators of multiplication by $\varphi$ in $L^2(X)$ and by $\psi$ in
$L^2(X^*)$. We set
\begin{equation}\label{eq:BK}
\mbfb(X)=B(L^2(X)) \quad\text{and}\quad\mathbf{K}(X)=K(L^2(X))
\end{equation}
and identify functions $\varphi\in L^\infty(X),\psi\in L^\infty(X^*)$
with operators $\varphi(q),\psi(p)\in\mbfb(X)$, so
\begin{equation}\label{eq:minbedI}
  C_\rmb(X)\subset L^\infty(X)\subset\mbfb(X) \quad\text{and}\quad
  C_\rmb(X^*)\subset L^\infty(X^*)\subset\mbfb(X).
\end{equation}
If $X$ is an Euclidean space $X^*$ is identified with $X$ as usual but
the spaces $L^\infty(X)$ and $L^\infty(X^*)$ are different when viewed
as subspaces of $\mbfb(X)$: the first one is the space of $\varphi(q)$
while the second is the space of $\varphi(p)$ with
$\varphi\in L^\infty(X)$. Or $L^\infty(X^*)=F^{-1}L^\infty(X)F$.

If $k\in X^*$ and $x\in X$ let $\bracet{q}{k}=\varphi(q)$ with
$\varphi=\bracet{\cdot}{k}$ and $\bracet{x}{p}=\psi(p)$ with
$\psi=\bracet{x}{\cdot}$. These are self-adjoint operators on $\ch$
such that $[k(q), x(p)]=\rmi \bracet{x}{k}$ and
\begin{equation}\label{eq:qpI} 
 (\e^{\rmi\bracet{q}{k}}f)(y)=\e^{\rmi\bracet{y}{k}}f(y),\quad
    (\e^{\rmi\bracet{x}{p}}f)(y)=f(x+y) .
\end{equation}
The \emph{phase space} of $X$ is the cotangent space
$T^*X=X\oplus X^*$ equipped with the symplectic form
$\sigma(\xi,\eta) = \bracet{y}{k}-\bracet{x}{l}$ if $\xi=x+k,\eta=y+l$
with $x,y\in X$ and $k,l\in X^*$, \conf \cite[\S21.1]{Hor2}.  A simple
modification of this symplectic form allows us to introduce constant
magnetic fields into the formalism and so to treat N-body systems
which interact with an external asymptotically constant magnetic field
(we refer to \cite{GL} for a detailed study of such systems). The
constant magnetic field may be interpreted as a bilinear
anti-symmetric form $\beta:X\times X\to\R$ and then clearly the next
relation defines a symplectic form on $T^*X$:
\begin{equation}\label{eq:LD}
  \sigma(\xi,\eta) = \beta(x,y) +\bracet{y}{k}-\bracet{x}{l} .
\end{equation}
In physical terms $X$ is the configuration space of a system, $T^*X$
is its phase space, and $\beta$ represents \emph{an external constant
  magnetic field}.  A $C^*$-algebraic approach to these questions,
including non constant magnetic fields, was proposed in \cite{GI2}.

We reconsider this construction in the symplectic setting. A finite
dimensional symplectic space $\Xi$ is even dimensional, say
$\dim\Xi=2n$, and if
$E,F$ are subspaces of $\Xi$:\\[1mm]
(1) an isotropic $E$ is Lagrangian if and only if $\dim E =n$;\\
(2) $\{E\mid E\text{ is Lagrangian}\}$ is a
smooth submanifold of $\mbg(\Xi) $ of dimension $n(n+1)/2$; \\
(3) $E$ is isotropic (coisotropic) $\Leftrightarrow$ $E$ is
contained in (contains) a Lagrangian subspace;\\
(4) $\dim E=n$ $\Rightarrow$ there is a Lagrangian $F$ such that
$\Xi=E\oplus F$;\\
(5) $E$ is symplectic $\Leftrightarrow E^\sigma$ is symplectic
$\Leftrightarrow \Xi=E+E^\sigma \Leftrightarrow \Xi=E\oplus E^\sigma$.

\begin{definition}\label{df:hlag}
  A \emph{half-Lagrangian decomposition} of $\Xi$ is a couple
  $(X,X^*)$ of subspaces of $\Xi$ such that $\dim X=\dim\Xi/2$, $X^*$
  is Lagrangian, and $X\cap X^* ={ 0}$. If $X$ is also Lagrangian
  we call $(X,X^*)$ a \emph{Lagrangian decomposition}.
\end{definition}

Under the conditions of the definition we have $\Xi=X\oplus X^*$ and
we usually say that $\Xi=X\oplus X^*$ is a half-Lagrangian or
Lagrangian decomposition of $\Xi$.  Let $\beta$ be the bilinear
anti-symmetric form on $X$ defined by the restriction
$\beta=\sigma|X\times X$.  If $\xi,\eta\in\Xi$ and $\xi=x+k,\eta=y+l$
are their decompositions in $X\oplus X^*$ and since $X^*$ is isotropic
\begin{equation}\label{eq:magnet}
  \sigma(\xi,\eta) = \beta(x,y)+\sigma(x,l)+\sigma(k,y).
\end{equation}
A half-Lagrangian decomposition is Lagrangian if and only if
$\beta=0$. 

Since $X^*$ is isotropic the map $k\ni X^*\mapsto\sigma(k,\cdot)$ is
an injective, hence bijective, map of $X^*$ onto the dual of $X$ and
we will identify these spaces by setting
\begin{equation}\label{eq:dualL}
  k(x)\equiv\bracet{x}{k}\doteq\sigma(k,x) \text{ for } x\in X
  \text{ and } k\in X^*.
\end{equation}
Then \eqref{eq:magnet} becomes \eqref{eq:LD}. So the choice of a
half-Lagrangian decomposition amounts to choosing the configuration
space $X$ of a system in an external constant magnetic field
$\beta$. The decomposition is Lagrangian if and only if the magnetic
field is zero.

If $W$ is any representation of $\Xi$ on a Hilbert space $\ch$, for
$x\in X,k\in X^*$ and we set
\begin{equation}\label{eq:wpxqk3}
    W^p_x=W(x,0) \text{ and } W^q_k=W(0,k).
\end{equation}
Then if $x,y\in X$ and $k,l\in X^*$ the CCR \eqref{eq:ccr1I} imply
\begin{align}
& W^p_{x+y}= \e^{\frac{\rmi}{2}\beta(x,y)} W^p_x W^p_y,\quad
W^q_{k+l}= W^q_kW^q_l, \quad
W^p_x W^q_k =\e^{\rmi\bracet{x}{k}} W^q_k W^p_x,  \label{eq:wpxqk2} \\
& W(\xi)= \e^{-\frac{\rmi}{2}\bracet{x}{k}} W^p_x W^q_k
= \e^{\frac{\rmi}{2}\bracet{x}{k}}W^q_kW^p_x\quad\text{where }\xi=x+k.
\label{eq:wpxqk1}  
\end{align}
The \emph{Schr\"odinger representation} associated to the preceding
half-Lagrangian decomposition is the irreducible representation of
$\Xi$ acting in $\ch=L^2(X)$ as follows:
\begin{equation}\label{eq:SchrobI}
  (W(\xi)f)(y) =
  \e^{\frac{\rmi}{2}\sigma(k,x)+\rmi\sigma(k+\frac{x}{2},y)}
  f(y+x) \quad \text{if}\quad \xi=x+k\in X\oplus X^*.
\end{equation}
Clearly this may be rewritten as
\begin{equation}\label{eq:wqpI}
  W(\xi)= \e^{\frac{\rmi}{2}(\beta(x,q)+\bracet{x}{k})}
  \e^{\rmi\bracet{q}{k}} \e^{\rmi \bracet{x}{p}}
  =  \e^{\frac{\rmi}{2}(\beta(x,q)-\bracet{x}{k})}
  \e^{\rmi\bracet{x}{p}} \e^{\rmi\bracet{q}{k}}.
\end{equation}
From $[\varphi(q),\bracet{x}{p}]=\rmi(x\varphi')(q)$ if $\varphi$ is
$C^1$ and $x\varphi'$ its derivative in the direction $x$, we get
\begin{equation}\label{eq:bp}
[\bracet{x}{p}, \beta(y,q)]=\rmi\beta(x,y),
\end{equation}
in particular $[\bracet{x}{p}, \beta(x,q)]=\rmi\beta(x,x)=0$. Thus, in
terms of \eqref{eq:wpxqk3}, we have
\begin{equation}\label{eq:wpxqk4}
  W^p_x=\e^{\rmi(\bracet{x}{p} + \frac{1}{2}\beta(x,q))}
  =\e^{\frac{\rmi}{2}\beta(x,q)}\e^{\rmi \bracet{x}{p}}
\text{ and } W^q_k=\e^{\rmi\bracet{q}{k}} .
\end{equation}
Moreover, the Baker-Campbell-Hausdorff formula gives
\begin{equation}\label{eq:wqpI1}
  W(\xi) = \e^{\rmi\phi(\xi)} \text{ with }
  \phi(\xi)= \bracet{q}{k}+\bracet{x}{p} + \half\beta(x,q).
\end{equation}
Here $\phi(\xi)$ is the \emph{field operator at the point
  $\xi=x+k\in\Xi$}.

We keep the notation $\beta$ for the map
$X\ni z\mapsto\beta(\cdot,z)\in X^*$; this is the unique linear map
$\beta:X\to X^*$ such that $\beta(x,y)=\bracet{x}{\beta y}$ for all
$x,y\in X$.  Then $\beta^*:X^{**}=X\to X^*$ and the anti-symmetry of
the bilinear form $\beta$ is equivalent to $\beta^*=-\beta$.  Clearly
$\beta(x,q)=\bracet{x}{\beta q}$ where $\beta q$ is the map $X\to X^*$
defined by $(\beta q)(y)=\beta y=\beta(\cdot,y)$.  Thus, if we set
$p_\beta=p+\frac{1}{2}\beta q$, then the field operator may be written
$\phi(\xi)=\bracet{q}{k}+\bracet{x}{p_\beta}$.

We will now prove that the Hamiltonian of a non-relativistic system of
particles in a constant magnetic field is affiliated to $\rf$.  With
the help of the perturbative criterion of Theorem \ref{th:afgrad} one
may then show that Hamiltonians of $N$-body systems with singular
$n$-body interactions ($1\leq n\leq N$) and asymptotically constant
external magnetic fields are affiliated to $\rf$. The arguments are
similar to those of \S\ref{ss:nbb} but we do not develop this topic
(this could be done along the lines of \cite[\S4]{GI2}).  The next
lemma is valid in any irreducible representation of a finite
dimensional symplectic space $\Xi$.

\begin{lemma} \label{lm:aff} Let $\eta=\{\eta_1,\dots,\eta_n\}$
  be a generating set for the subspace $E\in\mbg(\Xi)$. Then the form
  sum $\Delta_\eta=\sum_{k=1}^n\phi(\eta_k)^2$ is a self-adjoint
  operator affiliated to $\rf(E)$.
\end{lemma}

\begin{proof}
  We show that $T=(\Delta_\eta+1)^{-1}\in\rf(E)$ by checking the three
  conditions of Theorem \ref{th:mainthfd}.  Note that the form domain
  of $\Delta_\eta$ is $\ck=\ccap_k\dom(\phi(\e_k))$.  We embed as
  usual $\ck\subset\ch=\ch^*\subset\ck^*$.  By \eqref{eq:ccr1I} the
  operators $W(\xi)$ leave invariant $\ck$ so extend to bounded
  operators in $\ck^*$. Below we consider $T$ as an operator
  $\ch\rarrow\ck$ and $\ck^*\rarrow\ch$ (by taking adjoints) and
  $\Delta_\eta$ as operator $\ck\rarrow\ck^*$. Clearly
  $[W(\xi),T]=T[\Delta_\eta,W(\xi)]T$ and
  \begin{align*}
    [\Delta_\eta,W(\xi)]&=
    W(\xi)\tsum{k}{}
    \big[\big(\phi(\eta_k)+\sigma(\eta_k,\xi)\big)^2
     -\phi(\eta_k)^2\big]\\ 
    &=W(\xi) \tsum{k}{}\Big[2\phi(\eta_k)\sigma(\eta_k,\xi)
      +\sigma(\eta_k,\xi)^2\big]  
  \end{align*}
  If $\xi\in E^\sigma$ the last term is zero, so the last condition of
  Theorem \ref{eq:ccr1I} is satisfied. Then
  \[
    \Vert[W(\xi),T]\Vert\leq\tsum{k}{}\Big[2|\sigma(\eta_k,\xi)|\,\Vert
    \phi(\eta_k)T\Vert +\sigma(\eta_k,\xi)^2\Big],
  \]
  hence the first condition of Theorem \ref{th:mainthfd} is satisfied
  too.  Now observe that if $\xi=\sum_{k=1}^n\xi_k$ an induction
  argument gives
  \[
    W(\xi)=\exp\big({\textstyle
      \sum_{j<k}}\sigma(\xi_j,\xi_k)/2\rmi\big) W(\xi_1)\dots
    W(\xi_n).
  \]
  If $\xi\in E$ then we can write it as $\xi=\sum t_k \eta_k$ with real
  $t_k$. By using the preceding formula we see that in order to prove
  the second condition of Theorem \ref{th:mainthfd} it suffices to
  show that $\Vert[W(t \eta_k)-1]T\Vert\rarrow0$ when $t\rarrow 0$ for
  each $k$.  But
\[
\Vert \big(W(t \eta_k)-1\big)T\Vert \leq
\Vert \big(\e^{\rmi t\phi(\eta_k)}-1\big)
\big( \phi(\eta_k)^2+1\big)^{-1/2}\Vert
\,\Vert\big(\phi(\eta_k)^2+1\big)^{1/2}T\Vert
\]
so the assertion follows immediately from the estimate
$\phi(\eta_k)^2\leq \Delta_\eta$.
\end{proof}

The case of interest here is $E=X$; recall that $\Xi=X\oplus X^*$ is a
half-Lagrangian decomposition.  If $\eta$ is a basis of $X$ then
$\Delta_\eta$ is the ``free'' Hamiltonian of a system with configuration
space $X$ in the constant magnetic field $\beta=\sigma|X\times
X$. More general such ``free'' Hamiltonians are the self-adjoint
operators affiliated to the $C^*$-algebra $\rf(X)$ which, by
\eqref{eq:wpxqk3} and \eqref{eq:wpxqk4}, is the norm closure in
$\mbfb(X)$ of the set of operators
\[
  W(\mu)=\int_{X}W_x^p u(x)\d x =\int_{X}\e^{\rmi\bracet{x}{p_\beta}} u(x)\d x
  \text{ with } u\in L^1(X).
\]
We could write this as $\hat{u}(p_\beta)$ and think of the elements of
$\rf(X)$ as functions of class $C_0$ of $p_\beta$, but this is rather
formal because the components of the vector-operator $p_\beta$ do not
commute: if $\beta\neq0$ then $\{W_x^p\}_{x\in X}$ is only a
projective representation of $X$, \conf \eqref{eq:wpxqk2}.

\begin{example}\label{ex:cross}{\rm The simplest physically
    interesting example is a non-relativistic particle with
    configuration space $X=\R^3$ interacting with the constant
    magnetic field $b\in X$. If $x,y\in X$ let $x\times y$ be their
    cross product and define the bilinear anti-symmetric form
    $\beta:X\times X\to\R$ by $\beta(x,y)=\bracet{x\times y}{b}$. The
    symplectic space will be the phase space $\Xi=X\oplus X$ equipped
    with the symplectic form \eqref{eq:LD}, \ie
    \begin{equation}\label{eq:cross}
      \sigma(\xi,\eta) = \bracet{x\times y}{b}
      +\bracet{y}{k}-\bracet{x}{l} .
    \end{equation}
    Since
    $\beta(x,y)=\bracet{x\times y}{b}=\bracet{x}{y\times b}$ we
    get $\beta q=q\times b$ hence $p_\beta=p+\frac{1}{2}q\times b$.
    From \eqref{eq:wqpI1} we get for $\xi=(e,0)\equiv e\in X$
    \[ \phi(e)= \bracet{e}{p} + \half\beta(e,q) = \bracet{e}{p} +
      \half\bracet{e\times q}{b} = \bracet{e}{p} +
      \half\bracet{e}{q\times b} \] If $e_1,e_2,e_3$ the natural basis
    of $X=\R^3$ then $\bracet{e_j}{q\times b}=(q\times b)_j$ and
    \[
      (q\times b)_1=q_2b_3-q_3b_2,\quad (q\times b)_2=q_3b_1-q_1b_3,
      \quad (q\times b)_3=q_1b_2-q_2b_1
    \]
    hence $\phi(e_j)=p_j+\half(q_kb_l-q_lb_k)$ with usual notation
    rules. Then 
    \[
      \tsum{}{} \phi(e_j)^2=
      (p_1+\half(q_2b_3-q_3b_2)^2
      +p_2+\half(q_3b_1-q_1b_3))^2
      +(p_3+\half(q_1b_2-q_2b_1))^2.
    \]
    Recall that the vector field $a:X\to X$ defined by
    $a(x)=\frac{1}{2}b\times x=-\frac{1}{2}x\times b$ is the magnetic
    vector potential, \ie $\nabla\times a=b$. Thus $p_\beta=p-a$ and
    \[
      \tsum{j=1}{3} \phi(e_j)^2=(p-a)^2= H 
    \]    
    which is the Hamiltonian of a non-relativistic particle with mass
    $2$ and charge $1$ interacting with the constant magnetic field
    $b\in X$. Also  $H=p_\beta^2$   with the usual rule that the
    square of $p_\beta$ is the sum of the squares of its components
    ($p_\beta$ is the magnetic momentum). 

  }\end{example}

\begin{remark}\label{re:cross}{\rm

    In the preceding example we worked with the cotangent space
    $T^*X= X\oplus X$ but modified its symplectic form, \conf
    \eqref{eq:cross}.  Since the field algebras associated to
    symplectic spaces of dimension $6$ are isomorphic, we should be
    able to work in $T^*X$ with the standard symplectic form
    \eqref{eq:standardLD} (\ie \eqref{eq:LD} with $\beta$ replaced by
    $0$) and this is indeed possible. Now the field operator at the
    point $\xi=(x,k)$ is $\phi(x,k)=\bracet{q}{k}+\bracet{x}{p}$. We
    take $\eta_j=(\half b\times e_j,e_j)\in T^*X$ with $j=1,2,3$ and
    denote $E$  the subspace they generate. Clearly
    $\phi(\eta_j)=p_j+\half(q_kb_l-q_lb_k)$ which are the same
    operators as in Example \ref{ex:cross} hence 
    $H=\sum_{j=1}^{3} \phi(\eta_j)^2=(p-a)^2$ as before but this time
    $H$ is affiliated to $\rf(E)$. 
    
  }\end{remark}

\subsection{Subalgebras of
  \texorpdfstring{$\rf$}{F}} \label{ss:gcsab} 

The following examples of $C^*$-subalgebras of the field algebra play
a role in a generalized version of the N-body problem. By Remark
\ref{re:isonot} it suffices to work in an irreducible representation
of $\Xi$.  The symplectic space is finite dimensional and the
subalgebras that we construct depend on a Lagrangian decomposition
$\Xi=X\oplus X^*$ of $\Xi$, \conf Definition \ref{df:hlag}. By
assertions (2) and (4) in \S\ref{ss:HLdec} there are infinitely many
such choices and the algebras constructed below depend on the
choices.

We fix a Lagrangian decomposition $\Xi=T^*X$ and we work in the
Schr\"odinger representation associated to it, see \S\ref{ss:HLdec} for
framework and notations of. Thus $\ch=L^2(X)$, the field operator at
the point $\xi=(x,k)$ is $\phi(\xi) =\bracet{x}{p}+\bracet{q}{k}$, and
the $C^*$-algebra generated by these operators is the field algebra
$\rf\subset \mbfb(X)=B(L^2(X))$.

Recall that by \eqref{eq:minbedI} we identify $\cbu(X)$ and $\cbu(X^*)$
with the $C^*$-subalgebras of $\mbfb(X)$ consisting of the operators
$\varphi(q)$ and $\psi(p)$ respectively, with $\varphi\in\cbu(X)$ and
$\psi\in\cbu(X^*)$.

If $Y\in\mbg(X)$ then $C_0(X/Y)\subset \cbu(X)$ as in \S\ref{ss:gc0}.
If $Z\in\mbg(X^*)$ then we also have $C_0(Z^*)\subset\cbu(X)$ because
$Z^*=X/Z^\perp$ with
$Z^\perp=\{x\in X \mid \bracet{x}{k}=0\ \forall k\in Z\}$. Thus
\begin{equation}\label{eq:imbed}
C_0(X/Y),C_0(Z^*) \subset\cbu(X)\subset\mbfb(X). 
\end{equation}
On the other hand, if
$Y^\perp=\{k\in X^*\mid \bracet{x}{k}=0\ \forall x\in Y\}$ then
$Y^*= X^*/Y^\perp$. Hence
\begin{equation}\label{eq:inbed}
C_0(X^*/Z),C_0(Y^*) \subset\cbu(X^*)\subset\mbfb(X).
\end{equation}
The Grassmann $C^*$-algebras $\cg_\ssx$ and $\cg_{\ss{X^*}}$ defined
in \eqref{eq:cgx1} are embedded in $\mbfb(X)$ as above.

We may now give ``explicit'' descriptions of some components of $\rf$.

\begin{proposition}\label{pr:rfyz}
  If $Y\in\mbg(X)$ and $Z\in\mbg(X^*)$ then
  \begin{equation}\label{eq:rfyz}
  \rf(Y)=C_0(Y^*), \ \rf(Z)=C_0(X/Z^\perp), \
  \rf(Y+Z)=C_0(Y^*)\cdot C_0(X/Z^\perp) . 
  \end{equation}
In particular $\rf(X)=C_0(X^*)$, $\rf(X^*)=C_0(X)$, and for any
$Y\subset X$ 
\begin{equation}\label{eq:rfysigma}
  \rf(Y^\perp)=C_0(X/Y)\quad\text{and}\quad
  \rf(Y^\sigma) =C_0(X/Y)\cdot C_0(X^*).
\end{equation}
Moreover 
\begin{align}
  \rf_\ssx & =\tsum{Y\in\mbg(X)}{\rmc}\rf(Y)
          =\tsum{Y\in\mbg(X)}{\rmc} C_0(Y^*),\label{eq:rfx1}\\
  \rf_{\ss{X^*}} & =\tsum{Z\in\mbg(X^*)}{\rmc}\rf(Z)
              =\tsum{Y\in\mbg(X)}{\rmc} C_0(X/Y),\label{eq:rfx2}\\
  \rf_{\ss{\supset X}}& =\tsum{Y\in\mbg(X)}{\rmc}\rf(Y^\sigma)
                   =\tsum{Y\in\mbg(X)}{\rmc}C_0(X/Y)\cdot C_0(X^*).
              \label{eq:rfXsygma}
\end{align}
\end{proposition}

\begin{proof}
  
  $\rf(Y)$ is the norm closure of the set of operators
  $W(\mu)=\int_{Y}W(y) u(y)\d y$ with $u\in L^1(Y)$, \conf Definition
  \ref{df:wkfield}.  Then \eqref{eq:wqpI} gives
  $W(\mu)=\int_{Y} \e^{\frac{\rmi}{2}\bracet{y}{p}} u(y)\d
  y=c\hat{u}(p)$ where $c$ is a constant depending on the definition
  of the Fourier transform. Since
  $\hat u\in C_0(Y^*)\subset \cbu(X^*)$ we get $\rf(Y)=C_0(Y^*)$.  By
  the same definition and with the notation \eqref{eq:clinspan} we
  have
  $\rf(Y)= C^*(\bracet{y_1}{p})\cdot C^*(\bracet{y_2}{p})
  \cdot\ldots\cdot C^*(\bracet{y_n}{p})$ where $y_1,\dots,y_n$ is a
  basis of $Y$, which gives a second proof of the same result.  If
  $Z\subset X^*$ the computation of $\rf(Z)$ is similar: if we take
  $\xi=(x,k)=(0,z)$ in \eqref{eq:wqpI} we get
  $W(\xi) = \e^{\rmi\bracet{q}{z}}$ hence if $u\in L^1(Y)$ then
  $W(\mu)=\int_Y \e^{\rmi\bracet{q}{z}} u(y)\d y=c\hat{u}(q)$ where
  $\hat{u}\in C_0(Z^*)$ and $Z^*=X/Z^\perp$, \conf the comments before
  the statement of the proposition. Then by using \eqref{eq:product}
  and we get
  \[
    \rf(Y+Z)=\rf(Y)\cdot\rf(Z)=\rf(Z)\cdot\rf(Y)
    =C_0(X/Z^\perp)\cdot C_0(Y^*) .
  \]
  The first relation in \eqref{eq:rfysigma} follows from the second
  one in \eqref{eq:rfyz} by taking $Z=Y^\perp$ since
  $(Y^\perp)^\perp=Y$. The second relation in \eqref{eq:rfysigma} is a
  consequence of the third one in \eqref{eq:rfyz} since
  $Y^\sigma=X+Y^\perp$ but it is instructive to give a direct proof.
  $\rf(Y^\sigma)$ is the closure of the set of operators
  $W(\mu)=\int_{Y^\sigma}W(\xi) \mu(\d\xi)$ with
  $\mu\in L^1(Y^\sigma)$. If
  $\mu(\xi)=\e^{\frac{\rmi}{2}\bracet{x}{k}} u(x)v(k)$ with
  $u\in C_\rmc(X)$ and $v\in C_\rmc(Y^\perp)$ then by \eqref{eq:wqpI}
\begin{align*}
  W(\mu) &=\int_X\int_{Y^\perp}W(x,k)
           \e^{\frac{\rmi}{2}\bracet{x}{k}}u(x)v(k) \d x\d k \\
  &=\int_X\int_{Y^\perp} \e^{\rmi\bracet{x}{p}}\e^{\rmi\bracet{q}{k}}
    u(x)v(k) 
    \d x\d k =c \hat{u}(-p)\hat{v}(-q)
\end{align*}
where $\hat{u}\in C_0(X^*)$ and $\hat{v}\in C_0(X/Y)$ are the Fourier
transforms of $u$ and $v$, we identify $Y^\perp=(X/Y)^*$, and $c$ is
a constant depending on the normalization chosen in the definition of
the Fourier transforms. Thus we have
\[
  \hat{u}(-p)\hat{v}(-q) \in C_0(X^*)\cdot C_0(X/Y)= C_0(X/Y)\cdot
  C_0(X^*).
\]
This implies $\rf(Y^\sigma)=C_0(X/Y)\cdot C_0(X^*)$ because the linear
space generated by the functions $\mu$ is dense in
$L^1(Y^\sigma)$. Finally, $Z\mapsto Z^\perp$ being a bijective map
$\mbg(X^*)\to\mbg(X)$, \eqref{eq:rfyz} implies \eqref{eq:rfx1} and
\eqref{eq:rfx2}.
\end{proof}

\begin{remark}\label{re:rfxcg}{\rm
In terms of the Grassmannian $C^*$-algebras introduced in
\S\ref{ss:gc0}, we have
\begin{equation}\label{eq:rfx12}
  \rf_\ssx=\cg_{\ss{X^*}},\quad \rf_{\ss{X^*}}=\cg_{\ssx},\quad
  \rf_{\ss{\supset X}}=\cg_\ssx\cdot C_0(X^*)=\cg_\ssx\rtimes X
\end{equation}
where $\cg_\ssx\rtimes X$ is the crossed product of $\cg_\ssx$ by the
action of $X$, \conf \cite{GI5}. On the other hand,
\eqref{eq:fephi} implies $\rf_{\ssx}=C^*(\phi(\xi)\mid \xi\in X)$ and 
$\xi\in X$ means $\xi=(x,0)$ with $x\in X$ so
$\phi(\xi) =\bracet{x}{p}+\bracet{q}{k}=\bracet{x}{p}$.
With a similar argument in the case of $X^*$, we get
\begin{equation}\label{eq:gensetF}
  \rf_\ssx =C^*(\bracet{x}{p}\mid x\in X) \quad\text{and}\quad
  \rf_{\ss{X^*}}=C^*(\bracet{q}{k}\mid k\in X^*).
\end{equation}
This and \eqref{eq:rfx12} give the first assertion of Proposition
\ref{pr:phialgA}.
}\end{remark}

\begin{proposition}\label{pr:fctrf}
  $L^\infty(X)\cap\rf=\rf_{\ss{X^*}}=\cg_\ssx$ and
  $L^\infty(X^*)\cap\rf=\rf_{\ssx}=\cg_{\ss{X^*}}$.
\end{proposition}

\begin{proof}
  Note that we use the embeddings \eqref{eq:minbedI}.  The inclusion
  $L^\infty(X)\cap\rf\supset\cg_\ssx$ is obvious. And
  $L^\infty(X)\cap\rf\subset\rf_{\ss{X^*}}^\com=\rf_{\ss{X^*}}=\cg_\ssx$
  by \eqref{eq:rfx2} and Corollary \ref{co:comutant}.
\end{proof}

The algebras $\rf(Y^\sigma)$ with $Y\in\mbg(X)$ are components of the
algebras generated by N-body type Hamiltonians, \conf \S\ref{ss:nbb},
hence it is useful to have ``explicit'' descriptions of them. Below we
give two such descriptions, consequences of Theorem \ref{th:mainthfd}
and Proposition \ref{pr:rfyz}.

\begin{proposition}\label{pr:rfYsigma1}
  If $Y\subset X$ and $T\in \mbfb(X)$ then $T\in\rf(Y^\sigma)$
  if and only if 
  \begin{compactenum}
  \item[{\rm(i)}]  $[\e^{\rmi\bracet{x}{p}},T] = 0$ for all $x\in Y$,
   \smallskip 
 \item[{\rm(ii)}]
  ${\displaystyle\lim_{x\to0}}\|[\e^{\rmi\bracet{x}{p}},T]\|=0$ and
  ${\displaystyle\lim_{x\to0}}\|(\e^{\rmi\bracet{x}{p}}-1)T\|=0$,
  \smallskip
\item[{\rm(iii)}]
     ${\displaystyle\lim_{k\to0}}\|[\e^{\rmi\bracet{q}{k}},T]\|=0$ and
     ${\displaystyle\lim_{k\to0,k\in Y^\perp}}
     \|(\e^{\rmi\bracet{q}{k}}-1)T\|=0$. 
\end{compactenum}
\end{proposition}

\begin{remark}\label{re:condii}{\rm The second condition in (ii) is
    equivalent to $T=\varphi(p)T_0$ for some $\varphi\in C_0(X^*)$ and
    $T_0\in\mbfb(X)$, \conf \cite[Lemma 3.8]{GI4}.  }\end{remark}

In the next proposition $F_Y$ is the Fourier transform associated to
$Y$ and we keep the notation $F_Y$ for
$F_Y\otimes1: L^2(Y)\otimes L^2(Y')\to L^2(Y^*)\otimes L^2(Y')\equiv
L^2(Y^*;L^2(Y'))$.

\begin{proposition}\label{pr:rfYsigma2}
  Let $Y,Y'$ be subspaces of $X$ such that $X=Y\oplus Y'$ and let us
  identify $L^2(X)=L^2(Y)\otimes L^2(Y')$. Then
  \begin{equation}\label{eq:rfYsigma2}
    \rf(Y^\sigma)= C_0(Y^*)\otimes \mbfk(Y').
  \end{equation}
  Thus $T\in \mbfb(X)$ belongs to $\rf(Y^\sigma)$ if and only if
  $F_YTF_Y^{-1}$ is the operator of multiplication by a function
  $\what T\in C_0(Y^*;\mbfk(Y'))$ in $L^2(Y^*;L^2(Y'))$.
\end{proposition}

\begin{proof}
  We identify $X/Y=Y'$ hence
  $C_0(X^*)=C_0(Y^*\oplus Y'^*)= C_0(Y^*)\otimes C_0(Y'^*)$ and
  $C_0(Y')\cdot C_0(Y'^*)=\mbfk(Y')$. Then by \eqref{eq:rfysigma}
  \begin{equation*}
    \rf(Y^\sigma)= C_0(Y')\cdot [C_0(Y^*)\otimes C_0(Y'^{*})]
    = C_0(Y^*)\otimes \mbfk(Y'). \qedhere
  \end{equation*}
\end{proof}

\subsection{N-body Hamiltonians and beyond}\label{ss:nbb}

Hamiltonians generalizing N-body Hamiltonians are defined
by their affiliation to graded $C^*$-subalgebras $\rf(\cs)$ where
$\cs\subset\mbg(\Xi)$ is a subsemilattice.  The case of finite $\cs$
is easy but important because large classes of N-body type
Hamiltonians are affiliated to such algebras. By (3) of Theorem
\ref{th:remarc} and (2) of Lemma \ref{lm:subsemi}, if $\cs$ is a
finite subset of $\mbg(\Xi)$ then $\rf(\cs)$ is a subalgebra of $\rf$
if and only if $\cs$ is a subsemilattice and then $\rf(\cs)$ is an
$\cs$-graded $C^*$-subalgebra of $\rf$. However, we are mostly
interested in the case of infinite $\cs$.

 In this section we fix a Lagrangian decomposition
$\Xi=X\oplus X^*$ and consider semilattices of the form
$\cs=\ct^\sigma$ with $\ct\subset\mbg(X)$ stable under intersections. 
We recall the notations
\[
  \ct\subset\mbg(X) \Rightarrow \ct^\perp\doteq
  \{Y^\perp\mid  Y\in\ct\}\subset\mbg(X^*) \text{ and }
  \ct^\sigma\doteq \{Y^\sigma\mid  Y\in\ct\}
  \subset\mbg_{\ss{\supset X}}(\Xi).
\]  
Since $Y\mapsto Y^\perp$ and $Y\mapsto Y^\sigma$ are bijective maps
$\mbg(X)\to\mbg(X^*)$ and $\mbg(X)\to\mbg_{\ss{\supset X}}(\Xi)$ which
reverse the order, we get
\[
  \ct\,\text{is}\,\cap\!\text{-stable} \Leftrightarrow
  \ct^\perp\text{is a subsemilattice of}\,\mbg(X^*)
  \Leftrightarrow
  \ct^\sigma\text{is a subsemilattice of}\,\mbg_{\ss{\supset X}}(\Xi).
\]
Recall that $\cg_\ssx$ is the Grassmann C*-algebra, \conf
\S\ref{ss:gc0}.  Taking into account \eqref{eq:rfysigma}, we have
\[
  \mathring\rf(\ct^\perp)=\mathring\cg_\ssx(\ct)\doteq
  \tsum{Y\in\ct}{}C_0(X/Y)    \text{ and }
\rf(\ct^\perp)=\cg_\ssx(\ct)\doteq \tsum{Y\in\ct}{\rmc}C_0(X/Y).
\]
These are stable under translations and conjugation subspace of
$\cbu(X)$ and if $\ct$ is finite then $\mathring\cg_\ssx(\ct)$ is
closed, so $\mathring\cg_\ssx(\ct)=\cg_\ssx(\ct)$. And
$\cg_\ssx(\{Y\})\equiv\cg_\ssx(Y)=C_0(X/Y)$.

Our first goal is to show that $\rf(\cs)$ is the $C^*$-algebra
generated by Hamiltonians of N-body type.  Clearly:

\begin{lemma}\label{lm:czerot}
  Let $\ct_0\subset\mbg(X)$ and let $\ct$ be the set of intersections
  of subspaces from $\ct_0$.  Then the unital $C^*$-subalgebra of
  $\cbu(X)$ generated by $\cg_\ssx(\ct_0)$ is $\cg_\ssx(\ct)$.
\end{lemma}

\begin{remark}\label{re:czerot}

  {\rm An intersection of subspaces of $\ct_0$ is a subspace of the
    form $\cap_{i\in I}Y_i$ with $Y_i\in\ct_0$. Since $X$ is finite
    dimensional we may assume $I$ finite, but it could be empty and
    then $\cap_{i\in \emptyset}Y_i=X$, hence we always have
    $X\in\ct$. Thus $\C=C_0(X/X)\subset \cg_\ssx(\ct)$.}

\end{remark}

We refer to \cite{GI3, GI5 ,GI4} for a presentation of the notion of
crossed product of a $C^*$-algebra by the action of a group adapted to
our context (see also Proposition \ref{pr:normcont}).  If
$\ca\subset\cbu(X)$ is a $C^*$-subalgebra stable under translations
then the norm closed linear space $\ca\cdot C_0(X^*)$ of operators on
$L^2(X)$ generated by the products $\varphi(q)\psi(p)$ with
$\varphi\in\ca$ and $\psi\in C_0(X^*)$ is a $C^*$-algebra canonically
isomorphic to the crossed product $\ca\rtimes X$, \conf \cite[Th.\
2.17]{GI5}.  We identify $\ca\rtimes X=\ca\cdot C_0(X^*)$ so one may
take this as definition of the crossed product.

A function $h:X^*\to\R$ is called \emph{divergent} if
$\lim_{k\to\infty}h(k)=+\infty$.

\begin{theorem}\label{th:nbham}
  Let $\ct_0\subset\mbg(X)$ and $\ct$ the set of intersections of
  subspaces from $\ct_0$. Then $\cs=\ct^\sigma$ is a subsemilattice
  of $\mbg_{\ss{\supset X}}(\Xi)$ and
  \begin{equation}\label{eq:nbham}
    \rf(S)=\cg_\ssx(\ct)\rtimes X = \cg_\ssx(\ct)\cdot C_0(X^*) . 
  \end{equation}
  If $h:X^*\to\R$ is continuous and divergent $\rf(\cs)$ coincides
  with the $C^*$-algebra generated by the self-adjoint operators
  $H=h(p+k)+v(q)$ with $k\in X^*$ and $v\in\mathring\cg_\ssx(\ct_0)$
  real.
\end{theorem}

\begin{proof}
  By \eqref{eq:fcs} and \eqref{eq:rfysigma} we have
  \[
    \rf(\cs)=\tsum{E\in\cs}{\rmc}\rf(E)
    =\tsum{Y\in\ct}{\rmc}\rf(Y^\sigma)
    =\tsum{Y\in\ct}{\rmc} C_0(X/Y)\cdot C_0(X^*)
    =\cg_\ssx(\ct)\rtimes X
  \]
  which proves \eqref{eq:nbham}. By Remark \ref{re:czerot} we have
  $C_0(X^*)\subset\rf(\cs)$ hence for an arbitrary $h:X^*\to\R$
  continuous and divergent the self-adjoint operator $h(p)$ is
  affiliated to $C_0(X^*)$ hence to $\rf(\cs)$. Then if
  $v\in \cg_\ssx(\ct)$ real $H=h(p)+v(q)$ is a self-adjoint operator
  and for large enough negative $z$ we have a norm convergent
  development
  \[
    (z-H)^{-1}=\tsum{n\geq}{}
    (z-h(p))^{-1}\big(v(q)(z-h(p))^{-1}\big)^n.
  \]
  But $(z-h(p))^{-1}\in C_0(X^*)$ hence
  $v(q)(z-h(p))^{-1}\in \cg_\ssx(\ct)\cdot C_0(X^*)$ and $(z-H)^{-1}$
  belongs to $\rf(\cs)$. Thus any $h(p)+v(q)$ is affiliated to
  $\rf(\cs)$ and it remains to prove that for any fixed $h$ the
  algebra $\rf(\cs)$ is equal to the algebra $\rc$ generated by
  operators of the form $H=h(p+k)+v(q)$ with $k\in X^*$ and
  $v\in \mathring\cg_\ssx(\ct_0)$ real. Clearly we get the same $\rc$
  if we allow $v\in \cg_\ssx(\ct_0)$ and the preceding argument gives
  $\rc\subset\rf(\cs)$.
  
  We will follow now the proof of \cite[Pr.\ 4.1]{GI3} with
  $\ca=\cg_\ssx(\ct_0)$. Then $\ca$ is not necessarily a
  $C^*$-subalgebra but is a closed stable under translations and
  conjugation subspace of $\cbu(X)$.  Then the quoted proof gives
  $v(q)\psi(p)\in \rc$ for all $v\in \cg_\ssx(\ct_0)$ and
  $\psi\in C_0(X^*)$. In other terms,
  $\cg_\ssx(\ct_0)\cdot C_0(X^*)\subset\rc$ which in fact is
  equivalent to $\cg_\ssx(Y)\cdot C_0(X^*)\subset\rc$ for all
  $Y\in\ct_0$. Observe that
  $\cg_\ssx(Y)\cdot C_0(X^*)= C_0(X^*)\cdot\cg_\ssx(Y)$ because this
  is the $C^*$-algebra $\rf(Y^\perp)$.  Then for any $Y,Z\in\ct_0$
\begin{align*}
  &[\cg_\ssx(Y)\cdot C_0(X^*)\big]\cdot [\cg_\ssx(Z)\cdot C_0(X^*)]=
  C_0(X^*)\cdot\cg_\ssx(Y)\cdot\cg_\ssx(Z)\cdot C_0(X^*)=\\
  &C_0(X^*)\cdot \cg_\ssx(Y\cap Z) \cdot C_0(X^*)=
  \cg_\ssx(Y\cap Z)\cdot C_0(X^*)
\end{align*}
hence $\cg_\ssx(Y\cap Z)\cdot C_0(X^*)\subset\rc$. Thus
$\cg_\ssx(Y)\cdot C_0(X^*)\subset\rc$ for all $Y\in\ct$ which is clearly
equivalent to $\rf(\cs)=\cg_\ssx(\ct)\cdot C_0(X^*)\subset\rc$.
\end{proof}

\begin{corollary}\label{co::nbham}
  If $h:X^*\to\R$ is continuous and divergent then
  \[
    \rf_{\ss{\supset X}}=C^*\big(h(p+k)+v(q) \mid k\in X^*
    \text{ and } v\in \mathring\cg_\ssx \text{ real} \big)
  \]
\end{corollary}

\begin{remark}\label{re:mbfp}{\rm If
    $\cs\subset\mbg_{\ss{\supset X}}(\Xi)$ is a subsemilattice then 
    $\rf(\cs)\cdot C_\rmb(X^*)=\rf(\cs)$ by \eqref{eq:nbham}.
    Thus if $T\in\rf(\cs)$ and $\psi\in C_\rmb(X^*)$ then $\psi(p)T$
    and  $T\psi(p)$ belong to $\rf(\cs)$.}\end{remark} 

We will apply this proposition in the case of a system of $N$
particles moving in $\R^\nu$. Then $X=\R^{N\nu}$, we write
$x=(x_1,\dots,x_N)$ with $x_i\in\R^\nu$, and denote $q_i,p_i$ the
momentum and position observables of the $i$-th particle, hence
$q=(q_1,\dots,q_N)$, $p=(p_1,\dots,p_N)$ (if $\nu>1$ this is not the
notation from \S\ref{ss:i22}).  If the particles interact only via
one-body and two-body forces the Hamiltonian is \label{p:stnb}
\begin{equation}\label{eq:NB01}
  H=\tsum{i=1}{N}h_i(p_i-k_i) +\tsum{i=1}{N} v_i(q_i)
  +\tsum{1\leq i<j\leq N}{} v_{ij}(q_i-q_j)  
\end{equation}
where $h_1,\dots,h_N$ are real continuous divergent functions on
$\R^\nu$ and $h_i(p_i-k_i)$ is interpreted as the kinetic energy of
the i-th particle if we take $k_i$ as origin in the momentum space.
The potentials $v_i,v_{ij}$ are real functions in $C_0(\R^\nu)$ but we
may write them in the form required in Proposition \ref{pr:nbham1} as
follows.  Define $\rho_i, \rho_{ij}:X\to \R^\nu$ by $\rho_i(x)=x_i$
and $\rho_{ij}(x)=x_j-x_i$ for $i<j$.  These are linear surjective
maps with $\ker\rho_i=X_i=\{x\mid x_i=0\}$ and
$\ker\rho_{ij}=X_{ij}=\{x\mid x_i=x_j\}$, so $\rho_i$ and $\rho_{ij}$
induce bijective linear maps $X/X_i\to\R^\nu$ and
$X/X_{ij}\to\R^\nu$. Then $v_i(q_i)=v_i\circ\rho_i(q)$ and
$v_{ij}(q_i-q_j)=v_{ij}\circ\rho_{ij}(q)$ hence by keeping the
notation $v_i$ for $v_i\circ\rho_i$ and $v_{ij}$ for
$v_{ij}\circ\rho_{ij}$ the potential part in \eqref{eq:NB01} becomes
$\sum v_i(q) +\sum v_{ij}(q)$ where $v_i\in C_0(X/X_i)$ and
$v_{ij}\in C_0(X/X_{ij})$.  Then the $C^*$-algebra generated by the
Hamiltonians \eqref{eq:NB01} is given by Proposition \ref{pr:nbham1}:
$\ct_0$ is the set of subspaces $X_i,X_{ij}$ hence it suffices to
compute the set of intersections of such subspaces. This is an
exercice, we describe the result below.

Let $[N]=\{1,2,\dots,N\}$. A nonempty subset of $[N]$ is called
\emph{cluster}.  A \emph{sub-partition} of $[N]$ is a (possibly
empty) set of pairwise disjoint clusters; if $a$ is a sub-partition
then $\cup a$ is the union of its clusters.  We denote $\pi$ the set
of sub-partitions.  If $a\in\pi$ and $i,j\in [N]$ we write
$i \stackrel{a}{\sim} j$ if $i,j$ belong to the same cluster of $a$
and we define
\begin{equation}
  X_a=\{x\in X \mid x_i=x_j \text{ if } i\stackrel{a}{\sim}j
  \text{ and } x_i=0 \text{ if } i\nin\cup a\}. 
\end{equation}
Then $\ct=\{X_a\mid a\in\pi\}$. Thus, \emph{if we set
$\cs=\{X_a^\sigma \mid a\in\pi\}$ then the $C^*$-algebra generated by
the Hamiltonians \eqref{eq:NB01} with arbitrary $k_i\in\R^\nu$ and
$v_i,v_{ij}\in C_0(\R^\nu)$ is }
\begin{equation}\label{eq:nbrfcs}
  \rf(\cs)= \tsum{a\in\pi}{} \rf(X_a^\sigma).
\end{equation}
Note that, although we started with Hamiltonians involving only one
and two body interactions, the Hamiltonians affiliated to $\rf(\cs)$
may involve $k$-body interactions with any $k\leq N$, and much more in
fact.  We discus this question below.

We fix a \emph{subsemilattice $\cs\subset\mbg(\Xi)$ with $\min\cs=X$}
and a \emph{continuous positive and divergent function
  $h:X^*\to\R$}. Then $h(p)$ is a kinetic energy operator affiliated
to $\rf(X)=C_0(X^*)$ and our goal is to build self-adjoint operators
$H=h(p)+V$ affiliated to $\rf(\cs)$. We use the terminology and
results of \S\ref{ss:perturb}.  Let $\ch_h=D(h(p)^{1/2})$ be the form
domain of $h(p)$ and $\ch_h\subset\ch\subset\ch_h^*$ the associated
scale. Theorem \ref{th:afgrad} and Corollary \ref{co:yeshvz} imply:

\begin{theorem}\label{th:afgradII}
  For each $E\in\cs$ let $V(E)\in B(\ch_h,\ch_h^*)$ a symmetric
  operator such that \\[1mm]
  (1) $V(X)=0$ and $\varphi(p)V(E)(h(p)+1)^{-1/2} \in \rf(E)$
  $\forall\varphi\in C_\rmc(X^*)$,\\[1mm]
  (2) the family $\{V(E)\}_{E\in\cs}$ is norm summable in
  $B(\ch_h,\ch_h^*)$,\\[1mm]
  (3) $V(E)\geq-\mu_\sse h(p)-\nu_\sse$ with $\mu_\sse,\nu_\sse\geq0$,
  $\sum_{E\in\cs}\mu_\sse<1$, and
  $\sum_{E\in\cs}\nu_\sse<\infty$.\\[1mm] 
  Let $V=\sum_{E\in\cs}V(E)$ and $V_\sse=\sum_{F\leq E}V(F)$ for any
  $E\in\cs$. Then the form sums $H=h(p)+V$ and $H_\sse=h(p)+V_\sse$
  are bounded from below self-adjoint operators, with form domain
  $\ch_h$, strictly affiliated to $\rf(\cs)$, and
  $\cp_\sse H=H_\sse \ \forall E\in\cs$. If $\Xi\in\cs$ and
  $\cs_{\max}$ is the set of maximal elements of $\cs\setminus\{\Xi\}$
  then
    \begin{equation}\label{eq:hvzNb}
      \spe(H) = \ccup_{E\in\cs_{\max}}\spec(\cp_\sse H) .
    \end{equation}
\end{theorem}

We give explicit versions of condition (1) of Theorem
\ref{th:afgradII} under stronger assumptions on $h$. Since
$E=Y^\sigma$ with $Y\in\mbg(X)$ we have
$T=\varphi(p)V(E)(h(p)+1)^{-1/2}\in\rf(E)$ if and only if $T$
satisfies the conditions of one of the Propositions \ref{pr:rfYsigma1}
or \ref{pr:rfYsigma2}.  Since $\e^{\rmi\bracet{x}{p}}$ and
$(h(p)+1)^{-1/2}$ are commuting operators,
$\{\e^{\rmi\bracet{x}{p}}\}_{x\in X}$ induces strongly continuous
unitary groups in $\ch_h$ and $\ch_h^*$, but $\ch_h$ is not stable
under the operators $\e^{\rmi\bracet{q}{k}}$.  In order to simplify
later statements we will impose a rather strong growth condition on
$h$ which ensures $\e^{\rmi\bracet{q}{k}}\ch_h\subset\ch_h$; the
condition can be relaxed so as to cover hypoelliptic operators for
example (see \cite[\S4.3]{DG1}) but this is not really relevant in our
context.

We equip $X$ with an Euclidean structure, denote $|\cdot|$ its norm,
set $\jap{x}=(1+|x|^2)^{1/2}$, and identify $X^*=X$. Then
$\ch^s\equiv\ch^s(X)$ is the Sobolev space of order $s\in\R$ defined
by the norms $\|u\|_s=\|\jap{p}^su\|$. Then $\ch^0 =\ch$,
$\ch^s\subset\ch^t$ if $s\geq t$, and
$\ch^s\subset \ch \subset \ch^{-s}= (\ch^s)^*$ if $s \geq 0$. Clearly
$\{\e^{\rmi\bracet{x}{p}}\}_{x\in X}$ and
$\{\e^{\rmi\bracet{q}{k}}\}_{k\in X}$ are strongly continuous
$C_0$-groups of bounded operators in $\ch^s$ for any real $s$.

\begin{definition}\label{df:small}
If $s \geq 0$ let $\mbfb^s(X) \doteq B(\ch^s,\ch^{-s})$ and
\begin{align}
  \mbfb_0^s(X)&=\{T\in\mbfb^s(X) \mid T:\ch^{s}\to\ch^{-t}
                \text{ is compact if } t>s\} \label{eq:mbfb01}\\
             &=\{T\in\mbfb^s(X) \mid \varphi(p)T:\ch^{s}\to\ch
  \text{ is compact if } \varphi\in C_\rmc(X)\}. \label{eq:mbfb02}   
\end{align}
If $T\in \mbfb_0^s(X)$ we say that \emph{$T:\ch^{s}\to\ch^{-s}$ is
  small at infinity}.
\end{definition}

If the condition in \eqref{eq:mbfb01} is satisfied for one $t$ it is
satisfied for any $t$.  The elements of $\mbfb_0^s(X)$ are the
operators $T\in\mbfb^s(X)$ which decay at infinity in a weak
sense. For example, $\varphi(q)\in\mbfb^0(X)$ if and only if
$\varphi:X\to\C$ is a bounded Borel function and
$\varphi(q)\in\mbfb_0^0(X)$ means that this function satisfies
$\lim_{a\to\infty}\int_{|x-a|}|\varphi(x)|\d x=0$ \cite{DG}.

If $h$ is a real functions on $X$  and $s>0$ is a number we write
$h(x)\sim |x|^{2s}$ if 
\begin{equation}\label{eq:s2cond}
  c'|k|^{2s} \leq h(k) \leq c''|k|^{2s}
  \quad\text{for some constants } c', c'' \text{ and all large } k.
\end{equation}
From now on we assume that this condition is satisfied.  Clearly this
is equivalent to $\ch_h=\ch^s$. Thus $\ch_h^*=\ch^{-s}$ and the $V(E)$
from Theorem \ref{th:afgradII} are symmetric operators
$V(E)\in\mbfb^s(X)$. Recall that $E=Y^\sigma$ for some $Y\in\mbg(X)$.

\begin{lemma}\label{lm:stlemma}
  Let $V(E)\in\mbfb^s(X)$ such that
  $[\e^{\rmi\bracet{x}{p}},V(E)] = 0 \ \forall x\in Y$. Assume that
  there is $t>s$ such that the following conditions hold in norm in
  $B(\ch^s,\ch^{-t})$: 
\begin{align}
  & [\e^{\rmi\bracet{x}{p}},V(E)] \to0 \quad\text{if }
    x\to0 \label{eq:21}\\  
  & [\e^{\rmi\bracet{q}{k}},V(E)] \to0 \quad\text{if }
    k\to0, \label{eq:31}\\ 
  & (\e^{\rmi\bracet{q}{k}}-1) V(E) \to0 \quad\text{if } k\in Y^\perp,
    k\to0 .\label{eq:41} 
\end{align}
Then $\varphi(p)V(E) (h(p)+1)^{-1/2}\in \rf(Y^\sigma)$ for any
$\varphi\in C_\rmc(X)$.
\end{lemma}

\begin{proof}
  We have
  \[
    \varphi(p)V(E) (h(p)+1)^{-1/2}=\varphi(p)V(E)\jap{p}^{-s}
    \cdot \jap{p}^{s}(h(p)+1)^{-1/2}
  \]
  hence by \eqref {eq:s2cond} and Remark \ref{re:mbfp} (with
  $\cs=\{E\}$) we have for any $\varphi\in C_\rmc(X)$
  \[
    \varphi(p)V(E) (h(p)+1)^{-1/2}\in \rf(E) \Leftrightarrow
    \varphi(p)V(E)\jap{p}^{-s}\in \rf(E).
  \]  
  $T=\varphi(p)V(E)\jap{p}^{-s}\in \rf(Y^\sigma)$ if and only if $T$
  satisfies the conditions of Proposition \ref{pr:rfYsigma1} and it is
  easy to check that they are equivalent to those in the statement
  of the lemma.
\end{proof}

Let $Y'$ be a complementary subspace of $Y$ in $X$. Then
$X=Y\oplus Y'$ and we identify
\[
L^2(X)\cong L^2(Y)\otimes L^2(Y')=L^2(Y;L^2(Y')).
\]
Let $F_\ssy$ be the Fourier transform associated to $Y$ acting in
$L^2(Y)$ and keep the notation $F_\ssy$ for $F_\ssy\otimes1$ which
acts in $L^2(X)=L^2(Y)\otimes L^2(Y')$.  Let $q_\ssy,p_\ssy$ be the
position and momentum observables associated to $Y$, so
$\varphi(p_\ssy)=F_\ssy^{-1}\varphi(q_\ssy)F_\ssy$ for any Borel
function $\varphi:Y\to\C$, hence $C_0(Y^*)=F_\ssy^{-1}C_0(Y)F_\ssy$
(see \S\ref{ss:i22} and \S\ref{ss:HLdec}). Then by
\eqref{eq:rfYsigma2}
\[
  F_\ssy^{-1} C_0(Y;\mbfk(Y')F_\ssy
  =F_\ssy^{-1}\big(C_0(Y)\otimes\mbfk(Y')\big)F_\ssy
  =C_0(Y^*)\otimes\mbfk(Y')=\rf(Y^\sigma).
\]
Thus if $\Phi:Y\to\mbfk(Y')$ is of class $C_0$ and we denote
$\Phi(q_\ssy)$ the operator of multiplication by $\Phi$ in
$L^2(X)=L^2(Y;L^2(Y'))$, then
$\Phi(p_\ssy)\doteq F_\ssy^{-1} \Phi(q_\ssy) F_\ssy\in\rf(Y^\sigma)$.
This definition extends to functions on $Y$ with values operators on
Sobolev spaces on $Y'$.

\begin{lemma}\label{lm:vey}
  Let $V_\ssy:Y\to\mbfb_0^s(Y')$ norm continuos and such that for any
  $y$ the operator $V_\ssy(y):\ch^s(Y')\to\ch^{-s}(Y')$ is symmetric
  and satisfies $\pm V_\ssy(y)\leq c_Y\jap{|y|+|p_{Y'}|}^{2s}$ for
  some constant $c_Y$. Then
  $\varphi(p) V_\ssy(p_\ssy) (h(p)+1)^{-1/2}\in \rf(Y^\sigma)$ for any 
  $\varphi\in C_\rmc(X)$.
\end{lemma}

Indeed, a straightforward computation shows that $V(E)=V_\ssy(p_\ssy)$
satisfies the conditions of Lemma \ref{lm:stlemma}. Alternatively,
Lemma \ref{lm:vey} is a consequence of Proposition \ref{pr:rfYsigma2},
\conf the proof of \cite[Th.\ 4.6]{DG}. Theorem \ref{th:afgradV} is a
consequence Theorem \ref{th:afgradII} and Lemma \ref{lm:vey}.

\begin{theorem}\label{th:opafrf}
  Let $\cs\subset\mbg(\Xi)$ a subsemilattice with $\min\cs=X$ and
  $h:X\to\R$ continuous with $h(x)\sim |x|^{2s}$ for some $s>0$. 
Let $V:\ch^s\to\ch^{-s}$ symmetric such that  $V\geq -\mu h(p)-\nu$
with $\mu<1,\nu\geq0$ and satisfying the following equivalent
conditions:  
\begin{equation}\label{eq:tVs}
  \varphi(p)V\jap{p}^{-s} \in \rf(\cs)\ \forall\varphi\in
  C_\rmc(\R)  \Leftrightarrow  
  \jap{p}^{-t}V\jap{p}^{-s} \in \rf(\cs) \text{ for some }t>s. 
\end{equation}
Then the form sum $H=h(p)+V$ is a self-adjoint operator strictly
affiliated to $\rf(\cs)$.
\end{theorem}

\begin{proof}

  We first prove the equivalence in \eqref{eq:tVs}.  Since
  $\varphi(p)\in\rf(\cs)$ the implication $\Rightarrow$ is
  clear. Assume that the right hand side is satisfied and let
  $\varphi$ be a continuous function on $X$ such that
  $0\leq\varphi\leq1$, $\varphi(x)=1$ if $|x|<1$, and $\varphi(x)=0$
  if $|x|>2$. If $\varepsilon>0$ then
  $\varphi(\varepsilon p)V\jap{p}^{-s}\in\rf(\cs)$ and
  \[
    \jap{p}^{-t}V \jap{p}^{-s}- \jap{p}^{-t}\varphi(\varepsilon p) V
    \jap{p}^{-s}= \jap{p}^{-(t-s)}(1-\varphi(\varepsilon p))
    \cdot\jap{p}^{-s}V \jap{p}^{-s} .
  \]
  The second term on the left hand side belongs to $\rf(\cs)$ and
  \[
    \|\jap{p}^{-(t-s)}(1-\varphi(\varepsilon p)) \cdot\jap{p}^{-s}V
    \jap{p}^{-s}\| \leq \sup_{|\varepsilon k|>1}\jap{k}^{-(t-s)}
    \|\jap{p}^{-s}V\jap{p}^{-s}\| .
  \]
  The right hand side above clearly tends to zero when
  $\varepsilon\to0$ hence $\jap{p}^{-t}V \jap{p}^{-s}\in\rf(\cs)$.

  We will use the notations and results of Theorem \ref{th:affg}.  The
  self-adjoint operator $H_0=h(p)$ is bounded from below and strictly
  affiliated to $\rf(X)=C_0(X^*)$ hence strictly affiliated to
  $\rf(\cs)$. Its form domain is $\ch^s$ whose adjoint space is
  $\ch^{-s}$ hence $V$ is a form perturbation of $H_0$, hence the form
  sum $H=H_0+V$ is a bounded from below self-adjoint operator. If
  $\varphi\in C_\rmc(\R)$ then $\psi=\varphi\circ h\in C_\rmc(X)$ and
  $\varphi(H_0)=\psi(p)$ so
  $\varphi(H_0)V(|H_0|+1)^{-1/2}\in\rf(\cs)$. Indeed, we have
  \[
    \varphi(H_0)V(|H_0|+1)^{-1/2}=\psi(p)V\jap{p}^{-s}\cdot
    \jap{p}^s(|h(p)|+1)^{-1/2}
  \]
  and $\jap{p}^s(|h(p)|+1)^{-1/2}=\theta(p)$ for some
  $\theta\in C_{\rmb}(X)$ so the right hand side above belongs to
  $\rf(\cs)\cdot C_\rmb(X)=\rf(\cs)$ by Remark \ref{re:mbfp}. So the
  result follows from Theorem \ref{th:affg}.
\end{proof}

\section{Extended field C*-algebra}
\label{s:efa}  
\protect\setcounter{equation}{0}

Let $\Xi$ be a symplectic space and $W$ a representation of $\Xi$ on a
Hilbert space $\ch$. The field $C^*$-algebra $\rf$ as defined in (1)
of Definition \ref{df:wkfield} contains, in some sense, only functions
of the fields $\phi(\xi)$. If the representation $W$ is not
irreducible then many other physically interesting observables are not
affiliated to $\rf$, \eg the Hamiltonians involving spin
interactions. Here we introduce a $\mbg(\Xi)$-graded $C^*$-algebra
$\re$ of operators on $\ch$ which contains $\rf$ and fixes this flaw.

In the next definition, suggested by Theorem \ref{th:mainthfd}, we
introduce the components of $\re$.

\begin{definition}\label{df:intrinsic}
  If $E\in\mbg(\Xi)$ then $\re(E)\equiv\re_\ssxi^\ssw(E)$ is the set
  of $T\in B(\ch)$ such that:\\[1mm]
  {\rm(i)} \ \ $\|[W(\xi),T]\| \to 0$ if $\xi\to0$ in $\Xi$,\\[1mm]
  {\rm(ii)} \ $[W(\xi),T] = 0$ if $\xi \in E^\sigma$,\\[1mm]
  {\rm(iii)} $\|(W(\xi)-1)T\| \to 0$ if $\xi \in E$ and
  $\xi \rarrow 0$.\\[1mm]
  In particular $\re(0)=\Com(\Xi)$.
\end{definition}

If $\dim\Xi=\infty$ the limit in (i) is
defined in the following equivalent ways:\\
(1)  for any $\xi\in\Xi$ we have $\lim_{r\to0}\|[W(r\xi),T]\|=0$,\\
(2) for any $F\in\mbg(\Xi)$ we have
$\lim_{\xi\in F,\xi\to0}\|[W(\xi),T]\|=0$. \\
Thus (i) can be reformulated as: \emph{the map
$\xi\mapsto W(\xi)TW(\xi)^*$ is norm continuous on finite dimensional
subspaces of $\Xi$}.  To prove (1)$\Rightarrow$(2) note that
$\|[W(\xi),T]\|=\|\rw(\xi)T-T\|$ where $\rw(\xi)$ is the automorphism
of $B(\ch)$ defined by $\rw(\xi)T=W(\xi)TW(\xi)^*$ and if
$e_1,\dots,e_n$ is a basis in $F$ and $\xi=\sum_i r_ie_i$ then
$\rw(\xi)=\rw(r_1e_1)\dots\rw(r_ne_n)$.

\begin{proposition}\label{pr:rdxi}
  Assume $\Xi$ finite dimensional. Then
  $\re(\Xi)=\rf(\Xi)\cdot\Com(\Xi)$ and $W$ is of finite multiplicity
  if and only if $\re(\Xi)=K(\ch)$. 
\end{proposition}

\begin{proof}
  If $\Xi$ is finite dimensional and we take $E=\Xi$ then only (i) and
  (iii) are nontrivial and (i)+(iii) is equivalent to
  $\lim_{\xi\to0}\|(W(\xi)-1)T^{(*)}\|=0$ hence by (1) of Theorem
  \ref{th:RKR} we have $\re(\Xi)=\rf(\Xi)\cdot\Com(\Xi)$. Then we use
  (3) of Theorem \ref{th:RKR}.
\end{proof}  

\begin{proposition}\label{pr:tatar}
  $\re(E)$ is a non-degenerate $C^*$-subalgebra of $B(\ch)$ and
  $\rf(E)\subset\re(E)$.
\end{proposition}

\begin{proof}

  It is easily seen that $\re(E)$ is a norm closed subalgebra of
  $B(\ch)$. If $T\in\re(E)$ then clearly $\|[W(\xi),T^*]\| \to 0$ if
  $\xi\to0$ in $\Xi$ and $[W(\xi),T^*] = 0$ if $\xi \in
  E^\sigma$. Moreover if $\xi \rarrow 0$ in $E$ then
  $\|T^*(W(\xi)-1)\| \to 0$ hence also $\|(W(\xi)-1) T^*\| \to 0$
  because $\|[W(\xi),T^*]\| \to 0$. Thus $T^*\in\re(E)$ hence $\re(E)$
  is a $C^*$-algebra. Below we show $\rf(E)\subset\re(E)$ and since
  $\rf(E)$ is non-degenerate, $\re(E)$ is also non-degenerate.
  
  To prove the last assertion of the proposition it suffices to show
  that $W(\mu)\in {\re}(E)$ if $\mu\in L^1(E)$.  For any $\xi \in \Xi$
  and $\mu\in M(E)$ we have
  \[ [W(\xi),W(\mu)] = \int_{E} [W(\xi),W(\eta)] \mu(\dd\eta) =
    \int_{E} (\e^{\rmi\sigma(\eta,\xi)}-1) W(\eta)W(\xi)\mu(\dd\eta) .
  \]
  Thus $\|[W(\xi),W(\mu)]\| \rarrow 0$ if $\xi \rarrow 0$. Moreover,
  the above commutator is clearly zero if $\xi \in E^\sigma$. It
  remains to show that $W(\mu)$ verifies (iii) of Definition
  \ref{df:intrinsic} if $\mu\in L^1(E)$. Then
  $\mu(\d\eta)=\rho(\eta)\d_E\eta$ with $\rho$ is a function on $E$.
  If $\xi \in E$ we get from \eqref{eq:ccr1I}:
  \begin{align*}
    &\| (W(\xi) -1) W(\mu)\| 
      = \biggl\| \int_{E} (\e^{-\frac{\rmi}{2}\sigma(\xi,\eta)}
      W(\xi + \eta)-W(\eta))\rho(\eta) \d_E\eta \biggr\| \\ 
    &\leq \int_{E} |\e^{-\frac{\rmi}{2}\sigma(\xi,\eta)} -1|
      \,|\rho(\eta)|\d_E\eta +
      \int_{E}|\rho(\eta -\xi) - \rho(\eta)|\d_E\eta .
  \end{align*}
  Clearly both integrals tend to zero as $\xi\rarrow 0$ in $E$.
\end{proof}

\begin{remark}\label{re:extmain}

  {\rm If $W$ is of finite multiplicity then
    $\re(E)=\rf(E)\cdot\Com(\Xi)$: the inclusion $\supset$ is obvious
    and the proof of the converse is similar to that of Theorem
    \ref{th:mainthfd} given in \S\ref{ss:idc}.}

\end{remark}

Now let $\rho$ be an integrable function on $E$ such that
$\int_{E}\rho(\eta) \d_E\eta=1$ for some Lebesgue measure $\d_E\eta$
on $E$.  For $\veps >0$ set
$\rho_\veps (\eta) = \veps^{-m}\rho(\eta/\veps)$, where $m=\dim E$,
and 
\begin{equation}\label{eq:wrhoeps}
  W(\rho_\varepsilon)=\int_EW(\eta)\rho_\varepsilon(\eta)\d_E\eta
  =\int_EW(\varepsilon\eta)\rho(\eta)\d_E\eta .
\end{equation}

\begin{lemma}\label{lm:surya}
{\rm(i)} $W(\rho_\veps)\in {\re}(E)$ and
$\slim_{\veps\rarrow 0}W(\rho_\veps)=1$ on $\ch$.\\[1mm]
\hspace*{19mm}{\rm (ii)}
If $T\in {\re}(E)$ then
$\lim_{\veps\rarrow 0}\|(W(\rho_\veps)-1)T\|=0$.
\end{lemma}

\begin{proof}
  For (i), see the proof of Proposition \ref{pr:properties}-(a).
  Then, by Definition \ref{df:intrinsic}-(iii) we have
\[
\| (W(\rho_\veps)-1)T\| \leq
\int_{E}
\|(W(\veps\eta) -1)T\|\,|\rho(\eta)| \lambda_E(\dd\eta)
\rarrow 0 \text{ as } \veps \rarrow 0. \qedhere
\]
\end{proof}

\begin{lemma}\label{lm:rcliminf}
  If $\xi\nin E^\sigma$, and $T\in\re(E)$ then
  $\slim_{r\to\infty} T W(r\xi)=0$
\end{lemma}

\begin{proof}
  By Lemma \ref{lm:surya} we have
  $\lim_{\varepsilon\to 0} TW(\rho_\varepsilon)=T$ in norm hence it
  suffices to have $\slim_{r\to\infty} SW(r\xi)=0$ for $S\in\rf(E)$
  which is true by Lemma \ref{lm:liminfinity}.  
\end{proof}

\begin{theorem}\label{th:rcgrad}
  The family of $C^*$-subalgebras $\{\re(E)\}_{E\in\mbg(\Xi)}$ is
  linearly independent and satisfies
  $\re(E)\cdot\re(F)\subset\re(E+F)$. 
\end{theorem}

\begin{proof}

  The proof of the linear independence of the family of $\re(E)$ is
  similar to the proof of Theorem \ref{th:remarc}-(1).  Let $\cs$ be
  a finite set of finite dimensional subspaces of $\Xi$ and for each
  $F\in\cs$ let $T(F)\in\re(F)$ such that $\sum_{F\in\cs}T(F)=0$; we
  have to show $T(E)=0$ $\forall E$.  If $F\in\cs$ and
  $F\not\subset E$ then $E^\sigma\not\subset F^\sigma$ hence
  $E^\sigma\cap F^\sigma$ is a strict subspace of $E^\sigma$.  Then we
  may choose $\xi\in E^\sigma$ which does not belong to any of these
  subspaces, \ie $\xi\nin F^\sigma$ if $F\in\cs$ is not a subset of
  $E$, which implies $\slim_{r\to\infty} T(F)W(r\xi)=0$ for all such
  $F$ by Lemma \ref{lm:rcliminf}.  On the other hand, if $F\subset E$
  then $\xi\in E^\sigma\subset F^\sigma$ hence
  $W(r\xi)^*T(F)W(r\xi)=T(F)$ by (ii) of Definition
  \ref{df:intrinsic}.  By taking $r\to\infty$ in the identity
  $\sum_{F\in\cs}W(r\xi)^*T(F)W(r\xi)=0$ we thus get
  $\sum_{F\subset E}T(F)=0$ for all $E\in\cs$.  If $E$ is minimal in
  $\cs$ this implies $T(E)=0$. Then $\sum_{F\in\cs_1}\mu(F)=0$ if
  $\cs_1$ is the set of elements of $\cs$ which are not minimal. By
  repeating the above argument for $\cs_1$ we get $T(E)=0$ for all $E$
  minimal in $\cs_1$, etc.

  Now we prove $\re(E)\cdot\re(F)\subset\re(E+F)$. Let $S\in\re(E)$ and
  $T\in\re$, we must show $S T\in\re(E+F)$. Since
  $(E+F)^\sigma=E^\sigma\cap F^\sigma$, the conditions (i) and (ii) of
  Definition \ref{df:intrinsic} are obviously satisfied, it remains to
  show $\|(W(\xi)-1)S T\|\to0$ if $\xi\to0$ in $E+F$. If $E\not\subset
  F$ let $G$ be subspace of $F$ supplementary to $E\cap F$, hence
  $E+F=E+G$ and $E\cap G=0$. Then any $\xi\in\ E+F$ has a unique
  decomposition $\xi=\eta+\zeta$ with $\eta\in E, \zeta\in G$ and
  $\xi\to0$ in $E+F$ is equivalent to $\eta\to0$ in $E$ and
  $\zeta\to0$ in $G$ (hence in $F$). Then
  \[
    \e^{\frac{\rmi}{2}\sigma(\zeta,\eta)}W(\xi)S T= W(\eta)W(\zeta)S T
    = W(\eta)[W(\zeta),S] T + W(\eta)S W(\zeta)T
  \]
  and the first term on the right hand side tends in norm to zero as
  $\zeta\to0$ while $\e^{\frac{\rmi}{2}\sigma(\zeta,\eta)}\to1$ and
  $W(\eta)S W(\zeta)T\to S T$ in norm if $\eta\to0$ and $\zeta\to0$.
\end{proof}

\begin{definition}\label{df:extrc}
  The \emph{extended field $C^*$-algebra} associated to the
  representation $W$ is
  \begin{equation}\label{eq:rcgrad}
    \re\equiv\re_\ssxi^{\ssw}\doteq \tsum{E\in\mbg(\Xi)}{\rmc} \re(E)
  \end{equation}
  and is a $\mbg(\Xi)$-graded $C^*$-algebra of bounded operators on
  $\ch$.
\end{definition}

In Proposition \ref{pr:rdxi} we already gave an example where the
algebra $\re$ is more natural than $\rf$. We give one more example
involving coupling of finite multiplicity representations, which
covers for example N-body Dirac and Pauli operators.

Let $\Xi_1,\dots,\Xi_N$ finite dimensional symplectic spaces equipped
with finite multiplicity representations $W_1,\dots,W_N$ on Hilbert
space $\ch_1,\dots,\ch_N$. The coupling of these systems is defined as
follows: $\Xi\doteq\Xi_1\oplus\dots\oplus\Xi_N$ (symplectic direct
sum) equipped with the natural representation on
$\ch\doteq\ch_1\otimes\dots\otimes\ch_N$ defined by 
\[
  \xi=\xi_1\oplus\dots\oplus\xi_N \Rightarrow
  W(\xi)=W_1(\xi_1)\otimes\dots\otimes W_N(\xi_N) .
\]
Then the extended field algebra $\re$ associated to $(\Xi,W)$ contains
$\re_1\otimes\dots\otimes\re_N$ and Hamiltonians affiliated to it may
be constructed with the help of Theorem \ref{th:affg} starting with
``free'' Hamiltonians of the form
$H_1\otimes1\otimes\dots\otimes1+\dots+ 1\otimes\dots\otimes1\otimes
H_N$.

\appendix

\section{C*-algebras graded by Grassmannians}\label{s:graslat}
\protect\setcounter{equation}{0}

\subsection{Main facts}\label{ss:cgrad}

Let $\Xi$ be a real vector space and $\rc$ a $C^*$-algebra graded by a
subsemilattice $\cs$ of $\mbg(\Xi)$. We use the definitions and
notations of \S\ref{ss:i0}, thus $\rc=\sum_{E\in\cs}^\rmc\rc(E)$
for a given family $\{\rc(E)\}_{E\in\cs}$ of $C^*$-subalgebras of
$\rc$ satisfying $\rc(E)\rc(F)\subset\rc(E+F)$ for all $E,F\in\cs$ and
such that the linear sum $\mathring\rc\doteq\sum_{E\in\cs} \rc(E)$ is
direct and dense in $\rc$.

If $\ct\subset\cs$ we set $\mathring\rc(\ct)=\sum_{E\in\ct}\rc(E)$ and
$\rc(\ct)=\sum_{E\in\ct}^\rmc\rc(E)$. If $\ct$ is a subsemilattice
then clearly $\rc(\ct)$ is a $\ct$-graded $C^*$-algebra with
components $\rc(E)$; subalgebras of this form are called \emph{graded
  $C^*$-subalgebras} of $\rc$. Below we mention some properties of
$\rc(\ct)$: (1) is \cite[Pr.\ 1.6]{M1}, (2) is an exercise, and (3) is
\cite[Pr.\ 3.2]{DG}.

\begin{lemma}\label{lm:subsemi}
{\rm(1)} If $\ct$ is finite then $\sum_{E\in\ct}\rc(E)$ is closed,
hence $\rc(\ct)=\sum_{E\in\ct}\rc(E)$. \\[1mm]
{\rm(2)} If $\ct$ is finite and $\rc(E)\cdot\rc(F)=\rc(E+F)$ for
all $E,F\in\ct$, then $\rc(\ct)$ is a subalgebra of $\rc$ if and
only if $\ct$ is a subsemilattice of $\cs$.\\[1mm]
{\rm(3)} $\rc=\bigcup_\ct\rc(\ct)$ union over all countable
subsemilattices of~$\cs$.
\end{lemma}

For any subspace $E$ of $\Xi$ the sets
\begin{equation}\label{eq:sas}
  \cs_\sse=\{F\in\cs\mid F\subset E\},\
  \cs'_\sse=\{F\in\cs\mid F\not\subset E\},\
 \cs_{\ss{\supset E}}=\{F\in\cs\mid F\supset E\}
\end{equation}
are subsemilattices. If $E\in\cs$ the next theorem is \cite[Th.\
3.1]{DG0} while \cite[Pr.\ 1.10]{M1} is a more general result; the
last assertion is clear if $T\in \mathring\rc$ which is dense in
$\rc$, etc.

\begin{theorem}\label{th:gmain}
  $\rc_\sse\doteq\rc(\cs_\sse)$ is a $C^*$-subalgebra and
  $\rc'_\sse\doteq\rc(\cs'_\sse)$ an ideal of $\rc$ such that
  $\rc=\rc_\sse+\rc'_\sse$ and $\rc_\sse\cap \rc'_\sse=0$. The
  projection $\cp_\sse:\rc\to\rc_\sse$ determined by this direct sum
  decomposition is a morphism.  $\{\cp_\sse T\mid E\in\cs\}$
  is relatively compact $\forall\ T\in\rc$.
\end{theorem}

The projection morphism (\conf \S\ref{ss:i0}) $\cp_\sse$ is the unique
continuous map $\rc\to\rc$ such that
\begin{equation}\label{eq:pmorph}
  \cp_\sse\tsum{G}{}T(G)=\tsum{G\subset E}{}T(G)
\end{equation}
where $T(G)\in\rc(G)$ and $T(G)=0$ but for a finite number of $G$.
Clearly if $E\subset F$ then $\rc_\sse\subset \rc_\ssf$ and
$\cp_\sse\cp_\ssf=\cp_\ssf\cp_\sse=\cp_\sse$.  More generally:

\begin{proposition}\label{pr:ecapf}
  If $E,F$ are subspaces of $\Xi$ then
\begin{equation}\label{eq:ecapf}
\rc_{\ss{E\cap F}}=\rc_\sse\cap\rc_\ssf \quad\text{and}\quad
  \cp_{\ss{E\cap F}}=\cp_\sse\cp_\ssf=\cp_\ssf\cp_\sse
\end{equation}
\end{proposition}

\begin{proof}

  We first prove the relation $\cp_\sse\cp_\ssf=\cp_{\ss{E\cap F}}$.
  Since $\rc$ is the closure of $\sum_{G\in\cs}\rc(G)$ it suffices to
  show that the restrictions of the maps $\cp_\sse\cp_\ssf$ and
  $\cp_{\ss{E\cap F}}$ to each
  $\rc(G)$ are equal. There are three possibilities:\\[1mm]
  (1) $G\subset E\cap F$: then each of the operators
  $\cp_\sse, \cp_\ssf, \cp_{\ss{E\cap F}}$ is the identity on $\rc(G)$
  so
  $\cp_\sse\cp_\ssf=\cp_{\ss{E\cap F}}$ on $\rc(G)$;\\[1mm]
  (2) $G\not\subset F$: then
  $\cp_\ssf\rc(G)= \cp_{\ss{E\cap F}}\rc(G)=0$
  hence $\cp_\sse\cp_\ssf=\cp_{\ss{E\cap F}}$ on $\rc(G)$;\\[1mm]
  (3) $G\subset F$ but $G\not\subset E$: then $\cp_\ssf$ is the
  identity on $\rc(G)$, $\cp_\sse\rc(G)=0$,
  $\cp_{\ss{E\cap F}}\rc(G)=0$ hence
  $\cp_{\ss{E\cap F}}=\cp_\sse\cp_\ssf$ on $\rc(G)$.\\[1mm]
  Thus $\cp_\sse\cp_\ssf=\cp_{\ss{E\cap F}}$. The operator
  $\cp_{\ss{E\cap F}}$ is a projection of $\rc$ onto
  $\rc_{\ss{E\cap F}}$ and we clearly have
  $\rc_{\ss{E\cap F}}\subset \rc_\sse\cap\rc_\ssf$. On the other hand
  each of the operators $\cp_\sse,\cp_\ssf$ is the identity on
  $\rc_\sse\cap\rc_\ssf$ hence $\cp_{\ss{E\cap F}}=\cp_\sse\cp_\ssf$
  is also the identity on $\rc_\sse\cap\rc_\ssf$ so
  $\rc_{\ss{E\cap F}}\supset \rc_\sse\cap\rc_\ssf$.
\end{proof}

\subsection{Quotient algebra}\label{ss:quot}

We now assume that the subsemilattice $\cs$ has a greatest element
$\sup\cs$ or, equivalently, $\sup_{E\in\cs}\dim E<\infty$. Indeed, in
this case there is $G\in\cs$ of maximal dimension and if $E\in\cs$ is
not included in $G$ then $E+G\in\cs$ and $G\subsetneq E+G$ so
$\dim G<\dim(E+G)$ which is not possible, hence $G$ is the greatest
element of $\cs$.

Let $\cs_{\max}$ be the set of subspaces $E\in\cs$ such that
$E\neq \sup\cs$ and there are no other elements of $\cs$ between $E$
and $\sup\cs$; in other terms, $\cs_{\max}$ is the set of maximal
elements of $\cs\setminus\{\sup\cs\}$.  Then each element of $\cs$
distinct of $\sup\cs$ is majorated by an element of $\cs_{\max}$
because if $F\in\cs,F\neq \sup\cs$, among the elements $E\in\cs$ which
are $\neq \sup\cs$ and contain $F$ there is one which has maximal
dimension and this one belongs to $\cs_{\max}$.

Clearly $\rc(\sup\cs)$ is an ideal of $\rc$. Let
$\cp\colon\rc\to\rc/\rc(\sup\cs)$ be the canonical surjection of $\rc$
onto the quotient $C^*$-algebra $\rc/\rc(\sup\cs)$.  
The main interest of the graded structure is that it gives an explicit
description of $\rc/\rc(\sup\cs)$ and $\cp$. Theorem \ref{th:smax} is
a consequence of the more general \cite[Pr.\ 3.4]{DG}.  If $\cs$ is
finite, which covers the N-body type situations, the proof is very
easy \cite[Th.\ 8.4.1]{ABG}.

If $\ra_i$ are $C^*$-algebras then $\bigoplus_{i\in I}\ra_i$ is the
$C^*$-algebra consisting of families $A=(A_i)_{i\in I}$ with
$A_i\in\ra_i$ satisfying $\|A\|\doteq\sup_i\|A_i\|<\infty$ and with
the usual algebraic operations. 

\begin{theorem}\label{th:smax}
  If $S=\sup\cs$ exists then the map
    \begin{equation}\label{eq:ssup}
      \cp:\rc\to{\textstyle\bigoplus_{E\in\cs_{\max}}}\rc_\sse
      \quad\text{defined by}\quad
      \cp (T)=\big(\cp_\sse T\big)_{E\in\cs_{\max}} 
  \end{equation}
  is a morphism with $\ker\cp=\rc(\sup\cs)$. This gives a canonical
  embedding
  \begin{equation}\label{eq:squotembeded}
    \rc/\rc(\sup\cs)\hookrightarrow\ooplus_{E\in\cs_{\max}}\rc_\sse .
  \end{equation}
\end{theorem}

If $T\win\rc$ we define
\begin{equation}\label{eq:rcspe}
  \text{$\rc$-essential spectrum of } T \equiv \rc\hyphen\spe(T)
  \doteq \text{ spectrum of } \cp(T) .
\end{equation}
In later concrete situations $\rc$ is realized on a Hilbert space
$\ch$ such that $\rc(\sup\cs)=\rc\cap K(\ch)$ (Lemma \ref{lm:nocomp})
and then $\spe(T)=\rc\hyphen\spe(T)$ for any $T\win\rc$ by Atkinson's
theorem \cite[Th.\ 3.3.2]{Arv} that we recall here. This gives a
general version of the HVZ theorem.

Fix a Hilbert space $\ch$.  The quotient $C^*$-algebra
$\cc(\ch)\doteq B(\ch)/K(\ch)$ is the \emph{Calkin algebra} of $\ch$
and $\pi:B(\ch)\to\cc(\ch)$ is the canonical morphism.  Then for any
$A\in B(\ch)$ we call \emph{localization at infinity} of $A$ its image
$\hat{A}\doteq\pi(A)$ in $\cc(\ch)$; Atkinson's theorem says that the
\emph{essential spectrum of $A\in B(\ch)$ is the spectrum of
  $\hat A$:}
\begin{equation}\label{eq:specess1}
\spe(A)=\spec(\hat A).
\end{equation}
We extend this to observables as follows: if
$A\colon C_0(\R)\to B(\ch)$ is a morphism, hence an observable
affiliated to $B(\ch)$, then $\hat A\doteq\pi\circ A$ is a morphism
$C_0(\R)\to\cc(\ch)$, hence an observable affiliated to $\cc(\ch)$,
that we call \emph{localization at infinity of $A$}, and we define the
\emph{essential spectrum of $A$ as the spectrum of $\hat A$}, so
\eqref{eq:specess1} remains valid.  Thus $\spe(A)$ is the set of real
$\lambda$ such that for any $\theta\in C_0(\R)$ with
$\theta(\lambda)\neq0$ the operator $\theta(A)$ is not compact.

If $\rc\subset B(\ch)$ is a $C^*$-subalgebra then $\rc\cap K(\ch)$ is
an ideal of $\rc$ hence the quotient
$\what\rc\doteq\rc/\big(\rc\cap K(\ch)\big)$ is a $C^*$-subalgebra of
$\cc(\ch)$ called \emph{localization at infinity of $\rc$}. If the
observable $A$ is affiliated to $\rc$ its localization at infinity
$\what{A}$ is affiliated to $\what\rc$. Thus
\begin{equation}\label{eq:easy}
\rc\cap K(\ch)=0 \Longrightarrow
\spe(A)=\spec(A) \quad\forall A\win\rc
\end{equation}
because the restriction to $\rc$ of the canonical morphism
$B(\ch)\to\cc(\ch)$ is injective.

\subsection{Essential spectrum}\label{ss:gradess}

Assume now $\Xi$ finite dimensional and let $\rc$ be a
$\mbg(\Xi)$-graded $C^*$-algebra. Since $\Xi$ is the maximal element
of $\mbg(\Xi)$, if $T\win\rc$ then $\rc\hyphen\spe(T)=\spec(\cp T)$ is
the spectrum of the image of $T$ in the quotient algebra
$\rc/\rc(\Xi)$.

\begin{theorem}\label{th:gahvz}
  If $T\in\rc$ then $\{\cp_\ssh T \mid H\in \mbh(\Xi)\}$ is a compact
  subset of $\rc$ and
  \begin{equation}\label{eq:ghvzspexi}
    \rc\hyphen\spe(T)=\ccup_{H\in \mbh(\Xi)}\spec(\cp_\ssh T).
  \end{equation} 
  This relation is also valid for the observables affiliated to $\rc$.
\end{theorem}

Theorem \ref{th:smax} does not suffices to prove \eqref{eq:ghvzspexi},
the main new point is to show that the union in \eqref{eq:ghvzspexi}
is closed for all the elements of $\rc$. The argument requires some
preparations.

\begin{lemma}\label{lm:gessp}
  Let $\ra$ be a $C^*$-algebra, $I$ a set, and $\ra^{[I]}$ the
  $C^*$-algebra of bounded functions $I\to\ra$. If
  $A=(A_i)_{i\in I}\in \rc^{[I]}$ and $\ca=\{A_i\mid i\in I\}$ is a
  compact subset of $\ra$ then $\spec(A)=\cup_{i\in I}\spec(A_i)$.
\end{lemma}

\begin{proof}
  We mays clearly assume that $\ra$ is unital.  If $\lambda\in\C$ then
  $A-\lambda=(A_i-\lambda)_{i\in I}$ is invertible in $\ra^{[I]}$ if
  and only if each $A_i-\lambda$ is invertible in $\ra$ and
  $\|(A_i-\lambda)^{-1}\|\leq C$ for some number $C$ independent of
  $i$. We have to prove that the last condition is automatically
  satisfied if $\ca$ is compact.  In this case $\ca-\lambda$ is also
  compact so we may assume without loss of generality that
  $\lambda=0$. If $\ri$ is the set of invertible elements of $\ra$,
  then $\ri$ is an open subset of $\ra$ and the map $S\mapsto S^{-1}$
  is continuous on $\ri$. Since $\ca$ is a compact subset of $\ri$, we
  see that $\{A_i^{-1}\mid i\in I\}$ is a compact hence
  $\sup_i\|A_i^{-1}\|<\infty$.
\end{proof}

 \begin{lemma}\label{lm:EH}
   $T\in\rc$ and $E\in\mbg(\Xi)$,
   $E\neq \Xi\Rightarrow\exists H\in\mbh(\Xi)$ such that
   $\cp_\sse T=\cp_\ssh T$.
 \end{lemma}

 \begin{proof}
   We saw in Lemma \ref{lm:subsemi} that there is a countable
   subsemilattice $\cs\subset\mbg(\Xi)$ such that $T\in\rc(\cs)$.  We
   will prove that there is $H\in\mbh(\Xi)$ such that $E\subset H$
   but $F\not\subset H$ if $F\not\subset E$.

   We have $F\not\subset E$ if and only if
   $E^\perp\not\subset F^\perp$ (orthogonals in $\Xi^*$) hence if and
   only if $E^\perp\cap F^\perp$ is a strict subspace of
   $E^\perp$. Thus $\{E^\perp\cap F^\perp\mid F\in\cs_\sse'\}$, \conf
   \eqref{eq:sas}, is a countable set of strict subspaces of $E^\perp$
   hence there is $L\in\mbp(\Xi^*)$ such that $L\subset E^\perp$ and
   $L\not\subset F^\perp$ if $F\in\cs_\sse'$. Then $E\subset L^\perp$
   and $F\not\subset L^\perp$ if $F\not\subset E$. Thus it suffices to
   take $H=L^\perp$.

   Let $S\in\rc(F)$ for some $F\in\cs$. Then $\cp_\sse S=S$ if
   $F\subset E$ and $\cp_\sse S=0$ if $F\not\subset E$. In the first
   case we also have $F\subset H$ hence $\cp_\ssh S=S$ and in the
   second case $F\not\subset H$ hence $\cp_\ssh S=0$. Thus
   $\cp_\sse=\cp_\ssh$ on each $\rc(F)$ with $F\in\cs$. Since these
   two operators are linear and continuous we get $\cp_\sse=\cp_\ssh$ on
   $\rc(\cs)$ and so $\cp_\sse T=\cp_\ssh T$.
 \end{proof}

\begin{proof}[Proof of Theorem \ref{th:gahvz}]
  
  In this proof we abbreviate $\mbh=\mbh(\Xi)$. From Theorem
  \ref{th:smax} it follows that $\rc\hyphen\spe(T)$ is the spectrum of
  the element $(\cp_\sse T)_{E\in\mbh}$ of the $C^*$-algebra
  $\oplus_{E\in\mbh}\rc_\sse$.  Let $\rc^{\ss{[\mbh]}}$ be the
  $C^*$-algebra of bounded functions $\mbh\to\rc$ with the $\sup$
  norm. Then $\oplus_{E\in \mbh}\rc_\sse$ is a $C^*$-subalgebra of
  $\rc^{[\mbh]}$ hence for an element of $\oplus_{E\in\mbh}\rc_\sse$
  its spectrum in $\oplus_{E\in\mbh}\rc_\sse$ coincides with its
  spectrum in $\rc^{[\mbh]}$. Thus the spectrum of
  $(\cp_\sse T)_{E\in\mbh}$ in $\oplus_{E\in \mbh}\rc_\sse$ coincides
  with its spectrum in $\rc^{[\mbh]}$.  Thus by Lemma \ref{lm:gessp}
  it suffices to show that $\cR=\{\cp_\sse T \mid E\in \mbh\}$ is a
  compact set in $\rc$. By the last assertion of Theorem
  \ref{th:gmain} it suffices to show that $\cR$ is norm closed. We
  will use the notation $P_\sse =\cp_{\ss{E^\perp}}$ for
  $E\in\G(\Xi^*)$ hence $\cR=\{P_{\ss{L}}T\mid L\in\mbp\}$ where
  $\mbp\equiv\mbp(\Xi^*)$.

  We make two remarks concerning the $P_\sse $.  If $E,F\in\G(\Xi^*)$
  then by \eqref{pr:ecapf}
  \begin{equation}\label{eq:sums}
    P_\sse P_\ssf=\cp_{\ss{E^\perp}}\cp_{\ss{F^\perp}} =
    \cp_{\ss{E^\perp\cap F^\perp}} =\cp_{\ss{(E+F)^\perp}} =P_{\ss{E+F}}.
  \end{equation}
  Then if $P_\sse ^\prime=1-P_\sse $ we have
  $ P_\sse -P_\ssf= P_\sse P_\ssf^\prime-P_\sse ^\prime P_\ssf $ hence
  $P_\sse (P_\sse -P_\ssf)= P_\sse P_\ssf^\prime$, which gives for all
  $S\in \rc$
\begin{equation}\label{eq:pepf}
  \|P_\sse P_\ssf^\prime S\|\leq \|P_\sse  S-P_\ssf S\|.
\end{equation}
Let $S$ be an accumulation point of $\cR$. Then there is a sequence of
elements $L_n\in\mbp$ with $L_n\neq L_m$ if $n\neq m$ such that
$\|P_{\ss{L_n}}T-S\|\rarrow0$.  The sequence of subspaces
$E_n=\sum_{m\geq n}L_m\neq0$ is decreasing hence there is a subspace
$E$ of dimension $k>0$ such that $E_n=E$ for large $n$.  Then for each
such $n$ one can find integers $n_1<n_2<\dots<n_k$ with $n_1=n$ such
that $E=L_{n_1}+\dots+L_{n_k}$.  If $\varepsilon>0$ then there is $N$
such that $\|P_{\ss{L_n}}T-P_{\ss{L_m}}T\|\leq\varepsilon$ if
$n,m\geq N$.  Choose a sequence $n_1=n<n_2<\dots<n_k$ as above and
denote $F_i=L_{n_1}+\dots + L_{n_{i-1}}$. The relations
\eqref{eq:pepf} and \eqref{eq:sums} imply
\[
\varepsilon\geq\|P_{\ss{L_n}}T-P_{\ss{L_{n_i}}}T\|
\geq\|P_{\ss{L_n}}P_{\ss{L_{n_i}}}^\prime T\| \geq\|P_{\ss{F_i}}
P_{\ss{L_n}}P_{\ss{L_{n_i}}}^\prime T\|=
\|P_{\ss{F_i}}P_{\ss{L_{n_i}}}^\prime T\|. 
\]
On the other hand, by using \eqref{eq:sums} again we can write 
\[
P_{\ss{L_n}}=P_{\ss{L_{n_1}}}P_{\ss{L_{n_2}}}^\prime
+P_{\ss{L_{n_1}}}P_{\ss{L_{n_2}}}= P_{\ss{L_{n_1}}}P_{\ss{L_{n_2}}}^\prime+
P_{\ss{L_{n_1}+L_{n_2}}}.
\]
Repeating this procedure we get
\[
P_{\ss{L_n}}=\tsum{i=2}{k} P_{\ss{F_i}}P_{\ss{L_{n_i}}}^\prime+ P_{\sse}.
\]
Thus we have $\norm{P_{\ss{L_n}}T-P_\sse T}\leq
(k-1)\varepsilon$. This shows that $S=P_\sse T=\cp_{\ss{E^\perp}}T$
with $E^\perp\neq \Xi$ and now it suffices to use Lemma \ref{lm:EH}
with $E$ replaced by $E^\perp$.

The extension to observables is straightforward by using the spectral
mapping theorem. If $z$ is a nonreal number and $R_T(z)$ is the
resolvent of the observable $T$, by taking into account the relation
$\cp_\ssh R_T(z)=R_{\cp_\ssh T}(z)$ due to the fact that $\cp_\ssh$ is
a morphism, it suffices to use \eqref{eq:ghvzspexi} with $T$ replaced
by $R_T(z)$.
\end{proof}

\begin{theorem}\label{th:gahvzcs}
  If $\cs\subset\mbg(\Xi)$ is a subsemilattice with $\Xi\in\cs$ and
  $T\win\rc(\cs)$ then
  \begin{equation}\label{eq:ghvzspexics}
    \rc\hyphen\spe(T)=\ccup_{E\in\cs_{\max}}\spec(\cp_\sse T) .
  \end{equation} 
\end{theorem}

\begin{proof}
  Since $\rc(\cs)/\rc(\Xi)$ is a $C^*$-subalgebra of $\rc/\rc(\Xi)$
  the spectrum of the image of $T$ in the quotient algebra
  $\rc(\cs)/\rc(\Xi)$ is the same as the spectrum of the image of $T$
  in the quotient algebra $\rc/\rc(\Xi)$ hence is given by the formula
  \eqref{eq:ghvzspexi}.  If $E\neq \Xi$ there is $H\in\mbh(\Xi)$ such
  that $E\subset H$ and then $\cp_\sse=\cp_\sse\cp_\ssh$ by
  \eqref{eq:ecapf} hence $\cp_\sse T=\cp_\sse\cp_\ssh T$ and since
  $\cp_\sse$ is a morphism we get
  $\spec(\cp_\sse T)\subset\spec(\cp_\ssh T)$. Thus
  \[
    \rc\hyphen\spe(T)= \ccup_{H\in\mbh(\Xi)}\spec(\cp_\ssh T) \supset
    \ccup_{E\in\cs_{\max}}\spec(\cp_\sse T) .
\]
Observe that the preceding argument gives
$\spec(\cp_\ssf T)\subset\spec(\cp_\sse T)$ if $F\subset E$ hence
\[
  \ccup_{E\neq \Xi}\spec(\cp_\sse T)=
  \ccup_{E\in\cs_{\max}}\spec(\cp_\sse T)
\]
\ie only the largest spaces $E$ are significant in the union.  It
remains to show that for any $T$ and any hyperplane $H$ there is
$E\in\cs$ such that $E\subset H$ and
$\spec(\cp_\sse T)=\spec(\cp_\ssh T)$.  Let $\cs_H$ be the set of
$F\in\cs$ such that $F\subset H$ and choose $E\in\cs_H$ of maximal
dimension. If $F\in\cs_H$ then $E+F\in\cs_H$ and if $F\not\subset E$
then $\dim(E+F)>\dim E$ which is impossible, hence $F\subset E$. Thus
$E$ is the greatest element of the subsemilattice $\cs_H$. Then we
clearly have $\cp_\sse T=\cp_\ssh T$ for any $T\in \mathring\rc(\cs)$
hence by continuity for any $T\in\rc(\cs)$.
\end{proof}

\subsection{Affiliation criteria}\label{ss:perturb}

We first recall the notion of sum in form sense.  Let $H_0$ be a
self-adjoint operator on $\ch$ and $\cg=D(|H_0|^{\frac{1}{2}})$ its
form domain with the graph topology.  If $\cg^*$ is the adjoint of
$\cg$ (space of continuous anti-linear forms) and if we identify
$\ch^*=\ch$ via Riesz Lemma, we get continuous dense embeddings
$\cg\subset\ch\subset\cg^*$ and $H_0$ extends to a continuous map
$\cg\to\cg^*$ for which we keep the notation $H_0$.  A symmetric
operator $V\colon\cg\to\cg^*$ is a \emph{form perturbation of $H_0$}
if there are numbers $\mu,\nu\geq0$ with $\mu<1$ such that either
$\pm V\leq \mu|H_0|+\nu$ or $H_0$ is bounded from below and
$V\geq-\mu H_0-\nu$.  Then the restriction of
$H=H_0+V\colon\cg\to\cg^*$ to $D(H)\doteq\{g\in\cg\mid Hg\in\ch\}$ is
a self-adjoint operator on $\ch$, still denoted $H$, called \emph{form
  sum of $H_0$ and $V$}.  By \cite[Th.\! 2.8, Lm.\!  2.9]{DG}:

\begin{theorem}\label{th:affg}
  Assume that $H_0$ is strictly affiliated to a $C^*$-algebra
  $\rc\subset B(\ch)$. If $V$ is a form perturbation of $H_0$ such
  that $\varphi(H_0)V(|H_0|+1)^{-1/2} \in \rc$ for all
  $\varphi\in C_\rmc(\R)$ then the form sum $H=H_0+V$ is strictly
  affiliated to $\rc$.
\end{theorem}

Let $\Xi$ be an arbitrary real vector space, $\cs\subset\mbg(\Xi)$ a
subsemilattice with a \emph{least element} $X$, and $\rc$ an
$\cs$-graded $C^*$-algebra of operators on a Hilbert space $\ch$.  We
give here methods for constructing self-adjoint operators
affiliated to $\rc$. Note that
\begin{equation*}
  \rc(X)\cdot\rc(E)=\rc(E)\cdot\rc(X)\subset\rc(E)
  \quad\forall\ E\in\cs.
\end{equation*}
Proposition \ref{pr:afgrad} is very easy but interesting because $V$
is not required to be decomposable.

\begin{proposition}\label{pr:afgrad}
  Let $H_\ssx$ be a self-adjoint operator on $\ch$ affiliated to
  $\rc(X)$. If $V\in\rc$ is symmetric then $H\doteq H_\ssx+V$ is
  affiliated to $\rc$ and $\cp_\sse H=H_\ssx+\cp_\sse V$.
\end{proposition}

\begin{proof}
  If $z\in\C$ we set $R_\ssx(z)=(z-H_\ssx)^{-1}$ and $R(z)=(z-H)^{-1}$
  whenever the inverses exist.  If $\Im z$ is large enough then
  $\|VR_\ssx(z)\|<1$ and then, by using the relation
  $z-H=(1-VR_\ssx(z))(z-H_\ssx)$, we see that $z-H$ is invertible and
  \begin{equation}\label{eq:serexp}
    R(z)=R_\ssx(z) (1-VR_\ssx(z))^{-1}=\tsum{n\geq0}{}R_\ssx(z)
    (VR_\ssx(z))^n. 
  \end{equation}
  We have $R_\ssx(z)\in\rc(X)$ hence $VR_\ssx(z)\in\rc$ and since
  $\cp_\sse:\rc\to\rc_\sse$ is a morphism
  \[
    \cp_\sse[VR_\ssx(z)]=\cp_\sse[V]\cp_\sse[R_\ssx(z)]=V_\sse R_\ssx(z)
  \]
  where $V_\sse=\cp_\sse V$. The series in \eqref{eq:serexp} is norm
  convergent hence $R(z)\in\rc$ and
  \begin{align*}
    \cp_\sse[R(z)] &=\tsum{n\geq0}{}
      \cp_\sse[R_\ssx(z)](\cp_\sse[VR_\ssx(z)])^n
    =\tsum{n\geq0}{}R_\ssx(z)(V_\sse R_\ssx(z))^n\\
    &=R_\ssx(z) (1-V_\sse R_\ssx(z))^{-1}=(z-H_\ssx-V_\sse)^{-1}.
  \end{align*}
  Since
  $\|V_\sse R_\ssx(z)\|=\|\cp_\sse[VR_\ssx(z)]\|\leq\|VR_\ssx(z)\|$
  the series is norm convergent. 
\end{proof}

Now we consider unbounded $V$.  Let $H_\ssx$ be a positive
self-adjoint operator on $\ch$ with form domain $\cg$. For each
$E\in\cs$ let $V(E):\cg\to\cg^*$ symmetric such that $V(X)=0$ and \\[1mm]
(1) the family $\{V(E)\}_{E\in\cs}$ is norm summable in
$B(\cg,\cg^*)$;\\[1mm]
(2) $V(E)\geq-\mu_\sse H_\ssx-\nu_\sse$ with $\mu_\sse,\nu_\sse\geq0$,
$\sum_{E\in\cs}\mu_\sse<1$, and $\sum_{E\in\cs}\nu_\sse<\infty$.\\[1mm]
Denote $V=\sum_{E\in\cs}V(E)$ and $V_\sse=\sum_{\ssf\leq\sse}V(F)$.
Then the form sums $H=H_\ssx+V$ and $H_\sse=H_\ssx+V_\sse$ are bounded
from below self-adjoint operators on $\ch$ with form domain equal to
$\cg$.  Then by using Lemma 2.9 and Theorem 3.5 from \cite{DG}, we get

\begin{theorem}\label{th:afgrad}
  Assume $\rc(X)\cdot\rc(E)=\rc(E)\ \forall E\in\cs$. If $H_\ssx$ is
  strictly affiliated to $\rc(X)$ and
\begin{equation}\label{eq:afcond}
  \varphi(H_\ssx)V(E)(H_\ssx+1)^{-1/2} \in \rc(E) \quad\forall
  \varphi\in 
  C_\rmc(\R) \text{ and } \forall E\in\cs  
\end{equation}
then $H$ is strictly affiliated to $\rc$ and $\cp_\sse H=H_\sse$ for
all $E\in\cs$.
\end{theorem}

\end{document}